\newlength{\maxmin}
\newtheorem{theorem}{Theorem}[section]
\newtheorem{theo}[theorem]{Theorem}
\newtheorem{lem}[theorem]{Lemma}
\newtheorem{cor}[theorem]{Corollary}
\newtheorem{prop}[theorem]{Proposition}
\newtheorem{Rem}[theorem]{Remark}
\newtheorem{Ex}[theorem]{Example}
\newcommand{\exit}{{\mbox{\, \vspace{3mm}}} \hfill\mbox{$\square$}}
\numberwithin{equation}{section}
\title{Generalized Phase-Type Distribution and Competing Risks for Markov Mixtures Process}
\author{B.A. Surya\footnote{Email address: budhi.surya@vuw.ac.nz; Postal address: School of Mathematics and Statistics, Victoria University of Wellington, Gate 6 Kelburn PDE, Wellington 6140, New Zealand.}\\ Victoria University of Wellington \\ School of Mathematics and Statistics \\ Wellington, New Zealand }
\date{11 November 2016}
\begin{document}
\maketitle \pagestyle{myheadings} \markboth{B.A. Surya}{Generalized Phase-Type Distribution and Competing Risks}

\begin{abstract}
Phase-type distribution has been an important probabilistic tool in the analysis of complex stochastic system evolution. It was introduced by Neuts \cite{Neuts1975} in 1975. The model describes the lifetime distribution of a finite-state absorbing Markov chains, and has found many applications in wide range of areas. It was brought to survival analysis by Aalen \cite{Aalen1995} in 1995. However, the model has lacks of ability in modeling heterogeneity and inclusion of past information which is due to the Markov property of the underlying process that forms the model. We attempt to generalize the distribution by replacing the underlying by Markov mixtures process. Markov mixtures process was used to model jobs mobility by Blumen \cite{Blumen} et al. in 1955. It was known as the mover-stayer model describing low-productivity workers tendency to move out of their jobs by a Markov chains, while those with high-productivity tend to stay in the job. Frydman \cite{Frydman2005} later extended the model to a mixtures of finite-state Markov chains moving at different speeds on the same state space. In general the mixtures process does not have Markov property. We revisit the mixtures model  \cite{Frydman2005} for mixtures of multi absorbing states Markov chains, and propose generalization of the phase-type distribution under competing risks. The new distribution has two main appealing features: it has the ability to model heterogeneity and to include past information of the underlying process, and it comes in a closed form. Built upon the new distribution, we propose conditional forward intensity which can be used to determine rate of occurrence of future events (caused by certain type) based on available information. Numerical study suggests that the new distribution and its forward intensity offer significant improvements over the existing model.

\medskip

\textit{MSC2010 subject classifications}: 60J20, 60J27, 60J28, 62N99

\textbf{Keywords}: Markov chains; Markov mixtures process; phase-type distribution; forward intensity; competing risk and and survival analysis

\end{abstract}

\section{Introduction}

Phase-type distribution is known to be a dense class of distributions, which can approximate any distribution arbitrarily well. It was introduced in 1975 by Neuts (\cite{Neuts1975}, \cite{Neuts1981}) and has found many applications in wide range of fields ever since.

When jumps distribution of compound Poisson process is modeled by phase-type distributions, it results in a dense class of L\'evy processes (Asmussen \cite{Asmussen2003}). The advantage of working under phase-type distribution is that it allows us to get some analytically tractable results in applications: like e.g. in actuarial science (Rolski et al. \cite{Rolski}, Albrecher and Asmussen \cite{Asmussen2010}, Lin and Liu \cite{Lin2007}, Lee and Lin \cite{Lee}), option pricing (Asmussen et al. \cite{Asmussen2004}, Rolski et. al \cite{Rolski}),  queueing theory (Badila et al. \cite{Badila}, Chakravarthy and Neuts \cite{Chakravarthy}, Buchholz et al. \cite{Buchholz}, Breuer and Baum \cite{Breuer}, Asmussen \cite{Asmussen2003}),  reliability theory (Assaf and Levikson \cite{Assaf1982}, Okamura and Dohi \cite{Okamura}), and in survival analysis (Aalen \cite{Aalen1995}, Aalen and Gjessing \cite{Aalen2001}).

The phase-type distribution $\overline{F}$ is expressed in terms of a Markov jump process $(X_t)_{t\geq 0}$ with finite state space $\mathbb{S}=E\cup \{\Delta\}$, where for some integer $m\geq 1$, $E=\{i: i=1,...,m\}$ is non absorbing states and $\Delta$ is the absorbing state, and an initial distribution $\boldsymbol{\pi}$, such that $\overline{F}$ is the distribution of time until absorption,
\begin{equation}\label{eq:DefTime}
\tau=\inf\{t\geq 0: X_t=\Delta\} \quad \mathrm{and} \quad \overline{F}(t)=\mathbb{P}\{\tau> t\}.
\end{equation}

Unless stated otherwise, we assume for simplicity that the initial probability $\boldsymbol{\pi}$ has zero mass on the absorbing state $\Delta$, so that $\mathbb{P}\{\tau>0\}=1$. We also refer to $\Delta$ as the m+1th element of state space $\mathbb{S}$, i.e., $\Delta=m+1$. Associated with the time propagation of $X$ on the state space $\mathbb{S}$ is the intensity matrix $\mathbf{Q}$. This matrix is partitioned according to the process moving to non-absorbing states $E$ and to single absorbing state $\Delta$, and has the block-partitioned matrix form:
\begin{equation}\label{eq:MatQ}
\mathbf{Q} = \left(\begin{array}{cc}
  \mathbf{T} & \boldsymbol{\delta} \\
  \mathbf{0} & 0 \\
\end{array}\right),
\end{equation}
where $\mathbf{T}$ is $m\times m-$ nonsingular matrix which together with exit vector $\boldsymbol{\delta}$ satisfy
\begin{equation}\label{eq:Tdone}
\mathbf{T}\mathbf{1}+\boldsymbol{\delta}=\mathbf{0},
\end{equation}
with $\mathbf{1}=(1,...,1)^{\top}$, as the rows of the intensity matrix $\mathbf{Q}$ sums to zero. That is to say that the entry $q_{ij}$ of intensity matrix $\mathbf{Q}$ satisfies the following properties:
\begin{equation}\label{eq:matq}
q_{ii}\leq 0, \; \; q_{ij}\geq 0, \; \; \sum_{j\neq i} q_{ij}=-q_{ii}=q_i, \quad (i,j)\in \mathbb{S}.
\end{equation}
As $\boldsymbol{\delta}$ is a non-negative vector, (\ref{eq:Tdone})-(\ref{eq:matq}) implies that $\mathbf{T}$ to be a negative definite matrix, i.e., $\mathbf{1}^{\top}\mathbf{T}\mathbf{1}<0$. The matrix $\mathbf{T}$ is known as the phase generator matrix of $X$. The absorption is certain if and only if $\mathbf{T}$ is nonsingular, see \cite{Neuts1981}, \cite{Asmussen2003}, \cite{Breuer}.

Following Theorem 3.4 and Corollary 3.5 in \cite{Asmussen2003} and by the homogeneity of $X$, the transition probability matrix $\mathbf{P}(t)$ of $X$ over the period of time $(0,t)$ is
\begin{equation}\label{eq:transprob}
\mathbf{P}(t)= \exp(\mathbf{Q} t), \quad t\geq 0,
\end{equation}
where $\exp(\mathbf{Q} t)$ is the $(m+1)\times (m+1)$ matrix exponential defined by the series
\begin{equation}\label{eq:expm}
\exp(\mathbf{Q} t) =\sum_{n=0}^{\infty} \frac{1}{n!} (\mathbf{Q}t)^n.
\end{equation}

The entry $q_{ij}$ has probabilistic interpretation: $1/(-q_{ii})$ is the expected length of time that $X$ remains in state $i\in E$, and $q_{ij}/q_{i}$ is the probability that when a transition out of state $i$ occurs, it is to state $j\in\mathbb{S}$, $i\neq j$. The representation of the distribution $\overline{F}$ is uniquely specified by $(\boldsymbol{\pi},\mathbf{T})$. We refer among others to Neuts \cite{Neuts1981}, Asmussen \cite{Asmussen2003} and Rolski et al. \cite{Rolski} for more details. Following Theorem 8.2.3 and Theorem 8.2.5 in \cite{Rolski} (or Proposition 4.1 in \cite{Asmussen2003}), we have
\begin{equation}\label{eq:DistrDefTime}
 \overline{F}(t)=\boldsymbol{\pi}^{\top} e^{\mathbf{T} t} \mathbf{1} \quad \mathrm{and} \quad f(t)=\boldsymbol{\pi}^{\top} e^{\mathbf{T} t} \boldsymbol{\delta}.
 \end{equation}

In survival and event history analysis, e.g., Aalen \cite{Aalen1995}, Aalen and Gjessing \cite{Aalen2001} and Aalen et al. \cite{Aalen2008}, intensity function $\alpha(t)$, for $t>0$, has been used to determine rate of occurrence of an event in time interval $(t,t+dt]$. It is defined by
\begin{equation}\label{eq:alpha}
\alpha(t)dt=\mathbb{P}\{\tau\leq t+dt \vert \tau>t\}.
\end{equation}
Solution to this equation is given in terms of the distributions (\ref{eq:DistrDefTime}) by
\begin{equation}\label{eq:alphasol}
\alpha(t)=\frac{f(t)}{\overline{F}(t)}=\frac{ \boldsymbol{\pi}^{\top} e^{\mathbf{T} t} \boldsymbol{\delta}}{\boldsymbol{\pi}^{\top} e^{\mathbf{T} t} \mathbf{1}}.
\end{equation}
This intensity has been used in greater applications in biostatistics (\cite{Aalen1995}, \cite{Aalen2008}).

Recall that the underlying process that forms the phase-type distribution (\ref{eq:DistrDefTime}) is based on continuous-time finite-state absorbing Markov chains. We attempt to generalize the distribution for non Markovian stochastic process.

Quite recently, Frydman \cite{Frydman2005} discussed mixture model of Markov chains moving at different speed as a generalization of the mover-stayer model introduced by Blumen et al. \cite{Blumen}. The result was then applied to study the dynamics of credit ratings, see Frydman and Schuermann \cite{Frydman2008}. As shown in \cite{Frydman2008}, the mixture process does not have Markov property. We revisit the model by considering a mixture of two absorbing Markov chains moving with different speed (intensity matrix). We derive conditional distribution of the time until absorption of the mixture process and propose generalizations of (\ref{eq:DistrDefTime}). Due to non Markovian nature of the mixture process, the generalized distribution allows for the inclusion of available past information of the mixture process and modeling heterogeneity. Built upon the new distribution, we propose conditional forward intensity which extends further (\ref{eq:alpha}) to determine rate of occurrence of future events based on available information about the past of the underlying process. The new distribution and its intensity have the ability to capture path dependence and heterogeneity.

The organization of this paper is as follows. We discuss in Section 2 the mixture process for further details. Section 3 presents some preliminary results, which are derived based on extending the results in \cite{Frydman2008}, in particular to continuous time setting. The main results concerning the new phase-type distributions and the forward intensity are given in Section 4. We also discuss in Section 4 residual lifetime and occupation time for the mixture process. We further extend the model in Section 5 for multi absorbing states (competing risks) and discuss conditional joint distribution of the lifetime of the mixtures process and the type-specific events (risk). Subsequently, we derive the corresponding type-specific conditional forward intensity for the competing risks.  Section 6 discusses numerical examples on married-and-divorced problem to compare the performance of existing phase-type distributions and intensity with the new ones proposed. Numerical study suggests that the new distributions and forward intensity offer significant improvements and reveal resemblances of data found e.g. in Aalen et al. \cite{Aalen2008} (p.222) and the common U-bend shape of lifetime found in the 2010 \textit{The Economist} article. Section 7 summarizes this paper with some conclusions.

\section{Markov mixtures processes}
In this paper, we consider $(X_t)_{t\geq0}$ as a mixture of two Markov jump processes $(X_t^Q)_{t\geq 0}$ and $(X_t^G)_{t\geq 0}$ both of which propagate on the same state space $\mathbb{S}=E\cup\{\Delta\}$. The corresponding intensity matrix $\mathbf{G}$ for the process $X^G$ is given by
\begin{equation}\label{eq:GH}
\mathbf{G} = \left(\begin{array}{cc}
  \boldsymbol{\Psi}\mathbf{T} & \boldsymbol{\Psi}\boldsymbol{\delta} \\
  \mathbf{0} & 0 \\
\end{array}\right),
\end{equation}
where $\boldsymbol{\Psi}$ is $m\times m$ nonsingular matrix, whilst the $m\times m$ matrix $\mathbf{T}$ and $m\times 1$ vector $\boldsymbol{\delta}$ are the elements of the intensity matrix $\mathbf{Q}$ (\ref{eq:MatQ}) of $X^Q$ satisfying the constraint (\ref{eq:Tdone}). As before, we refer $i=m+1$ to the absorbing state $\Delta$.

Frydman and Schuermann \cite{Frydman2008} proposed the use of mixture models for the estimation in discrete time of credit rating dynamics. In their work, intensity matrix $\mathbf{G}$ was taken in a simpler form where $\boldsymbol{\Psi}$ was given by diagonal matrix
\begin{equation}\label{eq:Lambda}
\boldsymbol{\Psi}=\textrm{diag}(\psi_1,\psi_2,...,\psi_{m}).
\end{equation}
Furthermore, they considered the absorbing state $\Delta$ corresponding to the censored events such as bond issuing firms being default, the bond are withdrawn from being rated, expiration of the bond, or calling back of the bond, etc.

Markov mixture process is a generalization of mover-stayer model, a mixture of two discrete-time Markov chains which was introduced by Blumen et al \cite{Blumen} in 1955 to model population heterogeneity in jobs mobility. In the mover-stayer model \cite{Blumen}, the population of workers consists of stayers (workers who always stay in the same job category ($\psi_i=0$)) and movers (workers who move according to a stationary Markov chain with intensity matrix $\mathbf{Q}$). The estimation of mover-stayer model was given by Frydman \cite{Frydman1984}. The extension to a mixture of two continuous-time Markov chains moving with different speeds ($\psi_i\neq 0$) was proposed by Frydman \cite{Frydman2005}. Frydman and Schuermann \cite{Frydman2008} later on used the mixture process to model the dynamics of firms' credit ratings. As empirically shown in \cite{Frydman2008}, there is strong evidence that firms of the same credit rating may move at different speed to other credit ratings, a feature that lacks in the Markov model.

The mixture is defined conditionally on the initial state. We denote by $\phi=\mathbf{1}_{\{X=X^G\}}$ random variable having value $1$ when $X=X^G$ and $0$ otherwise, i.e.,
\begin{equation}\label{eq:mixture}
\begin{split}
&X=
\begin{cases}
X^G, &\phi =1\\
X^Q, &\phi =0.
\end{cases}
\end{split}
\end{equation}
Notice that when the random variable $\phi$ is independent of $X$ having Bernoulli distribution with probability of success $s_{i_0}$ (\ref{eq:portion}), (\ref{eq:mixture}) reduced to regime switching model for Markov jump process with changing intensity matrix $\mathbf{Q}$ and $\mathbf{G}$.

For initial state $X_0=i_0\in E$, there is a separate mixing distribution:
\begin{equation}\label{eq:portion}
s_{i_0}=\mathbb{P}\{\phi=1 \vert X_0=i_0\} \quad \mathrm{and} \quad 1-s_{i_0}=\mathbb{P}\{\phi=0 \vert X_0=i_0\},
\end{equation}
with $0\leq s_{i_0} \leq 1$. The quantity $s_{i_0}$ has the interpretation as being the proportion of population with initial state $i_0$ moving according to $X^G$, whilst $1-s_{i_0}$ is the proportion that propagates according to $X^Q$. It follows from (\ref{eq:GH}) that in general $X^Q$ and $X^G$ are different in rates at which the process leaves the states, i.e., $q_{i}\neq g_{i}$, but under (\ref{eq:Lambda}) both of them may have the same transition probability of leaving from state $i\in E$ to state $j\in\mathbb{S}$, $i\neq j$, i.e., $q_{ij}/q_{i}=g_{ij}/g_{i}$. Note that we have used $q_i$ and $g_i$ to denote diagonal element of $\mathbf{Q}$ and $\mathbf{G}$, respectively. Thus, depending on the value of $\psi_i$ (\ref{eq:Lambda}), $X^G$ never moves out of state $i$ when $\psi_i=0$ (reduced to the mover-stayer model), moves out of state $i$ at lower rate when $0<\psi_i<1$ and moves at higher rate when $\psi_i>1$ (or the same for $\psi_i=1$) than $X^Q$. If $\psi_i=1$, for all $i=1,...,m$, it simplifies to simple Markov process.

The main feature of the Markov mixture process $X$ (\ref{eq:mixture}) is that unlike its component $X^Q$ and $X^G$, $X$ does not necessarily inherit the Markovian property of  $X^Q$ neither $X^G$; the conditional distribution of future state of $X$ depends on its past information. Empirical evidence of this fact was given by Frydmand and Schuermann \cite{Frydman2008}. It was shown in \cite{Frydman2008} that the incorporation of the past information help improve the out-of-sample prediction of the Nelson-Aalen estimate of the cumulative intensity rate for risky bonds. This path dependence property certainly gives favorable feature in applications. Motivated by these findings, we derive the phase-type distribution (of time until absorption $\tau$ (\ref{eq:DefTime})) of $X$ and introduced forward intensity associated with the distribution. This intensity determines the rate of occurrences of future events given available information. Also, we derive the residual lifetime of $X$ and its occupation time in any state $i\in\mathbb{S}$. Below we present some preliminaries needed to establish the results.

\section{Preliminaries}

Recall that the mixture model (\ref{eq:mixture}) contains two regimes (processes), one at slow speed, the other fast. The regime is however not directly observable, but can be identified based on past information (\cite{Frydman2005}, \cite{Frydman2008}). To incorporate past information, we denote by $\mathcal{I}_{t-}=\{X(s), 0\leq s \leq t-\}$ the available realization of $X$ prior to time $t$, and by $\mathcal{I}_{i,t}=\mathcal{I}_{t-}\cup\{X_t=i\}$ all previous and current information of $X$. The set $\mathcal{I}_{t-}$ may contain complete, partial or maybe no information about $X$.

The likelihoods of observing the past realization $\mathcal{I}_{i,t}$ of $X$ under $X^Q$ and $X^G$ conditional on the initial state $X_0=i_0$ are given respectively, following \cite{Frydman2008}, by
\begin{equation}\label{eq:likelihood}
\begin{split}
L^Q:=&\mathbb{P}\{\mathcal{I}_{i,t} \vert \phi=0, X_0=i_0\}= \prod_{k\in E} e^{-q_{k} T_k} \prod_{j\neq k, j\in \mathbb{S}} (q_{kj})^{N_{kj}},\\
L^G:=&\mathbb{P}\{\mathcal{I}_{i,t} \vert \phi=1, X_0=i_0\}= \prod_{k\in E} e^{-g_{k} T_k} \prod_{j\neq k, j\in\mathbb{S}} (g_{kj})^{N_{kj}},
\end{split}
\end{equation}
where in the both expressions we have denoted subsequently by $T_k$ and $N_{kj}$ the total time the process spent in state $k$ for $\mathcal{I}_{i,t}$, $k\in \mathbb{S}$, and the number of transitions from state $k$ to state $j$, with $j\neq k$, observed in $\mathcal{I}_{i,t}$; whereas $q_{ij}$ and $g_{ij}$ represent the $(i,j)-$entry of intensity matrix $\mathbf{Q}$ and $\mathbf{G}$, respectively.

The likelihood is used to classify which regime the process $X$ belongs to. Frydman \cite{Frydman2005} proposed EM algorithm for the maximum likelihood estimation ($\widehat{s}_i,\widehat{\psi}_i, \widehat{q}_i, 1\leq i \leq m$) of the mixture model parameters. The results are given by
\begin{equation}\label{eq:estimationQ}
\widehat{q}_{ij}=\frac{N_{ij}}{\sum_{j\neq i} N_{ij}} \widehat{q}_i \quad \mathrm{and} \quad
\widehat{g}_{ij}= \widehat{\psi}_i \widehat{q}_{ij}, \quad j\neq i \in \mathbb{S}.
\end{equation}

Throughout the remaining of this paper, we define quantity $s_i(t)$ given by
\begin{equation}\label{eq:gammait}
s_i(t)=\mathbb{P}\{\phi=1\vert \mathcal{I}_{i,t}\}, \quad i\in\mathbb{S},
\end{equation}
Note that in the case where $t=0$, for which $\mathcal{I}_{i,t}=\{X_0=i_0\}$, $s_{i}(0)=s_{i_0}$ (\ref{eq:portion}).
This quantity says given that the process ends in state $i$ at time $t$ and all its prior realizations $I_{t-}$, the proportion of those moving according to $X^G$ is by $s_{i}(t)$.


Next, we denote by $\mathbf{S}_n(t)$ diagonal matrix, with $n\in\{m,m+1\}$, defined by
\begin{equation}\label{eq:Snt}
\mathbf{S}_n(t) = \mathrm{diag}(s_1(t), s_2(t),...,s_n(t)), \quad \textrm{with} \quad \mathbf{S}_n:= \mathbf{S}_n(0).
\end{equation}
Note that for the absorbing state $i=m+1$, $s_{m+1}(t)$ denotes the portion of those arriving in the absorbing state according to $X^G$. In special case when $t=0$, for which $\mathcal{I}_{i,t}=\{X_0=i\}$, $s_i$ is equal to the one given in (\ref{eq:portion}). The transition probability matrices $\mathbf{P}^Q(t)$ and $\mathbf{P}^G(t)$ of the underlying Markov chain $X^Q$ and $X^G$ are given following (\ref{eq:transprob}) by $\mathbf{P}^Q(t)=e^{\mathbf{Q}t}$ and $\mathbf{P}^G(t)=e^{\mathbf{G}t}$. The result below gives the transition matrix $\mathbf{P}$ of the mixture process in terms of $\mathbf{S}_{m+1}$, $\mathbf{P}^Q(t)$ and $\mathbf{P}^G(t)$. The result was stated in \cite{Frydman2005}. Below we give the proof of the result.

\begin{prop}\label{prop:PXt}
For a given $t\geq 0$, the transition probability matrix $\mathbf{P}(t)$ over period of time $(0,t)$ of the Markov mixture process $X$ (\ref{eq:mixture}) is given by
\begin{equation}\label{eq:PXt}
\mathbf{P}(t)=\mathbf{S}_{m+1}e^{\mathbf{G}t}+ \big(\mathbf{I}_{m+1}-\mathbf{S}_{m+1}\big)e^{\mathbf{Q}t},
\end{equation}
where we have defined by $\mathbf{I}_{n}$, $n\times n$ identity matrix with $n=m+1$.
\end{prop}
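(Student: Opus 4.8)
The plan is to compute the transition matrix entrywise via the law of total probability, conditioning on the (unobserved) regime indicator $\phi$. Fix an initial state $X_0=i\in\mathbb{S}$ and a target state $j\in\mathbb{S}$, and write the $(i,j)$ entry of $\mathbf{P}(t)$ as $P_{ij}(t)=\mathbb{P}\{X_t=j\mid X_0=i\}$. Since $\{\phi=0\}$ and $\{\phi=1\}$ partition the sample space, conditioning on $\phi$ gives
\begin{equation*}
P_{ij}(t)=\mathbb{P}\{X_t=j\mid X_0=i,\phi=1\}\,\mathbb{P}\{\phi=1\mid X_0=i\}+\mathbb{P}\{X_t=j\mid X_0=i,\phi=0\}\,\mathbb{P}\{\phi=0\mid X_0=i\}.
\end{equation*}

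Next I would invoke the definition of the mixture (\ref{eq:mixture}): on the event $\{\phi=1\}$ the process $X$ coincides with $X^G$, and on $\{\phi=0\}$ it coincides with $X^Q$. Hence the two conditional transition probabilities above are exactly the transition probabilities of the component chains, namely $\mathbb{P}\{X^G_t=j\mid X^G_0=i\}=(e^{\mathbf{G}t})_{ij}$ and $\mathbb{P}\{X^Q_t=j\mid X^Q_0=i\}=(e^{\mathbf{Q}t})_{ij}$, using $\mathbf{P}^G(t)=e^{\mathbf{G}t}$ and $\mathbf{P}^Q(t)=e^{\mathbf{Q}t}$. The mixing weights are $\mathbb{P}\{\phi=1\mid X_0=i\}=s_i$ and $\mathbb{P}\{\phi=0\mid X_0=i\}=1-s_i$ from (\ref{eq:portion}), so that
\begin{equation*}
P_{ij}(t)=s_i\,(e^{\mathbf{G}t})_{ij}+(1-s_i)\,(e^{\mathbf{Q}t})_{ij}.
\end{equation*}

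Finally I would reassemble these entries into matrix form. The crucial observation is that the weight $s_i$ depends only on the row index $i$ (the initial state) and not on the target state $j$; multiplying $e^{\mathbf{G}t}$ on the left by the diagonal matrix $\mathbf{S}_{m+1}=\mathrm{diag}(s_1,\dots,s_{m+1})$ scales row $i$ by exactly $s_i$, and likewise $\big(\mathbf{I}_{m+1}-\mathbf{S}_{m+1}\big)$ scales row $i$ of $e^{\mathbf{Q}t}$ by $1-s_i$. Summing the two contributions entry by entry then recovers (\ref{eq:PXt}).

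The step I expect to require the most care is the second one: justifying that conditioning on $\{\phi=1,X_0=i\}$ makes $X$ evolve precisely as the component chain $X^G$ started from $i$ (and symmetrically for $\phi=0$), so that its two-point transition probabilities are genuinely the matrix exponentials. This is where the mixture construction is actually used. It is also worth flagging that $P_{ij}(t)$ is a well-defined two-point marginal even though $X$ is \emph{not} Markov --- the proposition computes only these probabilities measured from the origin, not the full finite-dimensional law of $X$, which is exactly why such a transparent closed form survives.
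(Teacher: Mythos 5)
Your proposal is correct and follows essentially the same route as the paper's proof: both decompose $\mathbb{P}\{X_t=j\mid X_0=i\}$ by conditioning on the regime variable $\phi$ via the law of total probability and Bayes' theorem, identify the conditional laws with the transition probabilities $p^G_{ij}(t)$ and $p^Q_{ij}(t)$ of the component Markov chains, and weight them by $s_i$ and $1-s_i$ from (\ref{eq:portion}) before reassembling in matrix form. Your additional remarks --- that the diagonal structure of $\mathbf{S}_{m+1}$ effects exactly the row-wise scaling needed, and that only two-point marginals (not the full non-Markovian law of $X$) are being computed --- are sound elaborations of what the paper leaves implicit.
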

\begin{proof}
The proof follows from applying the law of total probability along with the use of Bayes' theorem for conditional probability and (\ref{eq:portion}). For $i,j\in\mathbb{S}$,
\begin{equation*}
\begin{split}
p_{ij}(t)=&\mathbb{P}\{X_t=j \vert X_0=i\}\\
=& \mathbb{P}\{X_t=j, \phi=1 \vert X_0=i\} + \mathbb{P}\{X_t=j, \phi=0 \vert X_0=i\}\\
=&  \mathbb{P}\{X_t=j \vert \phi=1, X_0=i\} \mathbb{P}\{\phi=1 \vert X_0=i\}  \\
 & +  \mathbb{P}\{X_t=j \vert \phi=0, X_0=i\} \mathbb{P}\{\phi=0 \vert X_0=i\} \\
=&  p_{ij}^G(t) s_i + p_{ij}^Q(t) (1-s_i),
\end{split}
\end{equation*}
where the last equality is due to the Markov property of the Markov chains $X^G$ and $X^Q.$
This is the $(i,j)$ entry of the transition probability matrix $\mathbf{P}$ (\ref{eq:PXt}). \exit\\
\end{proof}

Depending on the availability of the past information $\mathcal{I}_{i,t}$ of $X$, the elements $s_i(t)$, of the information matrix $\mathbf{S}_m(t)$ (\ref{eq:Snt}) are given following (\ref{eq:gammait}) below.
\begin{lem}\label{lem:lem1}
For a given $t\geq 0$ and any state $i, i_0\in \mathbb{S}$, we have
\begin{itemize}
\item[(i)] in case of full information, i.e., $\mathcal{I}_{i,t}=\{X(s), 0\leq s \leq t\}$,
\begin{equation}\label{eq:likelihood1}
s_i(t)=\frac{s_{i_0} L^G}{ s_{i_0} L^G + (1-s_{i_0}) L^Q},
\end{equation}
where $s_{i_0}$, $L^G$ and $L^Q$ are defined respectively in (\ref{eq:portion}) and (\ref{eq:likelihood}).

\item[(ii)] in case of limited information with $\mathcal{I}_{i,t}=\{X_0=i_0\}\cup\{X(t)=i\}$,
\begin{equation}\label{eq:likelihood3}
s_i(t)=\frac{ \mathbf{e}_{i_0}^{\top}\mathbf{S}_{m+1} \mathbf{P}^G(t)\mathbf{e}_i}{\mathbf{e}_{i_0}^{\top} \mathbf{P}(t) \mathbf{e}_i}.
\end{equation}

\item[(iii)] in case of limited information with $\mathcal{I}_{i,t}=\{X(t)=i\}$,
\begin{equation}\label{eq:likelihood2}
s_i(t)=\frac{\boldsymbol{\pi}^{\top}\mathbf{S}_{m+1} \mathbf{P}^G(t)\mathbf{e}_i}{\boldsymbol{\pi}^{\top} \mathbf{P}(t)\mathbf{e}_i},
\end{equation}
%
where $\mathbf{e}_i=(0,..,1,...0)^{\top}$ is $(m+1)\times1-$vector with $1$ at $i$th element.
\end{itemize}
\end{lem}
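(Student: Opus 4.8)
The plan is to treat all three cases as applications of Bayes' theorem, differing only in the information on which we condition; the common engine is that $\phi$ is a binary latent label, so $s_i(t)$ is always the posterior probability of $\{\phi=1\}$ given the observed information, obtained by weighting the two regime likelihoods by the prior mixing weights $s_{i_0}$ and $1-s_{i_0}$ from (\ref{eq:portion}). I would note at the outset that full information subsumes the initial state $X_0=i_0$, which is why (i) and (ii) condition on $i_0$ while (iii) does not.

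For case (i), where the entire trajectory $\{X(s):0\le s\le t\}$ (hence $X_0=i_0$) is observed, I would write
\begin{equation*}
s_i(t)=\mathbb{P}\{\phi=1\vert \mathcal{I}_{i,t}\}=\frac{\mathbb{P}\{\mathcal{I}_{i,t}\vert\phi=1,X_0=i_0\}\,\mathbb{P}\{\phi=1\vert X_0=i_0\}}{\mathbb{P}\{\mathcal{I}_{i,t}\vert X_0=i_0\}},
\end{equation*}
then identify the numerator factors as $L^G$ and $s_{i_0}$ from (\ref{eq:likelihood}) and (\ref{eq:portion}), and expand the denominator by the law of total probability over $\{\phi=0\}$ and $\{\phi=1\}$ into $s_{i_0}L^G+(1-s_{i_0})L^Q$. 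This yields (\ref{eq:likelihood1}) directly.

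For cases (ii) and (iii) the path is not observed, so the regime likelihoods are replaced by the transition probabilities $p^G_{i_0i}(t)=\mathbf{e}_{i_0}^{\top}\mathbf{P}^G(t)\mathbf{e}_i$. In case (ii), conditioning on $\{X_0=i_0\}$ and applying Bayes gives $s_i(t)=s_{i_0}\,p^G_{i_0i}(t)/p_{i_0i}(t)$; the key manipulation is that, because $\mathbf{S}_{m+1}$ is diagonal with $i_0$th entry $s_{i_0}$, one has $\mathbf{e}_{i_0}^{\top}\mathbf{S}_{m+1}=s_{i_0}\mathbf{e}_{i_0}^{\top}$, so the numerator is exactly $\mathbf{e}_{i_0}^{\top}\mathbf{S}_{m+1}\mathbf{P}^G(t)\mathbf{e}_i$, while the denominator $p_{i_0i}(t)=\mathbf{e}_{i_0}^{\top}\mathbf{P}(t)\mathbf{e}_i$ is supplied by Proposition \ref{prop:PXt}. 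In case (iii), where $X_0$ is unobserved, I would instead marginalize over the initial state against the initial law $\boldsymbol{\pi}$: the joint probability $\mathbb{P}\{\phi=1,X_t=i\}=\sum_{i_0}\pi_{i_0}\,s_{i_0}\,p^G_{i_0i}(t)=\boldsymbol{\pi}^{\top}\mathbf{S}_{m+1}\mathbf{P}^G(t)\mathbf{e}_i$ and $\mathbb{P}\{X_t=i\}=\boldsymbol{\pi}^{\top}\mathbf{P}(t)\mathbf{e}_i$, whence (\ref{eq:likelihood2}).

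None of the steps is genuinely hard; the work is bookkeeping rather than insight. The point requiring the most care is the passage from the scalar Bayes formula to the matrix expressions in (ii) and (iii): one must check that the diagonal structure of $\mathbf{S}_{m+1}$ correctly selects the $i_0$th mixing weight in (ii), and that in (iii) the $\boldsymbol{\pi}$-weighted sum over initial states is distributed consistently between the joint probability of $\{\phi=1,X_t=i\}$ in the numerator and the unconditional law of $X_t$ in the denominator — in particular that the single prior $\boldsymbol{\pi}$ appears in both, so it does not cancel but instead encodes the marginalization over the unseen starting point.
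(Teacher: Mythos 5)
Your proposal is correct, and the Bayes'-theorem/total-probability derivation you give (with the regime likelihoods $L^G,L^Q$ in the full-information case and the transition probabilities $p^G_{i_0i}(t)$ in the limited-information cases) is precisely the argument the paper outsources to Frydman and Schuermann (2008), p.~1074, rather than reproducing; it is also the same style of argument the paper uses for Proposition~\ref{prop:PXt} and Theorem~\ref{theo:theo1}. In effect you have filled in the details the paper delegates to the citation, including the correct observation that $\mathbf{e}_{i_0}^{\top}\mathbf{S}_{m+1}=s_{i_0}\mathbf{e}_{i_0}^{\top}$ turns the scalar Bayes formulas into the stated matrix expressions.
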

\begin{proof}
The proof of (\ref{eq:likelihood1}) is given on p.1074 in \cite{Frydman2008}, whilst (\ref{eq:likelihood2}) and (\ref{eq:likelihood3}) are the matrix notation of the corresponding quantity $s_i(t)$ given on p.1074 of \cite{Frydman2008}. \exit\\
\end{proof}

Notice that the past information $\mathcal{I}_{i,t}$ of the mixture process $X$ (\ref{eq:mixture}) is stored in matrix $\mathbf{S}_{m+1}(t)$ (\ref{eq:Snt}). Below we give the limiting value of $s_i(t)$ (\ref{eq:likelihood2}) as $t\rightarrow \infty$.
\begin{prop}\label{prop:limit}
Let $\mathbf{T}$ ($\boldsymbol{\Psi}\mathbf{T}$) have distinct eigenvalues $\varphi_j^{(1)}$ ($\varphi_j^{(2)}$), $j=1,...,m,$ with $\varphi_{p_k}^{(k)}=\max\{\varphi_j^{(k)}, j=1,...,m\}$ and $p_k=\textrm{argmax}_j\{\varphi_j^{(k)}\}$, $k=1,2$. Then,
\begin{equation}\label{eq:limit}
  \lim_{t\rightarrow \infty} s_i(t)=
  \begin{cases}
    \begin{cases}
     1, & \text{if  $\varphi_{p_2}^{(2)}>\varphi_{p_1}^{(1)}$} \\
     0, & \text{if  $\varphi_{p_2}^{(2)}<\varphi_{p_1}^{(1)}$} \\
     \frac{\boldsymbol{\pi}^{\top}\mathbf{S}_m L_{p_2}(\boldsymbol{\Psi}\mathbf{T}) \mathbf{e}_i}
     {\boldsymbol{\pi}^{\top} \big( \mathbf{S}_m L_{p_2}(\boldsymbol{\Psi}\mathbf{T})+ (\mathbf{I}_m-\mathbf{S}_m) L_{p_1}(\mathbf{T}) \big)\mathbf{e}_i},
        & \text{if  $\varphi_{p_2}^{(2)}=\varphi_{p_1}^{(1)}$}
    \end{cases}
    &\text{for $i\in E$}\\
    \begin{cases}
      \frac{\boldsymbol{\pi}^{\top}\mathbf{S}_m \mathbf{T}^{-1}\boldsymbol{\delta}}
      {\boldsymbol{\pi}^{\top}\mathbf{T}^{-1}\boldsymbol{\delta}},
      & \quad \hspace{3cm} \text{for $i=m+1$} .
    \end{cases}
  \end{cases}
\end{equation}
%
where $L_{p_1}(\mathbf{T})$ is the Lagrange interpolation coefficients defined by
\begin{equation}\label{eq:apostol2}
L_{p_1}(\mathbf{T})=\prod_{j=1,j\neq p_1}^{m} \Big(\frac{\mathbf{T}-\varphi_j^{(1)}\mathbf{I}_{m}}{\varphi_{p_1}^{(1)}-\varphi_j^{(1)}}\Big).
\end{equation}
\end{prop}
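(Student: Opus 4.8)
The plan is to collapse everything onto the non-absorbing block and then read off the asymptotics from the spectral representation of the two matrix exponentials appearing in (\ref{eq:likelihood2}). First I would exploit the block structure of $\mathbf{Q}$ (\ref{eq:MatQ}) and $\mathbf{G}$ (\ref{eq:GH}): since the bottom row of each generator vanishes and the rows of a transition matrix sum to one, the exponentials inherit the block-triangular form
\begin{equation*}
e^{\mathbf{Q}t}=\begin{pmatrix} e^{\mathbf{T}t} & \mathbf{1}-e^{\mathbf{T}t}\mathbf{1} \\ \mathbf{0} & 1 \end{pmatrix}, \qquad e^{\mathbf{G}t}=\begin{pmatrix} e^{\boldsymbol{\Psi}\mathbf{T}t} & \mathbf{1}-e^{\boldsymbol{\Psi}\mathbf{T}t}\mathbf{1} \\ \mathbf{0} & 1 \end{pmatrix},
\end{equation*}
the upper-right columns following from $\boldsymbol{\delta}=-\mathbf{T}\mathbf{1}$ in (\ref{eq:Tdone}). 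Because $\boldsymbol{\pi}$ carries no mass on $\Delta$, the products $\boldsymbol{\pi}^{\top}\mathbf{S}_{m+1}(\cdot)$ reduce to $m$-dimensional quantities driven by $\mathbf{S}_m$. Substituting into (\ref{eq:likelihood2}) and invoking Proposition \ref{prop:PXt}, for $i\in E$ I obtain
\begin{equation*}
s_i(t)=\frac{\boldsymbol{\pi}^{\top}\mathbf{S}_m\, e^{\boldsymbol{\Psi}\mathbf{T}t}\mathbf{e}_i}{\boldsymbol{\pi}^{\top}\mathbf{S}_m\, e^{\boldsymbol{\Psi}\mathbf{T}t}\mathbf{e}_i+\boldsymbol{\pi}^{\top}(\mathbf{I}_m-\mathbf{S}_m)\, e^{\mathbf{T}t}\mathbf{e}_i},
\end{equation*}
where now $\mathbf{e}_i$ is the $i$-th unit vector in $\mathbb{R}^m$; this is a ratio involving only the two subgenerators.

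Next, since $\mathbf{T}$ and $\boldsymbol{\Psi}\mathbf{T}$ have distinct eigenvalues, I would apply the Lagrange--Sylvester interpolation formula to write $e^{\mathbf{T}t}=\sum_{j=1}^m e^{\varphi_j^{(1)}t}L_j(\mathbf{T})$ and $e^{\boldsymbol{\Psi}\mathbf{T}t}=\sum_{j=1}^m e^{\varphi_j^{(2)}t}L_j(\boldsymbol{\Psi}\mathbf{T})$, with $L_j$ as in (\ref{eq:apostol2}). As $t\to\infty$ each sum is governed by its largest eigenvalue, so $e^{\mathbf{T}t}\sim e^{\varphi_{p_1}^{(1)}t}L_{p_1}(\mathbf{T})$ and $e^{\boldsymbol{\Psi}\mathbf{T}t}\sim e^{\varphi_{p_2}^{(2)}t}L_{p_2}(\boldsymbol{\Psi}\mathbf{T})$. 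Inserting the leading terms into the ratio and dividing numerator and denominator by the larger exponential factor produces the three regimes: if $\varphi_{p_2}^{(2)}>\varphi_{p_1}^{(1)}$ the $\mathbf{T}$-contribution in the denominator is exponentially negligible and $s_i(t)\to1$; if $\varphi_{p_2}^{(2)}<\varphi_{p_1}^{(1)}$ the numerator is negligible and $s_i(t)\to0$; and if $\varphi_{p_2}^{(2)}=\varphi_{p_1}^{(1)}$ the common factor cancels, leaving exactly $\boldsymbol{\pi}^{\top}\mathbf{S}_m L_{p_2}(\boldsymbol{\Psi}\mathbf{T})\mathbf{e}_i$ over $\boldsymbol{\pi}^{\top}\big(\mathbf{S}_m L_{p_2}(\boldsymbol{\Psi}\mathbf{T})+(\mathbf{I}_m-\mathbf{S}_m)L_{p_1}(\mathbf{T})\big)\mathbf{e}_i$, which is the first line of (\ref{eq:limit}).

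For the absorbing coordinate $i=m+1$ I would argue separately. Here $e^{\mathbf{G}t}\mathbf{e}_{m+1}$ and $e^{\mathbf{Q}t}\mathbf{e}_{m+1}$ are the absorption-probability columns $\mathbf{1}-e^{\boldsymbol{\Psi}\mathbf{T}t}\mathbf{1}$ and $\mathbf{1}-e^{\mathbf{T}t}\mathbf{1}$ padded by a $1$. Since the eigenvalues of $\mathbf{T}$ and $\boldsymbol{\Psi}\mathbf{T}$ have negative real part, $e^{\mathbf{T}t}\to\mathbf{0}$ and $e^{\boldsymbol{\Psi}\mathbf{T}t}\to\mathbf{0}$, so the numerator $\boldsymbol{\pi}^{\top}\mathbf{S}_m(\mathbf{1}-e^{\boldsymbol{\Psi}\mathbf{T}t}\mathbf{1})$ tends to $\boldsymbol{\pi}^{\top}\mathbf{S}_m\mathbf{1}$ while the denominator tends to $\boldsymbol{\pi}^{\top}\mathbf{1}=1$. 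Finally, using $\mathbf{T}^{-1}\boldsymbol{\delta}=-\mathbf{1}$ from (\ref{eq:Tdone}), I would rewrite the limit as $\boldsymbol{\pi}^{\top}\mathbf{S}_m\mathbf{1}=\boldsymbol{\pi}^{\top}\mathbf{S}_m\mathbf{T}^{-1}\boldsymbol{\delta}\big/\boldsymbol{\pi}^{\top}\mathbf{T}^{-1}\boldsymbol{\delta}$, matching the $i=m+1$ line of (\ref{eq:limit}).

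The hard part is the asymptotic step, and two points need care there. Justifying that the top eigenvalue alone controls the decay requires the dominant eigenvalue of each subgenerator to be real and simple, which is guaranteed by the distinct-eigenvalue hypothesis together with the Perron--Frobenius structure of phase-type generators, so that $L_{p_k}$ is a genuine rank-one spectral projector. Moreover, in each regime one must check that the relevant leading coefficient $\boldsymbol{\pi}^{\top}\mathbf{S}_m L_{p_2}(\boldsymbol{\Psi}\mathbf{T})\mathbf{e}_i$ or $\boldsymbol{\pi}^{\top}(\mathbf{I}_m-\mathbf{S}_m)L_{p_1}(\mathbf{T})\mathbf{e}_i$, and in the borderline case the full limiting denominator, is nonzero, so that factoring out and dividing by the dominant exponential is legitimate. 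Everything else is routine bookkeeping once the block reduction and the spectral expansion are in place.
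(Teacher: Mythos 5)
Your proposal is correct and follows essentially the same route as the paper: the paper's proof likewise invokes the Lagrange interpolation formula (Apostol's Theorem 2) to expand $e^{\mathbf{T}t}$ and $e^{\boldsymbol{\Psi}\mathbf{T}t}$, substitutes into (\ref{eq:likelihood2}), and reads off the limit from the dominant eigenvalues. Your writeup is in fact more complete than the paper's one-line application, since you make explicit the block reduction via $\boldsymbol{\delta}=-\mathbf{T}\mathbf{1}$, the separate treatment of the absorbing coordinate $i=m+1$, and the nondegeneracy of the leading coefficients needed to justify keeping only the dominant terms.
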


\begin{proof}
Recall that as the intensity matrices $\mathbf{T}$ and $\boldsymbol{\Psi}\mathbf{T}$ are negative definite, it is a known fact that their eigenvalues have negative real part, and that among them there is a dominant eigenvalue (with the smallest absolute value of the real part) which is unique and real (see O'cinneide \cite{Ocinneide} p.27). Then, following Theorem 2 in Apostol \cite{Apostol}, we have for $m\times m$ intensity matrix $\mathbf{T}$ ($\boldsymbol{\Psi}\mathbf{T}$), with $m$ distinct eigenvalues $\varphi_1^{(j)},...,\varphi_m^{(j)}$, for $j=1,2$, that $\exp\big(\mathbf{T}t\big)$ has an explicit form:
\begin{equation}\label{eq:apostol}
\exp\big(\mathbf{T} t\big) =\sum_{k=1}^m \exp\big(\varphi_k^{(1)} t\big) L_k(\mathbf{T}),
\end{equation}
similarly defined for $\exp\big(\boldsymbol{\Psi}\mathbf{T} t\big)$. Applying (\ref{eq:apostol}) to (\ref{eq:likelihood2}) we arrive at (\ref{eq:limit}). \exit\\
\end{proof}

The transition probability of $X$ conditional on $\mathcal{I}_{i,t}$ constitutes the basic building block of constructing conditional phase-type distributions and forward intensity for mixture models (\ref{eq:mixture}). Using Lemma \ref{lem:lem1}, this conditional probability is given below as an adaptation in matrix notation of eqn. (11) on p.1074 of \cite{Frydman2008}.
\begin{theo}\label{theo:theo1}
For any two times $s\geq t\geq 0$ and $i,j\in \mathbb{S}$,
\begin{equation}\label{eq:transM}
\mathbb{P}\{X_s=j \vert \mathcal{I}_{i,t}\}= \mathbf{e}_i^{\top} \Big(\mathbf{S}_{m+1}(t)e^{\mathbf{G}(s-t)}
+ \big[\mathbf{I}_{m+1}-\mathbf{S}_{m+1}(t)\big]e^{\mathbf{Q}(s-t)} \Big)\mathbf{e}_j,
\end{equation}
where $\mathbf{e}_i=(0,..,1,...0)^{\top}$ is $(m+1)\times1-$vector with $1$ at the $i$th element.
\end{theo}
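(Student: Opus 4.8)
The plan is to compute the conditional transition probability by splitting on the unobserved regime indicator $\phi$ and then using the definition of $s_i(t)$ (\ref{eq:gammait}) to assemble the result into matrix form. First I would condition on $\mathcal{I}_{i,t}$ and apply the law of total probability over the two values of $\phi$, writing
\[
\mathbb{P}\{X_s=j \vert \mathcal{I}_{i,t}\}
= \mathbb{P}\{X_s=j \vert \phi=1, \mathcal{I}_{i,t}\}\,\mathbb{P}\{\phi=1\vert\mathcal{I}_{i,t}\}
+ \mathbb{P}\{X_s=j \vert \phi=0, \mathcal{I}_{i,t}\}\,\mathbb{P}\{\phi=0\vert\mathcal{I}_{i,t}\}.
\]
By (\ref{eq:gammait}) the mixing weights are exactly $\mathbb{P}\{\phi=1\vert\mathcal{I}_{i,t}\}=s_i(t)$ and $\mathbb{P}\{\phi=0\vert\mathcal{I}_{i,t}\}=1-s_i(t)$, so these slot directly into the diagonal entries of $\mathbf{S}_{m+1}(t)$ and $\mathbf{I}_{m+1}-\mathbf{S}_{m+1}(t)$.

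The crux is then the conditional factor $\mathbb{P}\{X_s=j \vert \phi=1, \mathcal{I}_{i,t}\}$. Here I would argue that, once the regime is fixed ($\phi=1$ means $X=X^G$, $\phi=0$ means $X=X^Q$), the process is genuinely Markov, so all earlier history in $\mathcal{I}_{t-}$ is irrelevant given the current state $X_t=i$. Thus
\[
\mathbb{P}\{X_s=j \vert \phi=1, \mathcal{I}_{i,t}\}
= \mathbb{P}\{X_s^G=j \vert X_t^G=i\}
= p_{ij}^G(s-t),
\]
and similarly the $\phi=0$ factor equals $p_{ij}^Q(s-t)$. By homogeneity of the component chains these transition probabilities are $\big(e^{\mathbf{G}(s-t)}\big)_{ij}$ and $\big(e^{\mathbf{Q}(s-t)}\big)_{ij}$ respectively, using (\ref{eq:transprob})--(\ref{eq:expm}) applied to $X^G$ and $X^Q$. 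Substituting these into the total-probability expression gives the scalar identity
\[
\mathbb{P}\{X_s=j \vert \mathcal{I}_{i,t}\}
= s_i(t)\,\big(e^{\mathbf{G}(s-t)}\big)_{ij} + \big(1-s_i(t)\big)\big(e^{\mathbf{Q}(s-t)}\big)_{ij},
\]
which is precisely the $(i,j)$ entry of $\mathbf{S}_{m+1}(t)e^{\mathbf{G}(s-t)}+\big[\mathbf{I}_{m+1}-\mathbf{S}_{m+1}(t)\big]e^{\mathbf{Q}(s-t)}$, i.e. $\mathbf{e}_i^{\top}(\cdots)\mathbf{e}_j$ in (\ref{eq:transM}). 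This mirrors the structure of Proposition \ref{prop:PXt}, with the unconditional $s_i$ replaced by its information-updated value $s_i(t)$.

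The step I expect to require the most care is the Markov reduction $\mathbb{P}\{X_s=j \vert \phi=1, \mathcal{I}_{i,t}\}=p_{ij}^G(s-t)$: one must justify that conditioning on the full past $\mathcal{I}_{t-}$ adds nothing beyond $X_t=i$ once $\phi$ is fixed. This is legitimate because the non-Markovian behaviour of $X$ arises solely from the uncertainty about which regime governs it, so conditioning on $\phi$ restores the Markov property of the underlying homogeneous chain and the current state becomes a sufficient statistic for the future. The remaining work — collecting the two conditional probabilities with their $s_i(t)$ weights into the diagonal matrix $\mathbf{S}_{m+1}(t)$ and recognizing the sum as the quoted matrix product — is routine bookkeeping, and the appeal to Lemma \ref{lem:lem1} supplies the explicit form of $s_i(t)$ under whichever information scenario is in force.
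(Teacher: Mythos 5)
Your proposal is correct and follows essentially the same route as the paper's own proof: the law of total probability over $\phi$ combined with Bayes' theorem, the identification of the weights $\mathbb{P}\{\phi=1\vert\mathcal{I}_{i,t}\}=s_i(t)$ from (\ref{eq:gammait}), the Markov property of the component chains $X^G$ and $X^Q$ to reduce the conditional factors to $p_{ij}^G(s-t)$ and $p_{ij}^Q(s-t)$, and finally the assembly into matrix notation. If anything, you are slightly more explicit than the paper in justifying why conditioning on $\phi$ restores the Markov property, a step the paper states without elaboration.
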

\begin{proof}
Following similar arguments to the proof of (\ref{eq:PXt}), the result is established by applying the law of total probability along with Bayes' theorem, i.e.,
\begin{equation}\label{eq:proof}
\begin{split}
\mathbb{P}\{X_s=j \vert \mathcal{I}_{i,t}\}=&\mathbb{P}\{X_s=j, \phi=1 \vert \mathcal{I}_{i,t}\}
+ \mathbb{P}\{X_s=j, \phi=0 \vert \mathcal{I}_{i,t}\}\\
=& \mathbb{P}\{X_s=j \vert \phi=1, \mathcal{I}_{i,t}\}\mathbb{P}\{\phi=1 \vert \mathcal{I}_{i,t}\} \\
&+ \mathbb{P}\{X_s=j \vert \phi=0, \mathcal{I}_{i,t}\}\mathbb{P}\{\phi=0 \vert \mathcal{I}_{i,t}\}\\
=& p_{ij}^G(s-t) s_i(t) + p_{ij}^Q(s-t) (1-s_i(t)),
\end{split}
\end{equation}
where the last equality is due to Markov property of $X^G$ and $X^Q.$
As $(i,j)\in \mathbb{S}$, the last expression can be represented in the matrix notation by (\ref{eq:transM}). \exit\\
\end{proof}

It is clearly to see following (\ref{eq:transM}) that, unless when $\boldsymbol{\Psi}=\mathbf{I}$, $X$ does not inherit the Markov and stationary property of $X^G$ and $X^Q$, i.e., future occurrence of $X$ is determined by its past information $\mathcal{I}_{i,t}$ through its likelihoods and the age ($t$).

Putting the conditional probability (\ref{eq:transM}) into a matrix $\mathbf{P}(t,s)$ gives a $(t,s)-$transition probability matrix of the mixture process $X$ defined by
\begin{equation}\label{eq:probmat}
\mathbf{P}(t,s)=\mathbf{S}_{m+1}(t)e^{\mathbf{G}(s-t)}
+ \big[\mathbf{I}_{m+1}-\mathbf{S}_{m+1}(t)\big]e^{\mathbf{Q}(s-t)}.
\end{equation}

\begin{lem}\label{lem:composition}
The transition probability matrix (\ref{eq:probmat}) has the composition
\begin{equation}\label{eq:blockdiag1}
\begin{split}
\mathbf{P}(t,s)
=
 \left(\begin{array}{cc}
  \mathbf{F}_{11}(t,s)
  & \mathbf{F}_{12}(t,s) \\
  \mathbf{0} & 1 \\
  \end{array}\right),
\end{split}
\end{equation}
where the matrix entries $\mathbf{F}_{11}(t,s)$ and $\mathbf{F}_{12}(t,s)$ are defined by
\begin{eqnarray*}
\mathbf{F}_{11}(t,s)&=&\mathbf{S}_m(t)e^{\boldsymbol{\Psi}\mathbf{T}(s-t)} + (\mathbf{I}_m-\mathbf{S}_m(t))e^{\mathbf{T}(s-t)}, \\
\mathbf{F}_{12}(t,s)&=& \mathbf{S}_m(t)(\boldsymbol{\Psi}\mathbf{T})^{-1}\big(e^{\boldsymbol{\Psi}\mathbf{T}(s-t)}-\mathbf{I}_m\big)\boldsymbol{\Psi}\boldsymbol{\delta}\\
&&+(\mathbf{I}_m-\mathbf{S}_m(t))\mathbf{T}^{-1}\big(e^{\mathbf{T}(s-t)}-\mathbf{I}_m\big)\boldsymbol{\delta}.
\end{eqnarray*}
\end{lem}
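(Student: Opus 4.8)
The plan is to reduce the lemma to computing the two matrix exponentials $e^{\mathbf{Q}(s-t)}$ and $e^{\mathbf{G}(s-t)}$ in block form and then reading off the four blocks of $\mathbf{P}(t,s)$ directly from its definition (\ref{eq:probmat}). Since both $\mathbf{Q}$ (\ref{eq:MatQ}) and $\mathbf{G}$ (\ref{eq:GH}) are block upper-triangular with a single absorbing row, their exponentials inherit that structure, and the claimed $\mathbf{F}_{11}$, $\mathbf{F}_{12}$ will emerge by block multiplication and addition.

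First I would establish the exponential of a generic block upper-triangular generator. For $\mathbf{M} = \left(\begin{smallmatrix} \mathbf{A} & \mathbf{b} \\ \mathbf{0} & 0 \end{smallmatrix}\right)$ with $\mathbf{A}$ a nonsingular $m\times m$ matrix and $\mathbf{b}$ an $m$-vector, a one-line induction gives $\mathbf{M}^n = \left(\begin{smallmatrix} \mathbf{A}^n & \mathbf{A}^{n-1}\mathbf{b} \\ \mathbf{0} & 0 \end{smallmatrix}\right)$ for $n\geq 1$. Summing the series (\ref{eq:expm}) blockwise, the top-left block gives $e^{\mathbf{A}\tau}$, while the top-right block gives $\big(\sum_{n\geq 1}\tau^n\mathbf{A}^{n-1}/n!\big)\mathbf{b} = \mathbf{A}^{-1}(e^{\mathbf{A}\tau}-\mathbf{I}_m)\mathbf{b}$, where the factor $\mathbf{A}^{-1}$ is legitimate precisely because $\mathbf{A}$ is nonsingular. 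Hence $e^{\mathbf{M}\tau} = \left(\begin{smallmatrix} e^{\mathbf{A}\tau} & \mathbf{A}^{-1}(e^{\mathbf{A}\tau}-\mathbf{I}_m)\mathbf{b} \\ \mathbf{0} & 1 \end{smallmatrix}\right)$. Specializing $(\mathbf{A},\mathbf{b})=(\mathbf{T},\boldsymbol{\delta})$ and $(\mathbf{A},\mathbf{b})=(\boldsymbol{\Psi}\mathbf{T},\boldsymbol{\Psi}\boldsymbol{\delta})$ — both nonsingular since $\mathbf{T}$ and $\boldsymbol{\Psi}$ are nonsingular — yields $e^{\mathbf{Q}(s-t)}$ and $e^{\mathbf{G}(s-t)}$, whose off-diagonal blocks match exactly the two summands appearing in $\mathbf{F}_{12}(t,s)$.

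Next I would split the diagonal information matrix as $\mathbf{S}_{m+1}(t)=\left(\begin{smallmatrix} \mathbf{S}_m(t) & \mathbf{0} \\ \mathbf{0} & s_{m+1}(t) \end{smallmatrix}\right)$, so that $\mathbf{I}_{m+1}-\mathbf{S}_{m+1}(t)$ splits analogously. Substituting into (\ref{eq:probmat}) and carrying out the two block products, the top-left corner becomes $\mathbf{S}_m(t)e^{\boldsymbol{\Psi}\mathbf{T}(s-t)} + (\mathbf{I}_m-\mathbf{S}_m(t))e^{\mathbf{T}(s-t)} = \mathbf{F}_{11}(t,s)$, the top-right corner assembles the two off-diagonal exponential blocks into $\mathbf{F}_{12}(t,s)$, the bottom-left block stays $\mathbf{0}$, and the bottom-right entry collapses to $s_{m+1}(t)+(1-s_{m+1}(t))=1$. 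This is exactly the composition (\ref{eq:blockdiag1}).

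I do not anticipate a genuine obstacle here: the single substantive ingredient is the closed form $\mathbf{A}^{-1}(e^{\mathbf{A}\tau}-\mathbf{I}_m)\mathbf{b}$ for the off-diagonal block, and the nonsingularity of $\mathbf{T}$ and $\boldsymbol{\Psi}\mathbf{T}$ is exactly what makes that factor well defined. Everything else is routine bookkeeping of four blocks, with the cancellation in the bottom-right corner confirming that $\mathbf{P}(t,s)$ retains the required stochastic absorbing row.
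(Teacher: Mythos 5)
Your proposal is correct and follows essentially the same route as the paper's proof: compute the block-triangular exponentials $e^{\mathbf{Q}(s-t)}$ and $e^{\mathbf{G}(s-t)}$, partition $\mathbf{S}_{m+1}(t)$ and $\mathbf{I}_{m+1}$ into diagonal blocks, and read off the blocks of $\mathbf{P}(t,s)$ by block multiplication. The only difference is that you supply the induction $\mathbf{M}^n = \left(\begin{smallmatrix} \mathbf{A}^n & \mathbf{A}^{n-1}\mathbf{b} \\ \mathbf{0} & 0 \end{smallmatrix}\right)$ and the series summation explicitly, whereas the paper simply asserts the block form of the exponentials; this fills in a detail the paper leaves to the reader.
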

\begin{proof}
Using the property of matrix exponential (\ref{eq:expm}), it can be shown that
\begin{equation*}\label{eq:partQ}
e^{\mathbf{Q}t} = \left(\begin{array}{cc}
  e^{\mathbf{T}t}  & \mathbf{T}^{-1}(e^{\mathbf{T}t}-\mathbf{I}_m)\boldsymbol{\delta}\\
  \mathbf{0} & 1 \\
\end{array}\right) \quad \mathrm{and} \quad
e^{\mathbf{G}t} = \left(\begin{array}{cc}
  e^{\boldsymbol{\Psi}\mathbf{T}t}  & (\boldsymbol{\Psi}\mathbf{T})^{-1}(e^{\boldsymbol{\Psi}\mathbf{T}t}-\mathbf{I}_m)\boldsymbol{\Psi}\boldsymbol{\delta}\\
  \mathbf{0} & 1 \\
\end{array}\right).
\end{equation*}
By partitioning identity matrix $\mathbf{I}_{m+1}$ and $\mathbf{S}_{m+1}$ (\ref{eq:Snt}) into block diagonal matrix
\begin{equation*}
\mathbf{I}_{m+1} = \left(\begin{array}{cc}
  \mathbf{I}_m & \mathbf{0} \\
  \mathbf{0} & 1 \\
\end{array}\right) \quad \mathrm{and} \quad
\mathbf{S}_{m+1}(t) = \left(\begin{array}{cc}
  \mathbf{S}_m(t) & \mathbf{0} \\
  \mathbf{0} & s_{m+1}(t)\\
\end{array}\right),
\end{equation*}
the transition matrix $\mathbf{P}(t,s)$ has the composition given by (\ref{eq:blockdiag1}). \exit
\end{proof}

The result gives a generalization to the transition matrix $\mathbf{P}(t)$ (\ref{eq:PXt}).

\begin{cor}
Let $\boldsymbol{\Psi}=\mathbf{I}$. Then, the transition probability of $X$ is given by
\begin{equation*}
\mathbf{P}(t,s)= e^{\mathbf{Q}(s-t)},
\end{equation*}
implying that the mixture process is a simple homogeneous Markov chain.
\end{cor}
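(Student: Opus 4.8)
The plan is to reduce the whole statement to a single observation about the intensity matrix and then let the information matrix cancel algebraically. First I would substitute $\boldsymbol{\Psi}=\mathbf{I}$ into the definition (\ref{eq:GH}) of $\mathbf{G}$. Since $\boldsymbol{\Psi}\mathbf{T}=\mathbf{T}$ and $\boldsymbol{\Psi}\boldsymbol{\delta}=\boldsymbol{\delta}$, the block form of $\mathbf{G}$ becomes identical to that of $\mathbf{Q}$ in (\ref{eq:MatQ}); hence $\mathbf{G}=\mathbf{Q}$, and consequently $e^{\mathbf{G}(s-t)}=e^{\mathbf{Q}(s-t)}$ for every $s\geq t\geq 0$. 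It is worth being careful to derive this from the general form (\ref{eq:GH}) rather than merely from the diagonal special case (\ref{eq:Lambda}).

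Next I would insert this identity into the $(t,s)$-transition matrix (\ref{eq:probmat}). Because both matrix exponentials now coincide, the two summands share the common factor $e^{\mathbf{Q}(s-t)}$, and the information matrix $\mathbf{S}_{m+1}(t)$ telescopes against $\mathbf{I}_{m+1}-\mathbf{S}_{m+1}(t)$ to leave the identity $\mathbf{I}_{m+1}$:
\[
\mathbf{P}(t,s)=\big(\mathbf{S}_{m+1}(t)+[\mathbf{I}_{m+1}-\mathbf{S}_{m+1}(t)]\big)e^{\mathbf{Q}(s-t)}=e^{\mathbf{Q}(s-t)}.
\]

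Finally I would read off the two structural conclusions. The right-hand side no longer contains $\mathbf{S}_{m+1}(t)$, which was the only channel through which the past realization $\mathcal{I}_{i,t}$ entered the transition law; its disappearance means the conditional distribution of $X_s$ depends solely on the present state, so $X$ recovers the Markov property. Moreover $e^{\mathbf{Q}(s-t)}$ depends on $s$ and $t$ only through the increment $s-t$, which is exactly time-homogeneity, and it matches the Markov semigroup (\ref{eq:transprob}) of $X^Q$. I do not expect any genuine obstacle: the entire content is the cancellation $\mathbf{S}_{m+1}(t)+(\mathbf{I}_{m+1}-\mathbf{S}_{m+1}(t))=\mathbf{I}_{m+1}$, so the only point requiring care is the clean identification $\mathbf{G}=\mathbf{Q}$ in the first step.
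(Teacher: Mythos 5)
Your proposal is correct and is essentially the argument the paper leaves implicit: the corollary is stated right after Lemma \ref{lem:composition} as an immediate consequence of (\ref{eq:probmat}), and your identification $\mathbf{G}=\mathbf{Q}$ from (\ref{eq:GH}) followed by the right-factoring cancellation $\mathbf{S}_{m+1}(t)+\big(\mathbf{I}_{m+1}-\mathbf{S}_{m+1}(t)\big)=\mathbf{I}_{m+1}$ is exactly that consequence. Your closing remarks on the disappearance of $\mathbf{S}_{m+1}(t)$ and the dependence on $s-t$ alone correctly justify the ``simple homogeneous Markov chain'' claim, so nothing is missing.
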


\section{Generalized phase-type distributions}\label{section:Main}

This section presents the main results of this paper. First, we derive conditional phase-type distributions $\overline{F}_i(t,s)=\mathbb{P}\{\tau > s \vert \mathcal{I}_{i,t}\} $ and its conditional density $f_i(t,s):= -\frac{\partial}{\partial s} F_i(t,s)$ of the Markov mixtures process (\ref{eq:mixture}) given the past information $\mathcal{I}_{i,t}$. Secondly, we derive the unconditional distribution $\overline{F}(t)=\mathbb{P}\{\tau>t\}$ and its density $f(t):=-\frac{d}{dt} \overline{F}(t)$ of the phase-type and study the dense and closure properties of the distributions using its Laplace transform. Afterwards, we discuss forward intensity and express the distributions in terms of the intensity.

\subsection{\textbf{Conditional distributions}}
\begin{theo}\label{theo:mainPH}
For a given $t\geq 0$ and information set $\mathcal{I}_{i,t}$, the generalized conditional phase-type distribution and its density are given for any $s \geq t$ by
\begin{equation}\label{eq:main}
\begin{split}
\overline{F}_i(t,s) \;=&\; \mathbf{e}_i^{\top}\Big(\mathbf{S}_m(t) e^{\boldsymbol{\Psi}\mathbf{T}(s-t)}
+ \big(\mathbf{I}_m-\mathbf{S}_m(t)\big) e^{\mathbf{T}(s-t)}\Big)\mathbf{1}_m,\\
f_i(t,s)\;=& \;\mathbf{e}_i^{\top}\Big(\mathbf{S}_m(t)e^{\boldsymbol{\Psi}\mathbf{T}(s-t)}\boldsymbol{\Psi}
+ \big(\mathbf{I}_m-\mathbf{S}_m(t)\big)e^{\mathbf{T}(s-t)}\Big)\boldsymbol{\delta},
\end{split}
\end{equation}
for any initial state $i\in E$ $X$ started in at time $t$ and is zero for $i=m+1$.
\end{theo}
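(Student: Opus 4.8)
The plan is to reduce the survival function to a sum of conditional transition probabilities and then obtain the density by differentiating in $s$. The key observation is that, since $\tau$ is the first passage time into the absorbing state $\Delta=m+1$, the event $\{\tau>s\}$ is exactly the event $\{X_s\in E\}$: the process has avoided absorption up to time $s$ precisely when it occupies one of the transient states $1,\dots,m$. Hence, for a starting state $i\in E$,
\[
\overline{F}_i(t,s)=\mathbb{P}\{X_s\in E \mid \mathcal{I}_{i,t}\}=\sum_{j\in E}\mathbb{P}\{X_s=j\mid \mathcal{I}_{i,t}\}.
\]

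The next step is to substitute the conditional transition probability from Theorem \ref{theo:theo1}, or, more efficiently, to invoke the block decomposition of $\mathbf{P}(t,s)$ in Lemma \ref{lem:composition}. Summing over $j\in E$ amounts to multiplying the top-left block $\mathbf{F}_{11}(t,s)$ on the right by $\mathbf{1}_m$, so that $\overline{F}_i(t,s)=\mathbf{e}_i^{\top}\mathbf{F}_{11}(t,s)\mathbf{1}_m$ with $\mathbf{e}_i$ now the $m$-vector. Inserting $\mathbf{F}_{11}(t,s)=\mathbf{S}_m(t)e^{\boldsymbol{\Psi}\mathbf{T}(s-t)}+(\mathbf{I}_m-\mathbf{S}_m(t))e^{\mathbf{T}(s-t)}$ yields the first identity in (\ref{eq:main}) at once. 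For $i=m+1$ the process is already absorbed at time $t\le s$, so $\mathbb{P}\{\tau>s\mid\mathcal{I}_{i,t}\}=0$, confirming that both quantities vanish there.

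For the density I would differentiate the survival function in $s$. Since $e^{\boldsymbol{\Psi}\mathbf{T}(s-t)}$ commutes with its generator $\boldsymbol{\Psi}\mathbf{T}$ (both are power series in the same matrix), and likewise $e^{\mathbf{T}(s-t)}$ with $\mathbf{T}$, differentiation brings down the generator acting on $\mathbf{1}_m$:
\[
-\frac{\partial}{\partial s}\overline{F}_i(t,s)=-\mathbf{e}_i^{\top}\Big(\mathbf{S}_m(t)e^{\boldsymbol{\Psi}\mathbf{T}(s-t)}\boldsymbol{\Psi}\mathbf{T}+(\mathbf{I}_m-\mathbf{S}_m(t))e^{\mathbf{T}(s-t)}\mathbf{T}\Big)\mathbf{1}_m.
\]
The constraint (\ref{eq:Tdone}), namely $\mathbf{T}\mathbf{1}_m=-\boldsymbol{\delta}$, together with its left-multiplied form $\boldsymbol{\Psi}\mathbf{T}\mathbf{1}_m=-\boldsymbol{\Psi}\boldsymbol{\delta}$, then replaces $\mathbf{T}\mathbf{1}_m$ and $\boldsymbol{\Psi}\mathbf{T}\mathbf{1}_m$ by the exit vectors, delivering the second identity in (\ref{eq:main}).

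The argument is essentially bookkeeping once the hitting-time event is rephrased as an occupation event; the only genuine subtlety I anticipate is keeping the two distinct information matrices straight, namely the full $\mathbf{S}_{m+1}(t)$ of the $(m+1)$-dimensional transition representation versus the truncated $\mathbf{S}_m(t)$ governing the transient block. This is precisely what Lemma \ref{lem:composition} resolves, so leaning on its block form rather than summing Theorem \ref{theo:theo1} entry by entry keeps the dimensional accounting clean and makes the differentiation step transparent.
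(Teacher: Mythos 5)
Your proposal is correct and follows essentially the same route as the paper's own proof: both rewrite $\{\tau>s\}$ as the occupation event $\{X_s\in E\}$, sum the conditional transition probabilities of Theorem \ref{theo:theo1} over the transient states via the block form of $\mathbf{P}(t,s)$ in Lemma \ref{lem:composition}, and obtain $f_i(t,s)$ by differentiating in $s$ and invoking the constraint (\ref{eq:Tdone}). Your version merely spells out the differentiation and the replacement $\mathbf{T}\mathbf{1}_m=-\boldsymbol{\delta}$, $\boldsymbol{\Psi}\mathbf{T}\mathbf{1}_m=-\boldsymbol{\Psi}\boldsymbol{\delta}$ more explicitly than the paper does.
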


Under full information, we observe that the distribution has path dependence on its past information $\mathcal{I}_{i,t}$ through the appearance of matrix $\mathbf{S}_m(t)$ (\ref{eq:Snt}). However, when the information is limited to only the current (and initial) state, the distribution forms a non stationary function of time $t$. In any case, the proposed distribution has the ability to capture heterogeneity of object of interest represented by the speed matrix $\boldsymbol{\Psi}$ (\ref{eq:GH}). These two important properties are removed when we set $\boldsymbol{\Psi}$ being equal to $m\times m-$identity matrix $\mathbf{I}_m$. As a result, we have,

\begin{cor}
Let $\boldsymbol{\Psi}=\mathbf{I}_m$. Then the conditional distributions (\ref{eq:main}) reduced to
\begin{equation}\label{eq:uncondph}
\begin{split}
\overline{F}_i(t,s)\;=\; \mathbf{e}_i^{\top}e^{\mathbf{T}(s-t)}\mathbf{1}_m  \quad \mathrm{and} \quad
f_i(t,s)\;=\; \mathbf{e}_i^{\top}e^{\mathbf{T}(s-t)}\boldsymbol{\delta}.
\end{split}
\end{equation}
\end{cor}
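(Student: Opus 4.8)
The plan is to obtain the claim by direct substitution of $\boldsymbol{\Psi}=\mathbf{I}_m$ into the two expressions of Theorem~\ref{theo:mainPH}, so that no new probabilistic input is required. The first observation I would record is that setting $\boldsymbol{\Psi}=\mathbf{I}_m$ collapses the variable-speed generator onto the base one, $\boldsymbol{\Psi}\mathbf{T}=\mathbf{T}$, and hence the two matrix exponentials coincide, $e^{\boldsymbol{\Psi}\mathbf{T}(s-t)}=e^{\mathbf{T}(s-t)}$. This is the only structural fact the corollary relies on.

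For the survival function I would then note that, after this substitution, both summands inside the bracket of (\ref{eq:main}) carry $e^{\mathbf{T}(s-t)}$ as a common right factor. Factoring it out leaves the coefficient $\mathbf{S}_m(t)+(\mathbf{I}_m-\mathbf{S}_m(t))=\mathbf{I}_m$, so the bracket reduces to $e^{\mathbf{T}(s-t)}$ and $\overline{F}_i(t,s)=\mathbf{e}_i^{\top}e^{\mathbf{T}(s-t)}\mathbf{1}_m$. For the density the same cancellation applies; the only extra point is that the standalone factor $\boldsymbol{\Psi}$ multiplying $\mathbf{S}_m(t)e^{\boldsymbol{\Psi}\mathbf{T}(s-t)}$ also becomes $\mathbf{I}_m$, after which the two terms again share $e^{\mathbf{T}(s-t)}$ on the right and the information matrices sum to the identity, yielding $f_i(t,s)=\mathbf{e}_i^{\top}e^{\mathbf{T}(s-t)}\boldsymbol{\delta}$.

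There is no genuine obstacle here, the argument being a one-line algebraic simplification; the single point worth flagging is that the cancellation does not require $\mathbf{S}_m(t)$ and $e^{\mathbf{T}(s-t)}$ to commute, since in both terms the exponential already sits to the right of the information matrix, so we are factoring on the right rather than rearranging a product. Conceptually, this records that once $\boldsymbol{\Psi}=\mathbf{I}_m$ the two regimes become indistinguishable, the past information stored in $\mathbf{S}_m(t)$ drops out entirely, and the conditional distribution reverts to the classical phase-type form of (\ref{eq:DistrDefTime}).
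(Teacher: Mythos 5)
Your proposal is correct and matches the paper's (implicit) argument: the paper states this corollary as an immediate consequence of Theorem \ref{theo:mainPH}, obtained exactly by substituting $\boldsymbol{\Psi}=\mathbf{I}_m$, using $e^{\boldsymbol{\Psi}\mathbf{T}(s-t)}=e^{\mathbf{T}(s-t)}$, and collapsing $\mathbf{S}_m(t)+(\mathbf{I}_m-\mathbf{S}_m(t))=\mathbf{I}_m$ by factoring the exponential on the right. Your remark that no commutativity of $\mathbf{S}_m(t)$ and $e^{\mathbf{T}(s-t)}$ is needed is a correct and worthwhile clarification.
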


\begin{Rem}
Observe that the conditional distribution (\ref{eq:uncondph}) can be obtained straightforwardly from (\ref{eq:DistrDefTime}) by setting $\boldsymbol{\pi}=\mathbf{e}_i$ and replacing the time $t$ in (\ref{eq:DistrDefTime}) by $s-t$. However, this is in general not possible for obtaining (\ref{eq:main}) from (\ref{eq:main2}).
\end{Rem}

Furthermore, the new distribution has two additional parameters represented by $\mathbf{S}_m$ and $\boldsymbol{\Psi}$ responsible for the inclusion of past information and modeling heterogeneity. It is fully characterized by the parameters $(\boldsymbol{\pi},\mathbf{T},\boldsymbol{\Psi},\mathbf{S}_m)$, which in turn generalizing the existing distribution (\ref{eq:DistrDefTime}) which is specified by $(\boldsymbol{\pi},\mathbf{T})$.

\medskip

\begin{proof}
As $\tau$ (\ref{eq:DefTime}) is the time until absorption of $X$, we have by (\ref{eq:transM})-(\ref{eq:probmat})
\begin{equation*}
F_i(t,s)=\sum_{j=1}^{m} \mathbb{P}\{ X_s=j \vert \mathcal{I}_{i,t}\} = \sum_{j=1}^{m} \mathbf{e}_i^{\top} \mathbf{P}(t,s)\mathbf{e}_j
=\mathbf{e}_i^{\top} \mathbf{P}(t,s)\big(\mathbf{1}_{m+1}-\mathbf{e}_{m+1}\big). \notag
\end{equation*}
On account of the fact that $\mathbf{1}_{m+1}-\mathbf{e}_{m+1}=(\mathbf{1}_m,0)^{\top}$, the claim for distribution $\overline{F}_i(t,s)$ follows on account of (\ref{eq:blockdiag1}). The density $f_i(t,s)$ is obtained by taking partial derivative of $\overline{F}_i(t,s)$ w.r.t $s$ and then applying the condition (\ref{eq:Tdone}). \exit\\
\end{proof}
%
%

Observe following the above proof that the $(m+1)$th element $s_{m+1}(t)$ of $\mathbf{S}_{m+1}(t)$ (\ref{eq:Snt}) does not play any role in getting the final result (\ref{eq:main}). This element corresponds to the portion of population arriving in the absorbing state at speed $\boldsymbol{\Psi}\boldsymbol{\delta}$. So, necessarily we can set $s_{m+1}=0$ which we will use in numerical example.

\subsection{\textbf{Unconditional distributions}}\label{sec:uncond}

Using the main result (\ref{eq:main}) for $t=0$ and summing up over all $i\in E$ for each given weight $\mathbb{P}\{X_0=i\}$, we obtain the unconditional phase-type distribution for the mixture process. This unconditional distribution will be used to study dense and closure properties of the distribution through its Laplace transform.

\begin{theo}
The unconditional distribution is given for any $t\geq 0$ by
\begin{equation}\label{eq:main2}
\begin{split}
\overline{F}(t)=&\boldsymbol{\pi}^{\top}\Big(\mathbf{S}_m e^{\boldsymbol{\Psi}\mathbf{T}t}
+ \big(\mathbf{I}_m-\mathbf{S}_m\big) e^{\mathbf{T}t}\Big)\mathbf{1}_m,\\
f(t)=&\boldsymbol{\pi}^{\top}\Big(\mathbf{S}_m e^{\boldsymbol{\Psi}\mathbf{T}t}\boldsymbol{\Psi}
+ \big(\mathbf{I}_m-\mathbf{S}_m\big) e^{\mathbf{T}t}\Big)\boldsymbol{\delta}.
\end{split}
\end{equation}
\end{theo}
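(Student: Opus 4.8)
The plan is to obtain (\ref{eq:main2}) directly from the conditional result (\ref{eq:main}) by specializing to the initial time $t=0$ and then averaging over the starting state, exactly as anticipated in the sentence preceding the statement. First I would set $t=0$ in (\ref{eq:main}). By the definition (\ref{eq:Snt}) we have $\mathbf{S}_m(0)=\mathbf{S}_m$, and since $\mathcal{I}_{i,0}=\{X_0=i\}$ the mixing weight $s_i(0)=s_{i_0}$ is exactly the initial proportion (\ref{eq:portion}), so no reinterpretation of $\mathbf{S}_m$ is required. Relabelling the running variable $s-t$ as $t$, the conditional survival function becomes
\[
\overline{F}_i(0,t)=\mathbb{P}\{\tau>t\mid X_0=i\}=\mathbf{e}_i^{\top}\Big(\mathbf{S}_m e^{\boldsymbol{\Psi}\mathbf{T}t}+(\mathbf{I}_m-\mathbf{S}_m)e^{\mathbf{T}t}\Big)\mathbf{1}_m ,
\]
where, for $i\in E$, the vector $\mathbf{e}_i$ is the $m$-dimensional standard basis vector consistent with the $m\times m$ matrices in the bracket.

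Next I would remove the conditioning on the starting state by the law of total probability. Writing $\pi_i=\mathbb{P}\{X_0=i\}$ for the components of the initial law $\boldsymbol{\pi}$,
\[
\overline{F}(t)=\mathbb{P}\{\tau>t\}=\sum_{i\in E}\overline{F}_i(0,t)\,\pi_i .
\]
Because $\boldsymbol{\pi}$ carries no mass on the absorbing state $\Delta=m+1$, the sum over $E=\{1,\dots,m\}$ already accounts for all of the probability, and $\sum_{i\in E}\pi_i\,\mathbf{e}_i^{\top}=\boldsymbol{\pi}^{\top}$ as an $m$-vector. Factoring this out of the state-independent bracket yields the first line of (\ref{eq:main2}) at once.

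For the density I would differentiate, using $f(t)=-\tfrac{d}{dt}\overline{F}(t)$. Since each exponential commutes with its generator, $\tfrac{d}{dt}e^{\boldsymbol{\Psi}\mathbf{T}t}=e^{\boldsymbol{\Psi}\mathbf{T}t}\boldsymbol{\Psi}\mathbf{T}$ and $\tfrac{d}{dt}e^{\mathbf{T}t}=e^{\mathbf{T}t}\mathbf{T}$, giving
\[
f(t)=-\boldsymbol{\pi}^{\top}\Big(\mathbf{S}_m e^{\boldsymbol{\Psi}\mathbf{T}t}\boldsymbol{\Psi}\mathbf{T}+(\mathbf{I}_m-\mathbf{S}_m)e^{\mathbf{T}t}\mathbf{T}\Big)\mathbf{1}_m .
\]
Applying the exit relation (\ref{eq:Tdone}) in the form $\mathbf{T}\mathbf{1}_m=-\boldsymbol{\delta}$ to the factor $\mathbf{T}\mathbf{1}_m$ in both terms turns the bracket into $\mathbf{S}_m e^{\boldsymbol{\Psi}\mathbf{T}t}\boldsymbol{\Psi}\boldsymbol{\delta}+(\mathbf{I}_m-\mathbf{S}_m)e^{\mathbf{T}t}\boldsymbol{\delta}$ up to the overall sign, which cancels the leading minus and produces the second line of (\ref{eq:main2}). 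Equivalently, I could average $f_i(0,t)$ from (\ref{eq:main}) against $\boldsymbol{\pi}$ exactly as for the survival function.

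There is no genuinely hard step here: the entire content is already carried by (\ref{eq:main}), and the argument is a specialization to $t=0$ followed by a convex combination over the initial state. The only points that need care are bookkeeping ones — confirming $\mathbf{S}_m(0)=\mathbf{S}_m$ with $s_i(0)=s_{i_0}$, reading the $\mathbf{e}_i$ as $m$-dimensional for $i\in E$, and using the vanishing of $\boldsymbol{\pi}$ on $\Delta$ so that the sum over $E$ reproduces the full $\boldsymbol{\pi}^{\top}$ — together with the commutation of $\boldsymbol{\Psi}\mathbf{T}$ (resp. $\mathbf{T}$) with its exponential and the single application of (\ref{eq:Tdone}) in the density computation.
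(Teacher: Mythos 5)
Your proposal is correct and follows essentially the same route as the paper: both condition on the initial state, apply the law of total probability to average the conditional result (\ref{eq:main}) at $t=0$ against $\boldsymbol{\pi}$ (the paper phrases this via the transition matrix $\mathbf{P}(0,t)$ and the block structure (\ref{eq:blockdiag1}), which is where (\ref{eq:main}) comes from anyway), and obtain the density by differentiating and invoking (\ref{eq:Tdone}). The bookkeeping points you flag — $\mathbf{S}_m(0)=\mathbf{S}_m$ and the vanishing of $\boldsymbol{\pi}$ on the absorbing state — are exactly the details the paper handles by carrying the extra coordinate $\pi_{m+1}$ explicitly.
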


For further reference throughout the remaining of this paper, we refer to $\mathrm{GPH}(\boldsymbol{\pi},\mathbf{T},\mathbf{\Psi}, \mathbf{S}_m)$ as the generalized phase-type distribution (\ref{eq:main2}) with parameters $(\boldsymbol{\pi},\mathbf{T},\mathbf{\Psi}, \mathbf{S}_m)$, and to $\mathrm{PH}(\boldsymbol{\pi},\mathbf{T})$ for the existing distribution (\ref{eq:DistrDefTime}). Setting $\mathbf{S}_m=\alpha \mathbf{I}_m$, with $0\leq \alpha \leq 1$, in (\ref{eq:main2}) leads to the convex mixture of $\mathrm{PH}(\boldsymbol{\pi},\boldsymbol{\Psi}\mathbf{T})$ and $\mathrm{PH}(\boldsymbol{\pi},\mathbf{T})$, i.e., $\mathrm{GPH}(\boldsymbol{\pi},\mathbf{T},\boldsymbol{\Psi},\alpha\mathbf{I})=\alpha \mathrm{PH}(\boldsymbol{\pi},\boldsymbol{\Psi}\mathbf{T})
+ (1-\alpha) \mathrm{PH}(\boldsymbol{\pi},\mathbf{T}).$ Thus, (\ref{eq:main2}) can be seen as a generalized mixture of the phase-type distributions (\ref{eq:DistrDefTime}).

\begin{cor}
By setting $\boldsymbol{\Psi}=\mathbf{I}_m$ in (\ref{eq:main2}), the distributions reduced to (\ref{eq:DistrDefTime}).
\end{cor}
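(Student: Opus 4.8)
The plan is to prove this by direct substitution, since the statement is a consistency check showing that the classical family is recovered as a special case of the generalized one. The only fact I need is that when $\boldsymbol{\Psi}=\mathbf{I}_m$ the two matrix exponentials appearing in (\ref{eq:main2}) coincide, because $\boldsymbol{\Psi}\mathbf{T}=\mathbf{T}$ and hence $e^{\boldsymbol{\Psi}\mathbf{T}t}=e^{\mathbf{T}t}$.

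First I would handle the survival function $\overline{F}(t)$. Substituting $\boldsymbol{\Psi}=\mathbf{I}_m$ turns the bracketed matrix into $\mathbf{S}_m e^{\mathbf{T}t}+(\mathbf{I}_m-\mathbf{S}_m)e^{\mathbf{T}t}$; since both summands share the common right factor $e^{\mathbf{T}t}$, I would factor it out on the right. This uses only distributivity of matrix multiplication over addition, so no commutativity between $\mathbf{S}_m$ and $e^{\mathbf{T}t}$ is required. The matrix coefficient then collapses to $\mathbf{S}_m+(\mathbf{I}_m-\mathbf{S}_m)=\mathbf{I}_m$, leaving $\boldsymbol{\pi}^{\top}e^{\mathbf{T}t}\mathbf{1}_m$, which is precisely the survival function in (\ref{eq:DistrDefTime}).

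Next I would treat the density $f(t)$. Here there is one extra bookkeeping point: besides the exponential, the trailing factor $\boldsymbol{\Psi}$ sitting between $e^{\boldsymbol{\Psi}\mathbf{T}t}$ and $\boldsymbol{\delta}$ also becomes $\mathbf{I}_m$ under the substitution. After simplifying that factor, the identical right-factoring argument applies to $\mathbf{S}_m e^{\mathbf{T}t}+(\mathbf{I}_m-\mathbf{S}_m)e^{\mathbf{T}t}$, yielding $f(t)=\boldsymbol{\pi}^{\top}e^{\mathbf{T}t}\boldsymbol{\delta}$, again matching (\ref{eq:DistrDefTime}).

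There is no genuine obstacle in this argument; the entire content is elementary matrix algebra. The only place to stay careful is to remember to simplify the stray $\boldsymbol{\Psi}$ in the density term, and not merely the one inside the exponential, so that both expressions in (\ref{eq:main2}) reduce simultaneously. This confirms that $\mathrm{GPH}(\boldsymbol{\pi},\mathbf{T},\boldsymbol{\Psi},\mathbf{S}_m)$ recovers $\mathrm{PH}(\boldsymbol{\pi},\mathbf{T})$ exactly when $\boldsymbol{\Psi}=\mathbf{I}_m$, consistent with the mixture interpretation noted immediately after (\ref{eq:main2}).
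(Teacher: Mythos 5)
Your proof is correct and is exactly the argument the paper intends: the corollary is stated as an immediate consequence of (\ref{eq:main2}) (the paper offers no separate proof beyond the substitution), and your direct substitution $\boldsymbol{\Psi}=\mathbf{I}_m$, right-factoring of $e^{\mathbf{T}t}$, and collapse of $\mathbf{S}_m+(\mathbf{I}_m-\mathbf{S}_m)$ to $\mathbf{I}_m$ is the intended verification. Your care with the trailing $\boldsymbol{\Psi}$ in the density term is the one detail worth checking, and you handled it correctly.
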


\begin{proof}
The proof is based on similar arguments employed before to establish (\ref{eq:main}) and by conditioning $X$ on its initial position $X_0$. Using the result in (\ref{eq:main}),
\begin{eqnarray*}
\overline{F}(t) &=& \sum_{i=1}^{m+1} \sum_{j=1}^m \mathbb{P}\{X_0=i\}\mathbb{P}\{X_t=j \vert X_0=i\}\\
&=&(\boldsymbol{\pi} \;\; \pi_{m+1})^{\top}\mathbf{P}(0,t) \big(\mathbf{1}_{m+1}-\mathbf{e}_{m+1}\big),
\end{eqnarray*}
where we have used $(\boldsymbol{\pi},\pi_{m+1})^{\top}=\sum_{i=1}^{m+1} \mathbb{P}\{X_0=i\}\mathbf{e}_i^{\top}$, whilst $\pi_{m+1}$ is the mass of initial distribution on the absorbing state. Our claim for survival function $\overline{F}(t)$ is complete on account of (\ref{eq:blockdiag1}). On recalling (\ref{eq:Tdone}), we have $f(t)=-\overline{F}^{\prime}(t)$. \exit\\
\end{proof}

The theorem below gives the Laplace transform of (\ref{eq:main2}) and its $n$th moment.
\begin{theo}\label{theo:Laplace}
Let $\overline{F}$ be the phase-type distribution $\mathrm{GPH}(\boldsymbol{\pi}, \mathbf{T}, \boldsymbol{\Psi}, \mathbf{S}_m)$. Then,
\begin{itemize}
\item[(i)] the Laplace transform $\widehat{F}[\theta]=\int_0^{\infty} e^{-\theta u} f(u) du$ is given by
\begin{equation}\label{eq:Laplace}
\widehat{F}[\theta]=\boldsymbol{\pi}^{\top}\Big(\mathbf{S}_m\big(\theta\mathbf{I}_m-\boldsymbol{\Psi}\mathbf{T}\big)^{-1}\boldsymbol{\Psi}
+(\mathbf{I}_m-\mathbf{S}_m)\big(\theta\mathbf{I}_m-\mathbf{T}\big)^{-1}\Big)\boldsymbol{\delta}.
\end{equation}
\item[(ii)] the $n$th moment, \textrm{for $n=0,1,...$}, of $\tau$ (\ref{eq:DefTime}) is given by
\begin{equation}\label{eq:nthmoment}
\mathbb{E}\{\tau^n\}=(-1)^n n!\boldsymbol{\pi}^{\top}\Big(\mathbf{S}_m (\boldsymbol{\Psi}\mathbf{T})^{-n}+(\mathbf{I}_m-\mathbf{S}_m)\mathbf{T}^{-n}\Big)\mathbf{1}_m.
\end{equation}
\end{itemize}
\end{theo}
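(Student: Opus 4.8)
The plan is to compute both parts directly from the closed form of the density $f$ in \eqref{eq:main2}, exploiting linearity of the Laplace transform and the standard matrix-exponential integral identities. For part (i), I would start from
\[
\widehat{F}[\theta]=\int_0^{\infty} e^{-\theta u}\,\boldsymbol{\pi}^{\top}\Big(\mathbf{S}_m e^{\boldsymbol{\Psi}\mathbf{T}u}\boldsymbol{\Psi}+\big(\mathbf{I}_m-\mathbf{S}_m\big)e^{\mathbf{T}u}\Big)\boldsymbol{\delta}\,du,
\]
and push the scalar $e^{-\theta u}$ inside, splitting the integral into the two additive pieces. The key computation is the matrix identity $\int_0^{\infty} e^{-\theta u} e^{\mathbf{A}u}\,du=(\theta\mathbf{I}_m-\mathbf{A})^{-1}$, valid whenever $\theta\mathbf{I}_m-\mathbf{A}$ is invertible and the integral converges. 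Applying this with $\mathbf{A}=\boldsymbol{\Psi}\mathbf{T}$ in the first term and $\mathbf{A}=\mathbf{T}$ in the second, and pulling the constant matrices $\mathbf{S}_m$, $\boldsymbol{\Psi}$, $\mathbf{I}_m-\mathbf{S}_m$ and the vector $\boldsymbol{\delta}$ outside the integral, yields \eqref{eq:Laplace} immediately.

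For part (ii) I would obtain the moments either by repeated differentiation of $\widehat{F}[\theta]$ at $\theta=0$, using $\mathbb{E}\{\tau^n\}=(-1)^n\widehat{F}^{(n)}[0]$, or — more cleanly — directly from the definition $\mathbb{E}\{\tau^n\}=\int_0^{\infty} u^n f(u)\,du$. Taking the latter route, I would again split into the two terms and invoke the scalar-weighted matrix-exponential moment identity $\int_0^{\infty} u^n e^{\mathbf{A}u}\,du=(-1)^{n+1}n!\,\mathbf{A}^{-(n+1)}$, which holds for a stability matrix $\mathbf{A}$ (negative real parts of eigenvalues). One convenient bookkeeping step here is to absorb the trailing factors: in the first term the identity applied to $\mathbf{A}=\boldsymbol{\Psi}\mathbf{T}$ together with the postmultiplier $\boldsymbol{\Psi}$ produces $(-1)^{n+1}n!(\boldsymbol{\Psi}\mathbf{T})^{-(n+1)}\boldsymbol{\Psi}$, and using $\mathbf{T}\boldsymbol{\delta}=-\mathbf{T}\mathbf{1}_m\cdot$ (via \eqref{eq:Tdone}, $\boldsymbol{\delta}=-\mathbf{T}\mathbf{1}_m$, so $(\boldsymbol{\Psi}\mathbf{T})^{-1}\boldsymbol{\Psi}\boldsymbol{\delta}=-\mathbf{1}_m$) collapses one power and replaces $\boldsymbol{\delta}$ by $-\mathbf{1}_m$. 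The analogous simplification in the second term uses $\mathbf{T}^{-1}\boldsymbol{\delta}=-\mathbf{1}_m$. After these reductions both terms carry the common factor $(-1)^n n!$ and act on $\mathbf{1}_m$, delivering exactly \eqref{eq:nthmoment}.

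The main technical obstacle is justifying the convergence of the integrals and the legitimacy of interchanging the infinite matrix-exponential series with integration over $[0,\infty)$. This rests on the negative definiteness of $\mathbf{T}$ and $\boldsymbol{\Psi}\mathbf{T}$ established in the introduction: since both matrices have eigenvalues with strictly negative real part (the dominant eigenvalue being real and negative, cf. the discussion preceding \eqref{eq:apostol}), the entries of $e^{\boldsymbol{\Psi}\mathbf{T}u}$ and $e^{\mathbf{T}u}$ decay exponentially, so $e^{-\theta u}e^{\mathbf{A}u}$ and $u^n e^{\mathbf{A}u}$ are absolutely integrable and the resolvents $(\theta\mathbf{I}_m-\mathbf{A})^{-1}$ exist for all $\theta\ge 0$. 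I would state this justification briefly and then carry out the two computations, treating the algebraic reduction via $\boldsymbol{\delta}=-\mathbf{T}\mathbf{1}_m$ as the one nonroutine simplification worth spelling out.
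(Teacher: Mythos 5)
Your proposal is correct. Part (i) is exactly the paper's argument: split the Laplace integral of the density \eqref{eq:main2} into its two additive pieces and apply $\int_0^\infty e^{-\theta u}e^{\mathbf{B}u}\,du=(\theta\mathbf{I}_m-\mathbf{B})^{-1}$ to each. For part (ii), the paper takes the first of your two suggested routes rather than the one you chose: it differentiates the closed-form expression \eqref{eq:Laplace} $n$ times using $\frac{d^n}{d\theta^n}(\theta\mathbf{I}_m-\mathbf{B})^{-1}=(-1)^n n!\,(\theta\mathbf{I}_m-\mathbf{B})^{-(n+1)}$ (Lemma 8.2.5 of Rolski et al.), sets $\theta=0$, and then invokes \eqref{eq:Tdone} to collapse the result onto $\mathbf{1}_m$ --- precisely the algebraic reduction $(\boldsymbol{\Psi}\mathbf{T})^{-(n+1)}\boldsymbol{\Psi}\boldsymbol{\delta}=-(\boldsymbol{\Psi}\mathbf{T})^{-n}\mathbf{1}_m$ and $\mathbf{T}^{-(n+1)}\boldsymbol{\delta}=-\mathbf{T}^{-n}\mathbf{1}_m$ that you spell out. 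Your direct route via $\int_0^\infty u^n e^{\mathbf{A}u}\,du=(-1)^{n+1}n!\,\mathbf{A}^{-(n+1)}$ is equivalent (it is what differentiation under the integral sign produces), and it is arguably marginally cleaner: it bypasses the step of justifying $\mathbb{E}\{\tau^n\}=(-1)^n\widehat{F}^{(n)}[0]$, requiring only absolute convergence of the moment integrals, which your stability observation about the eigenvalues of $\mathbf{T}$ and $\boldsymbol{\Psi}\mathbf{T}$ supplies. Indeed, on this point your hypotheses are stated more carefully than the paper's, which asserts the resolvent identity ``for any nonsingular matrix $\mathbf{B}$''; nonsingularity alone is not sufficient, and your formulation (invertibility of $\theta\mathbf{I}_m-\mathbf{A}$ together with convergence, guaranteed by the negative real parts of the spectrum) is the correct one. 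The only blemish is the garbled clause ``$\mathbf{T}\boldsymbol{\delta}=-\mathbf{T}\mathbf{1}_m\cdot$'' in your second paragraph; the parenthetical that follows it ($\boldsymbol{\delta}=-\mathbf{T}\mathbf{1}_m$, hence $(\boldsymbol{\Psi}\mathbf{T})^{-1}\boldsymbol{\Psi}\boldsymbol{\delta}=-\mathbf{1}_m$) is what you mean, and it is correct.
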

\begin{Rem}
As the conditional distribution (\ref{eq:main}) is a function of two time-variables $t$ and $s$, with $s\geq t$, we can reparameterize it in terms of initial time $t$ and remaining time $\nu:=s-t\geq 0$ such that, e.g., density can be rewritten as
\begin{equation*}
f_i(t,\nu)=\mathbf{e}_i^{\top}\Big(\mathbf{S}_m(t) e^{\boldsymbol{\Psi}\mathbf{T}\nu}\boldsymbol{\Psi} + (\mathbf{I}_m-\mathbf{S}_m(t))e^{\mathbf{T} \nu}\Big)\boldsymbol{\delta},
\end{equation*}
whose Laplace transform $\widehat{F}_t^{(i)}[\theta]=\int_0^{\infty} e^{-\theta \nu} f_{i}(t,\nu) d\nu$ is given below for any fixed $t\geq0$ and $i\in E$. The result is obtained in the same way as before to get (\ref{eq:Laplace}).
\begin{equation}\label{eq:Laplace2}
\widehat{F}_t^{(i)}[\theta]=\mathbf{e}_i^{\top}\Big(\mathbf{S}_m(t)(\theta\mathbf{I}_m-\boldsymbol{\Psi}\mathbf{T})^{-1}\boldsymbol{\Psi}
+(\mathbf{I}_m-\mathbf{S}_m(t))(\theta\mathbf{I}_m-\mathbf{T})^{-1}\Big)\boldsymbol{\delta}.
\end{equation}
So, there is similarity between the Laplace transform (\ref{eq:Laplace}) and (\ref{eq:Laplace2}). For (\ref{eq:Laplace2}) the initial distribution $\boldsymbol{\pi}$ has mass one on state $i$ and $\mathbf{S}_m$ has dependence on $t$.
\end{Rem}

\begin{proof}
As the sub-intensity matrix $\mathbf{T}$ and $\boldsymbol{\Psi}$ are nonsingular, then following Lemma 8.2.5 in \cite{Rolski} $(\theta\mathbf{I}_m-\mathbf{T})$ and $(\theta\mathbf{I}_m-\boldsymbol{\Psi}\mathbf{T})$ are nonsingular for each $\theta\geq 0$ whose all entries are rational functions of $\theta\geq 0$. Furthermore, for all $\theta\geq 0$,
\begin{equation*}
\int_0^{\infty} e^{-\theta u} e^{\mathbf{B}u} du = \big(\theta\mathbf{I}_m-\mathbf{B}\big)^{-1},
\end{equation*}
for any nonsingular matrix $\mathbf{B}$. The Laplace transform of (\ref{eq:main2}) is therefore
\begin{equation*}
\begin{split}
\widehat{F}[\theta]=&\int_0^{\infty} e^{-\theta u} \Big(\boldsymbol{\pi}^{\top}\Big( \mathbf{S}_m e^{\boldsymbol{\Psi}\mathbf{T} u} \boldsymbol{\Psi}
+(\mathbf{I}_m-\mathbf{S}_m)e^{\mathbf{T} u}\Big)\boldsymbol{\delta}\Big)du\\
=&\boldsymbol{\pi}^{\top}\mathbf{S}_m\Big(\int_0^{\infty} e^{-\theta u} e^{\boldsymbol{\Psi}\mathbf{T} u} du\Big)\boldsymbol{\Psi}\boldsymbol{\delta}
+\boldsymbol{\pi}^{\top}(\mathbf{I}_m-\mathbf{S}_m)\Big(\int_0^{\infty} e^{-\theta u} e^{\mathbf{T}u} du\Big)\boldsymbol{\delta}\\
=&\boldsymbol{\pi}^{\top}\mathbf{S}_m\big(\theta \mathbf{I}_m-\boldsymbol{\Psi}\mathbf{T}\big)^{-1}\boldsymbol{\Psi}\boldsymbol{\delta}
+\boldsymbol{\pi}^{\top}\big(\mathbf{I}_m-\mathbf{S}_m\big)\big(\theta\mathbf{I}_m-\mathbf{T}\big)^{-1}\boldsymbol{\delta}.
\end{split}
\end{equation*}
Following the same Lemma 8.2.5 of \cite{Rolski}, we have for all $\theta\geq0$ and $n\in\mathbb{N}$,
\begin{equation*}
\frac{d^n}{d\theta^n}\big(\theta\mathbf{I}_m-\mathbf{B}\big)^{-1}=(-1)^n n!\big(\theta\mathbf{I}_m-\mathbf{B}\big)^{-(n+1)},
\end{equation*}
for any nonsingular matrix $\mathbf{B}$. On account of this, $n$th derivative of $\widehat{F}[\theta]$ is
\begin{equation*}
\begin{split}
\frac{d^n}{d\theta^n}\widehat{F}[\theta]=&\boldsymbol{\pi}^{\top}\Big(\mathbf{S}_m\frac{d^n}{d\theta^n}\big(\theta\mathbf{I}_m-\boldsymbol{\Psi}\mathbf{T}\big)^{-1}\boldsymbol{\Psi}
+(\mathbf{I}_m-\mathbf{S}_m)\frac{d^n}{d\theta^n}\big(\theta\mathbf{I}_m-\mathbf{T}\big)^{-1}\Big)\boldsymbol{\delta}\\
=&(-1)^n n! \boldsymbol{\pi}^{\top}\Big(\mathbf{S}_m\big(\theta\mathbf{I}_m-\boldsymbol{\Psi}\mathbf{T}\big)^{-(n+1)}\boldsymbol{\Psi}
+(\mathbf{I}_m-\mathbf{S}_m)\big(\theta\mathbf{I}_m-\mathbf{T}\big)^{-(n+1)}\Big)\boldsymbol{\delta}.
\end{split}
\end{equation*}
The claim is established by setting $\theta=0$ and taking into account of (\ref{eq:Tdone}). \exit\\
\end{proof}

When $\boldsymbol{\Psi}=\mathbf{I}$, (\ref{eq:Laplace})-(\ref{eq:nthmoment}) reduced to the one given by Theorem 8.2.5 in \cite{Rolski}:
\begin{equation}\label{eq:existingLT}
\begin{split}
\widehat{F}[\theta]=&\boldsymbol{\pi}^{\top}(\theta\mathbf{I}_m-\mathbf{T})^{-1}\boldsymbol{\delta}, \\
\mathbb{E}\{\tau^n\}=&(-1)^n n! \boldsymbol{\pi}^{\top} \mathbf{T}^{-n} \mathbf{1}_m.\\
\end{split}
\end{equation}

\subsection{Relationship with the existing PH distribution}

In this section we express the relationship between unconditional distribution (\ref{eq:main2}) $\mathrm{GPH}(\boldsymbol{\pi},\mathbf{T},\mathbf{\Psi}, \mathbf{S}_m)$ and the existing distribution $\mathrm{PH}(\boldsymbol{\pi},\mathbf{T})$ (\ref{eq:DistrDefTime}). We give two main examples of (\ref{eq:main2}) which encloses Erlang distribution and its mixtures.

\begin{lem}\label{lem:equivalence}
The distribution $\mathrm{GPH}(\boldsymbol{\pi},\mathbf{T},\mathbf{\Psi}, \mathbf{S}_m)$ has $\mathrm{PH}(\widetilde{\boldsymbol{\pi}},\widetilde{\mathbf{T}})$ representation,
\begin{equation}\label{eq:equivalence}
\widetilde{\boldsymbol{\pi}}^{\top}=\big((\mathbf{S}_m\boldsymbol{\pi})^{\top}, ((\mathbf{I}_m-\mathbf{S}_m)\boldsymbol{\pi})^{\top}\big)
  \quad \mathrm{and} \quad \widetilde{\mathbf{T}}
  = \left(\begin{array}{cc}
  \boldsymbol{\Psi}\mathbf{T}  & \mathbf{0}\\
  \mathbf{0} & \mathbf{T} \\
  \end{array}\right).
\end{equation}
\end{lem}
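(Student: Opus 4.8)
The plan is to verify directly that the ordinary phase-type distribution $\mathrm{PH}(\widetilde{\boldsymbol{\pi}},\widetilde{\mathbf{T}})$, read off from (\ref{eq:DistrDefTime}) with the $2m$-dimensional data in (\ref{eq:equivalence}), reproduces both the survival function and the density in (\ref{eq:main2}). The crucial structural fact is that $\widetilde{\mathbf{T}}$ is block diagonal, so each power $\widetilde{\mathbf{T}}^n$ is block diagonal with blocks $(\boldsymbol{\Psi}\mathbf{T})^n$ and $\mathbf{T}^n$; summing the defining series (\ref{eq:expm}) term by term therefore yields
\[
e^{\widetilde{\mathbf{T}}t}=\left(\begin{array}{cc} e^{\boldsymbol{\Psi}\mathbf{T}t} & \mathbf{0}\\ \mathbf{0} & e^{\mathbf{T}t} \end{array}\right).
\]

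First I would evaluate $\widetilde{\boldsymbol{\pi}}^{\top} e^{\widetilde{\mathbf{T}}t}\mathbf{1}_{2m}$. Writing $\mathbf{1}_{2m}=(\mathbf{1}_m^{\top},\mathbf{1}_m^{\top})^{\top}$ and using that $\mathbf{S}_m$ is diagonal, hence symmetric, so that $(\mathbf{S}_m\boldsymbol{\pi})^{\top}=\boldsymbol{\pi}^{\top}\mathbf{S}_m$, the block form collapses to $\boldsymbol{\pi}^{\top}\mathbf{S}_m e^{\boldsymbol{\Psi}\mathbf{T}t}\mathbf{1}_m+\boldsymbol{\pi}^{\top}(\mathbf{I}_m-\mathbf{S}_m)e^{\mathbf{T}t}\mathbf{1}_m$, which is exactly $\overline{F}(t)$ in (\ref{eq:main2}). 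For the density I would compute the associated exit vector $\widetilde{\boldsymbol{\delta}}=-\widetilde{\mathbf{T}}\mathbf{1}_{2m}$; the block structure together with the constraint (\ref{eq:Tdone}), which gives $-\mathbf{T}\mathbf{1}_m=\boldsymbol{\delta}$ and hence $-\boldsymbol{\Psi}\mathbf{T}\mathbf{1}_m=\boldsymbol{\Psi}\boldsymbol{\delta}$, produces $\widetilde{\boldsymbol{\delta}}=\big((\boldsymbol{\Psi}\boldsymbol{\delta})^{\top},\boldsymbol{\delta}^{\top}\big)^{\top}$. Substituting into $\widetilde{\boldsymbol{\pi}}^{\top} e^{\widetilde{\mathbf{T}}t}\widetilde{\boldsymbol{\delta}}$ then recovers $f(t)$ in (\ref{eq:main2}).

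What remains, and where a little care is needed, is to confirm that $(\widetilde{\boldsymbol{\pi}},\widetilde{\mathbf{T}})$ is a legitimate phase-type representation rather than merely a pair of objects giving the right formula. Here I would check three points: (a) $\widetilde{\mathbf{T}}$ is a valid nonsingular phase generator, which holds because it is block diagonal with blocks $\boldsymbol{\Psi}\mathbf{T}$ and $\mathbf{T}$, both sub-intensity matrices of the legitimate intensity matrices $\mathbf{G}$ (\ref{eq:GH}) and $\mathbf{Q}$ (\ref{eq:MatQ}), with $\det\widetilde{\mathbf{T}}=\det(\boldsymbol{\Psi}\mathbf{T})\det\mathbf{T}\neq 0$; (b) $\widetilde{\boldsymbol{\pi}}\geq\mathbf{0}$ componentwise, since its entries are $s_i\pi_i$ and $(1-s_i)\pi_i$ with $0\le s_i\le 1$ and $\pi_i\ge 0$; and (c) $\widetilde{\boldsymbol{\pi}}^{\top}\mathbf{1}_{2m}=\boldsymbol{\pi}^{\top}\mathbf{S}_m\mathbf{1}_m+\boldsymbol{\pi}^{\top}(\mathbf{I}_m-\mathbf{S}_m)\mathbf{1}_m=\boldsymbol{\pi}^{\top}\mathbf{1}_m=1$. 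The only genuine obstacle is bookkeeping: using the symmetry of $\mathbf{S}_m$ correctly when transposing $\mathbf{S}_m\boldsymbol{\pi}$, and applying the exit-vector identity from (\ref{eq:Tdone}) to both diagonal blocks. The matrix-exponential step itself is immediate once the block-diagonal form is recorded.
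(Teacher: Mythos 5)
Your proof is correct, but it takes a genuinely different route from the paper's. The paper argues in the frequency domain: starting from the Laplace transform $\widehat{F}[\theta]$ of Theorem \ref{theo:Laplace}, it sets $\boldsymbol{\pi}_1=\mathbf{S}_m\boldsymbol{\pi}$, $\boldsymbol{\pi}_2=(\mathbf{I}_m-\mathbf{S}_m)\boldsymbol{\pi}$, $\boldsymbol{\delta}_1=\boldsymbol{\Psi}\boldsymbol{\delta}$, $\boldsymbol{\delta}_2=\boldsymbol{\delta}$, assembles the two resolvent terms into a single $2m\times 2m$ block-diagonal resolvent, and identifies $\widehat{F}[\theta]=\widetilde{\boldsymbol{\pi}}^{\top}\big(\theta\mathbf{I}_{2m}-\widetilde{\mathbf{T}}\big)^{-1}\widetilde{\boldsymbol{\delta}}$ with the canonical PH transform (\ref{eq:existingLT}); concluding equality of distributions from this implicitly rests on uniqueness of Laplace transforms. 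You instead work directly in the time domain: block-diagonality gives $e^{\widetilde{\mathbf{T}}t}=\mathrm{diag}\big(e^{\boldsymbol{\Psi}\mathbf{T}t},e^{\mathbf{T}t}\big)$, the exit vector $\widetilde{\boldsymbol{\delta}}=-\widetilde{\mathbf{T}}\mathbf{1}_{2m}=\big((\boldsymbol{\Psi}\boldsymbol{\delta})^{\top},\boldsymbol{\delta}^{\top}\big)^{\top}$ follows from (\ref{eq:Tdone}), and matching $\widetilde{\boldsymbol{\pi}}^{\top}e^{\widetilde{\mathbf{T}}t}\mathbf{1}_{2m}$ and $\widetilde{\boldsymbol{\pi}}^{\top}e^{\widetilde{\mathbf{T}}t}\widetilde{\boldsymbol{\delta}}$ against (\ref{eq:main2}) identifies the survival function and density pointwise, with no appeal to transform uniqueness. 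The paper in fact records exactly your computation, but only as an after-the-fact consistency check in Remark \ref{rem:mainrem}, not as the proof. Your route is more elementary and buys something the paper leaves implicit: you verify that $(\widetilde{\boldsymbol{\pi}},\widetilde{\mathbf{T}})$ is a bona fide PH representation --- $\widetilde{\boldsymbol{\pi}}$ nonnegative and summing to one (using $0\leq s_i\leq 1$ and $\boldsymbol{\pi}^{\top}\mathbf{1}_m=1$), and $\widetilde{\mathbf{T}}$ a nonsingular sub-intensity matrix --- rather than merely a pair of arrays reproducing the right formulas. What the paper's transform route buys is continuity with the rest of Section 4: the same resolvent algebra is reused immediately for the moment formula (\ref{eq:nthmoment}), the scaling result (\ref{eq:prodPH}), and the closure properties of Theorem \ref{theo:closure}, so proving the lemma at the level of $\widehat{F}[\theta]$ keeps all of those arguments in a single formalism.
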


\begin{Rem}
Based on the Laplace transform (\ref{eq:Laplace2}), the same conclusion is reached as in Lemma \ref{lem:equivalence} for the conditional distribution (\ref{eq:main}) for which case
\begin{equation}\label{eq:equivalence2}
\widetilde{\boldsymbol{\pi}}^{\top}=\big((\mathbf{S}_m(t)\mathbf{e}_i)^{\top}, ((\mathbf{I}_m-\mathbf{S}_m(t))\mathbf{e}_i)^{\top}\big)
\quad \mathrm{and} \quad \widetilde{\mathbf{T}}
  = \left(\begin{array}{cc}
  \boldsymbol{\Psi}\mathbf{T}  & \mathbf{0}\\
  \mathbf{0} & \mathbf{T} \\
  \end{array}\right).
\end{equation}
\end{Rem}

\begin{proof}
Following the Laplace transform (\ref{eq:Laplace}), define $\boldsymbol{\pi}_1=\mathbf{S}_m\boldsymbol{\pi}$, $\boldsymbol{\delta}_1=\boldsymbol{\Psi}\boldsymbol{\delta}$, $\boldsymbol{\pi}_2=(\mathbf{I}_m-\mathbf{S}_m)\boldsymbol{\pi}$ and $\boldsymbol{\delta}_2=\boldsymbol{\delta}$. Respectively, the pairs $(\boldsymbol{\pi}_1,\boldsymbol{\delta}_1)$ and $(\boldsymbol{\pi}_2,\boldsymbol{\delta}_2)$ define the initial proportion of population and their exit rate to absorbing state moving at speed $\boldsymbol{\Psi}\mathbf{T}$ and $\mathbf{T}$. Having defined these, we can rearrange (\ref{eq:Laplace}) as:
\begin{equation*}
\begin{split}
\widehat{F}[\theta]=&\boldsymbol{\pi}_1^{\top}\big(\theta\mathbf{I}_m-\boldsymbol{\Psi}\mathbf{T}\big)^{-1}\boldsymbol{\delta}_1
+\boldsymbol{\pi}_2^{\top}\big(\theta\mathbf{I}_m-\mathbf{T}\big)^{-1}\boldsymbol{\delta}_2\\
=&(\boldsymbol{\pi}_1^{\top} \;\; \boldsymbol{\pi}_2^{\top})
\left(\begin{array}{cc}
  (\theta\mathbf{I}_m-\boldsymbol{\Psi}\mathbf{T})^{-1}  & \mathbf{0}\\
  \mathbf{0} & (s\mathbf{I}_m-\mathbf{T})^{-1} \\
  \end{array}\right)
\left(\begin{array}{c}
  \boldsymbol{\delta}_1\\
  \boldsymbol{\delta}_2 \\
  \end{array}\right).
\end{split}
\end{equation*}
Notice that we have made the advantage of $\mathbf{S}_m$ and $(\mathbf{I}_m-\mathbf{S}_m)$ being symmetric matrices. On recalling that $(\theta\mathbf{I}_m-\boldsymbol{\Psi}\mathbf{T})$ and $(\theta\mathbf{I}_m-\mathbf{T})$ are square nonsingular matrices, the above block diagonal matrix can be further simplified as follows
\begin{equation*}
\left(\begin{array}{cc}
  (\theta\mathbf{I}_m-\boldsymbol{\Psi}\mathbf{T})^{-1}  & \mathbf{0}\\
  \mathbf{0} & (\theta\mathbf{I}_m-\mathbf{T})^{-1} \\
  \end{array}\right)= \left(\begin{array}{cc}
  (\theta\mathbf{I}_m-\boldsymbol{\Psi}\mathbf{T})  & \mathbf{0}\\
  \mathbf{0} & (\theta\mathbf{I}_m-\mathbf{T}) \\
  \end{array}\right)^{-1}.
\end{equation*}

To arrive at the representation of (\ref{eq:Laplace}) in the framework of Laplace transform (\ref{eq:existingLT}) of the existing phase-type distribution, define the following matrices
\begin{equation*}
\mathbf{I}_{2m}=\left(\begin{array}{cc}
  \mathbf{I}_m & \mathbf{0}\\
  \mathbf{0} & \mathbf{I}_m \\
  \end{array}\right) \quad \mathrm{and} \quad
  \widetilde{\mathbf{T}}= \left(\begin{array}{cc}
  \boldsymbol{\Psi}\mathbf{T}  & \mathbf{0}\\
  \mathbf{0} & \mathbf{T} \\
  \end{array}\right).
\end{equation*}

As a result of having defined the two matrices $\mathbf{I}_{2m}$ and $\widetilde{\mathbf{T}}$, the Laplace transform (\ref{eq:Laplace}) of the generalized phase-type distribution (\ref{eq:main2}) is finally given by
\begin{equation}\label{eq:LTEquiv}
\widehat{F}[\theta]=\widetilde{\boldsymbol{\pi}}^{\top} \big(\theta\mathbf{I}_{2m} - \widetilde{\mathbf{T}}\big)^{-1} \widetilde{\boldsymbol{\delta}},
\end{equation}
which falls in the form of (\ref{eq:existingLT}) with $\widetilde{\boldsymbol{\pi}}^{\top}=\big(\boldsymbol{\pi}_1^{\top},\boldsymbol{\pi}_2^{\top}\big)$ and $\widetilde{\boldsymbol{\delta}}=(\boldsymbol{\delta}_1,\boldsymbol{\delta}_2)^{\top}$. \exit
\end{proof}

\begin{Rem}\label{rem:mainrem}
The relationship between the distribution $\mathrm{GPH}(\boldsymbol{\pi},\mathbf{T},\mathbf{\Psi}, \mathbf{S}_m)$ and $\mathrm{PH}(\widetilde{\boldsymbol{\pi}},\widetilde{\mathbf{T}})$ is that the adjusted intensity matrix $\widetilde{\mathbf{T}}$ for distribution $\mathrm{PH}(\widetilde{\boldsymbol{\pi}},\widetilde{\mathbf{T}})$ is of block diagonal matrix of size $2m\times 2m$. That is to say that the transient state $E$ for the underlying Markov chains that form the phase-type distribution $\mathrm{PH}(\widetilde{\boldsymbol{\pi}},\widetilde{\mathbf{T}})$ is given by $E=E_1\cup E_2$, $E_1=\{1,...,m\}$ and $E_2=\{m+1,...,2m\}$, with initial distribution $\widetilde{\boldsymbol{\pi}}$. Denote by $\widetilde{\boldsymbol{\pi}}_1$ and $\widetilde{\boldsymbol{\pi}}_2$ the distributions of the Markov chains on the transient states $E_1$ and $E_2$, i.e., $\widetilde{\boldsymbol{\pi}}^{\top}=(\widetilde{\boldsymbol{\pi}}_1^{\top}, \widetilde{\boldsymbol{\pi}}_2^{\top})$. The corresponding distribution $\boldsymbol{\pi}$ of the Markov mixture process is determined following (\ref{eq:equivalence}) by $\boldsymbol{\pi}=\widetilde{\boldsymbol{\pi}}_1+\widetilde{\boldsymbol{\pi}}_2.$ As $\mathbf{S}_m$ is a diagonal matrix, each element $s_i$, $i=1,2,...,m$, of $\mathbf{S}_m$ is given by $s_i=\widetilde{\boldsymbol{\pi}}_1^{(i)}/\big(\widetilde{\boldsymbol{\pi}}_1^{(i)}+\widetilde{\boldsymbol{\pi}}_2^{(i)}\big)$, where $\widetilde{\boldsymbol{\pi}}_1^{(i)}$ and $\widetilde{\boldsymbol{\pi}}_2^{(i)}$ are the $i-$th element of $\widetilde{\boldsymbol{\pi}}_1$ and $\widetilde{\boldsymbol{\pi}}_2$, provided that $\boldsymbol{\pi}^{(i)}=\widetilde{\boldsymbol{\pi}}_1^{(i)}+\widetilde{\boldsymbol{\pi}}_2^{(i)}\neq 0$. Otherwise, $s_i$ can be arbitrarily chosen. Recall following (\ref{eq:equivalence}) that since $\widetilde{\mathbf{T}}=\mathrm{diag}(\boldsymbol{\Psi}\mathbf{T},\mathbf{T})$ and $\boldsymbol{\Psi}\mathbf{T}$ and $\mathbf{T}$ are square matrices, we have $e^{\widetilde{\mathbf{T}}t}=\mathrm{diag}\big(e^{\boldsymbol{\Psi}\mathbf{T}t},e^{\mathbf{T}t}\big).$ Therefore, following (\ref{eq:DistrDefTime}) we have $\overline{F}(t)=\widetilde{\boldsymbol{\pi}}^{\top}e^{\widetilde{\mathbf{T}}t}\mathbf{1}$ and $f(t)=\widetilde{\boldsymbol{\pi}}^{\top}e^{\widetilde{\mathbf{T}}t}\widetilde{\boldsymbol{\delta}}$, which are just the same results as in (\ref{eq:main2}).
\end{Rem}

Below we give two examples of unconditional phase-type distribution (\ref{eq:main2}) which encloses known distributions such as Erlang distribution and its mixture.
\begin{Ex}\label{Ex:Ex1}
Let $\mathrm{GPH}(\boldsymbol{\pi},\mathbf{T},\boldsymbol{\Psi},\mathbf{S}_m),$ have initial distribution $\boldsymbol{\pi}=(1,0,0)^{\top}$,
\begin{equation*}\label{eq:reg1B}
\mathbf{T} = \left(\begin{array}{ccc}
  -\beta_2 &  \beta_2   & 0 \\
  0        &  -\beta_2  & \beta_2 \\
  0        &  0         & -\beta_2 \\
  \end{array}\right),
\;
\boldsymbol{\Psi} = \left(\begin{array}{ccc}
  \beta_1/\beta_2 & 0                & 0 \\
  0               & \beta_1/\beta_2  & 0\\
  0               &  0               & \beta_1/\beta_2
\end{array}\right),
\;
\mathbf{S}_m = \left(\begin{array}{ccc}
  \alpha    & 0       & 0 \\
  0         & \alpha  & 0 \\
  0         &  0      & \alpha
\end{array}\right).
\end{equation*}

In view of Lemma \ref{lem:equivalence}, $\mathrm{GPH}(\boldsymbol{\pi},\mathbf{T},\boldsymbol{\Psi},\mathbf{S}_m)$ can be represented by $\mathrm{PH}\big(\widetilde{\boldsymbol{\pi}},\widetilde{\mathbf{T}}\big)$
\begin{equation*}
\widetilde{\boldsymbol{\pi}}= \left(\begin{array}{c}
  \alpha \\
  0      \\
  0     \\
  1-\alpha \\
  0 \\
  0\\
\end{array}\right),  \quad
\widetilde{\mathbf{T}}= \left(\begin{array}{cccccc}
  -\beta_1  & \beta_1  & 0        & 0        & 0        & 0 \\
  0         & -\beta_1 & \beta_1  & 0        & 0        & 0    \\
  0         & 0        & -\beta_1 & 0        & 0        & 0   \\
  0         & 0        & 0        & -\beta_2 & \beta_2  & 0 \\
  0         & 0        & 0        & 0        & -\beta_2 & \beta_2  \\
  0         & 0        & 0        & 0        & 0        & -\beta_2\\
\end{array}\right),
\end{equation*}
which is found to be the familiar form of mixture of Erlang distribution whose density function is given following (\ref{eq:main2}) and definition of exponential matrix by
\begin{equation*}
f(t)=\alpha f(t;m,\beta_1) + (1-\alpha) f(t;m,\beta_2).
\end{equation*}
where $f(t;m,\beta_i)$, for $i=1,2$ and $m=3$, is Erlang distribution defined by
\begin{equation}\label{eq:erlangpdf}
f(t;m,\beta_i)=\frac{1}{(m-1)!} t^{m-1}\beta_i^me^{-t\beta_i}.
\end{equation}
\end{Ex}
\begin{Ex}[Erlang distribution]\label{Ex:Ex2}
If we set $\alpha=1$ or $\beta_2=\beta_1$, i.e., $\boldsymbol{\Psi}=\mathbf{I}$, in Example \ref{Ex:Ex1}, it results in the Erlang distribution $\mathrm{Erl}(m,\beta_1)$ with $m=3$:
\begin{equation*}
\widetilde{\boldsymbol{\pi}}= \left(\begin{array}{c}
  1 \\
  0      \\
  0     \\
\end{array}\right),  \quad
\widetilde{\mathbf{T}}= \left(\begin{array}{cccccc}
  -\beta_1  & \beta_1  & 0         \\
  0         & -\beta_1 & \beta_1   \\
  0         & 0        & -\beta_1   \\
  \end{array}\right),
\end{equation*}
whose density $f(t;m,\beta_1)$ and distribution $F(t;m,\beta_1)$ are given by (\ref{eq:erlangpdf}) and
\begin{equation}\label{eq:erlangdist}
F(t;m,\beta_1)=1-\sum_{k=0}^{m-1} \frac{(\beta_1 t)^k}{k!} e^{-\beta_1 t}.
\end{equation}
\end{Ex}

\subsection{Closure and dense properties}

In this section we prove that the generalized phase-type distribution (\ref{eq:main}) (and \ref{eq:main2}) is closed under convex mixture and convolution, i.e., the sum and convex combinations of random variables with phase-type distributions again belong to phase-type distributions. These two properties emphasis additional importance of the phase-type distribution as a probabilistic tools in applications. For the existing phase-type distribution, the closure properties were established in \cite{Assaf1982}. We refer also to \cite{Rolski} and \cite{Breuer}. The proof is established based on Theorem \ref{theo:Laplace} and by taking the advantage of the already known closure properties of the existing distribution. Based on closure properties, we show by adapting similar arguments in \cite{Rolski} that the generalized distribution forms a dense class of distribution.

To establish the closure properties, the following result is required.

\begin{lem}
Let $p>0$ and $Z\sim \mathrm{GPH}(\boldsymbol{\pi},\mathbf{T},\mathbf{\Psi}, \mathbf{S}_m)$. Then,
\begin{equation}\label{eq:prodPH}
pZ\sim \mathrm{GPH}(\boldsymbol{\pi},\mathbf{T}/p,\mathbf{\Psi}, \mathbf{S}_m).
\end{equation}
\end{lem}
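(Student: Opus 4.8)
The plan is to work entirely at the level of Laplace transforms and then invoke their uniqueness. First I would record how a positive scaling acts on the transform: if $Z$ has density $f$ with Laplace transform $\widehat{F}[\theta]=\int_0^{\infty} e^{-\theta u} f(u)\,du$, then $pZ$ has density $u\mapsto p^{-1} f(u/p)$, and the substitution $v=u/p$ shows at once that the Laplace transform of $pZ$ equals $\widehat{F}[p\theta]$. Thus the whole problem reduces to checking that $\widehat{F}[p\theta]$ coincides with the Laplace transform of the claimed law $\mathrm{GPH}(\boldsymbol{\pi},\mathbf{T}/p,\boldsymbol{\Psi},\mathbf{S}_m)$.

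Next I would write down the transform of the candidate using Theorem~\ref{theo:Laplace}(i). The one subtlety is that the exit vector is not a free parameter: it is tied to the sub-generator through $\mathbf{T}\mathbf{1}+\boldsymbol{\delta}=\mathbf{0}$ (\ref{eq:Tdone}). Replacing $\mathbf{T}$ by $\mathbf{T}/p$ therefore forces the new exit vector to be $-(\mathbf{T}/p)\mathbf{1}=\boldsymbol{\delta}/p$, while $\boldsymbol{\Psi}$ and $\mathbf{S}_m$ remain unchanged. I would also note that $\mathbf{T}/p$ is again a nonsingular, negative-definite sub-intensity matrix for every $p>0$, so $\mathrm{GPH}(\boldsymbol{\pi},\mathbf{T}/p,\boldsymbol{\Psi},\mathbf{S}_m)$ is well defined and (\ref{eq:Laplace}) applies to it.

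The key computation is the resolvent scaling identity $(\theta\mathbf{I}_m-\mathbf{B}/p)^{-1}=p\,(p\theta\mathbf{I}_m-\mathbf{B})^{-1}$, valid for any nonsingular $\mathbf{B}$ and $p>0$. Applying it inside (\ref{eq:Laplace}) with $\mathbf{B}=\boldsymbol{\Psi}\mathbf{T}$ and with $\mathbf{B}=\mathbf{T}$, and inserting the induced exit vector $\boldsymbol{\delta}/p$, the factor $p$ produced by each resolvent cancels the factor $1/p$ carried by the exit vector. What remains is exactly
\[
\boldsymbol{\pi}^{\top}\Big(\mathbf{S}_m\big(p\theta\mathbf{I}_m-\boldsymbol{\Psi}\mathbf{T}\big)^{-1}\boldsymbol{\Psi}+(\mathbf{I}_m-\mathbf{S}_m)\big(p\theta\mathbf{I}_m-\mathbf{T}\big)^{-1}\Big)\boldsymbol{\delta}=\widehat{F}[p\theta].
\]
Hence the candidate's transform agrees with the transform of $pZ$ for every $\theta\geq 0$, and uniqueness of the Laplace transform gives $pZ\sim\mathrm{GPH}(\boldsymbol{\pi},\mathbf{T}/p,\boldsymbol{\Psi},\mathbf{S}_m)$.

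I would not expect any genuine obstacle here; the only point requiring care is the bookkeeping of the induced change $\boldsymbol{\delta}\mapsto\boldsymbol{\delta}/p$ in the exit vector and verifying that the two $p$-factors cancel cleanly, so that neither $\boldsymbol{\Psi}$ nor $\mathbf{S}_m$ acquires a spurious scaling.
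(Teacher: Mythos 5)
Your proof is correct and takes essentially the same approach as the paper: both arguments identify the law of $pZ$ through its Laplace transform, using the resolvent scaling $(\theta\mathbf{I}-\mathbf{B}/p)^{-1}=p\,(p\theta\mathbf{I}-\mathbf{B})^{-1}$ together with the observation that the constraint (\ref{eq:Tdone}) forces the rescaled exit vector $\boldsymbol{\delta}/p$. The only cosmetic difference is that you compute directly with the $m$-dimensional GPH transform (\ref{eq:Laplace}) from Theorem \ref{theo:Laplace}, whereas the paper routes through the $2m$-dimensional $\mathrm{PH}(\widetilde{\boldsymbol{\pi}},\widetilde{\mathbf{T}})$ representation of Lemma \ref{lem:equivalence} and (\ref{eq:LTEquiv}); the underlying computation is identical.
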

\begin{proof}
Let $\widetilde{Z}=pZ$. Following Lemma \ref{lem:equivalence} and Laplace transform (\ref{eq:LTEquiv}), we have
\begin{equation*}
\widehat{F}[\theta]=\mathbb{E}\big\{e^{-\theta \widetilde{Z}}\big\}=\widetilde{\boldsymbol{\pi}}^{\top}\big(\theta \mathbf{I}_{2m}-\widetilde{\mathbf{T}}/p\big)^{-1}\widetilde{\boldsymbol{\delta}}/p.
\end{equation*}
The proof follows on account $\widetilde{T}/p\mathbf{1}_{2m}+\widetilde{\boldsymbol{\delta}}/p=\mathbf{0},$ which establishes (\ref{eq:prodPH}). \exit\\
\end{proof}

Taking into account of Theorem \ref{theo:Laplace} on the Laplace transform along with Lemma \ref{lem:equivalence} and Remark \ref{rem:mainrem}, we show in the theorem below that the generalized phase-type distribution is closed under convex mixtures and finite convolutions.
\begin{theo}\label{theo:closure}
The generalized phase-type distribution $\mathrm{GPH}(\boldsymbol{\pi},\mathbf{T},\mathbf{\Psi}, \mathbf{S}_m)$ possesses closure property under finite convex mixtures and finite convolutions.
\end{theo}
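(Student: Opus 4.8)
The plan is to reduce the statement about $\mathrm{GPH}$ distributions to the corresponding, and already established, closure statement for ordinary phase-type distributions, exploiting the representation of Lemma \ref{lem:equivalence}. The key observation is that the $\mathrm{GPH}$ class and the $\mathrm{PH}$ class coincide as families of probability distributions (they differ only in their parametrization), so that any closure property enjoyed by $\mathrm{PH}$ is automatically inherited by $\mathrm{GPH}$.

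First I would record the two inclusions. Lemma \ref{lem:equivalence} (see also Remark \ref{rem:mainrem}) gives $\mathrm{GPH}\subseteq\mathrm{PH}$: every $\mathrm{GPH}(\boldsymbol{\pi},\mathbf{T},\boldsymbol{\Psi},\mathbf{S}_m)$ equals $\mathrm{PH}(\widetilde{\boldsymbol{\pi}},\widetilde{\mathbf{T}})$ with $\widetilde{\mathbf{T}}=\mathrm{diag}(\boldsymbol{\Psi}\mathbf{T},\mathbf{T})$ and $\widetilde{\boldsymbol{\pi}}$ as in (\ref{eq:equivalence}). Conversely $\mathrm{PH}\subseteq\mathrm{GPH}$, since setting $\mathbf{S}_m=\mathbf{0}$ in (\ref{eq:main2}) collapses the survival function to $\boldsymbol{\pi}^{\top}e^{\mathbf{T}t}\mathbf{1}_m$, i.e. $\mathrm{GPH}(\boldsymbol{\pi},\mathbf{T},\boldsymbol{\Psi},\mathbf{0})=\mathrm{PH}(\boldsymbol{\pi},\mathbf{T})$ for any $\boldsymbol{\Psi}$. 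Hence the two classes are one and the same, and it suffices to transport the known $\mathrm{PH}$ closure across this identification.

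With the equivalence in hand I would invoke the closure of $\mathrm{PH}$ under finite convex mixtures and finite convolutions (established in \cite{Assaf1982}; see also \cite{Rolski}, \cite{Breuer}). Concretely, for a convex mixture $\sum_k p_k Z_k$ with $Z_k\sim\mathrm{GPH}(\boldsymbol{\pi}_k,\mathbf{T}_k,\boldsymbol{\Psi}_k,\mathbf{S}_{m,k})$, Theorem \ref{theo:Laplace}(i) together with (\ref{eq:LTEquiv}) gives $\widehat{F}[\theta]=\sum_k p_k\,\widetilde{\boldsymbol{\pi}}_k^{\top}(\theta\mathbf{I}-\widetilde{\mathbf{T}}_k)^{-1}\widetilde{\boldsymbol{\delta}}_k$, which assembles into a single $\mathrm{PH}$ Laplace transform $\widehat{\boldsymbol{\pi}}^{\top}(\theta\mathbf{I}-\widehat{\mathbf{T}})^{-1}\widehat{\boldsymbol{\delta}}$ with block-diagonal generator $\widehat{\mathbf{T}}=\mathrm{diag}(\widetilde{\mathbf{T}}_1,\dots)$ and stacked initial vector $\widehat{\boldsymbol{\pi}}=(p_1\widetilde{\boldsymbol{\pi}}_1,\dots)$. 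For a finite convolution $Z_1+\dots+Z_n$ of independent summands the transform factorizes as $\prod_k\widehat{F}_k[\theta]$, and the standard concatenation of phase processes yields a block upper-triangular $\mathrm{PH}$ generator whose off-diagonal blocks couple the exit of one summand to the entry of the next. In either case the outcome is a bona fide $\mathrm{PH}$ distribution, hence by the equivalence above a $\mathrm{GPH}$ distribution, which completes the argument.

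The step requiring care is the convolution case. Unlike a mixture, concatenation introduces genuine off-diagonal coupling blocks, so the resulting generator is \emph{not} of the special block-diagonal shape $\mathrm{diag}(\boldsymbol{\Psi}\mathbf{T},\mathbf{T})$ demanded by the literal $\mathrm{GPH}$ parametrization. This is precisely why closure must be read at the level of distribution classes rather than of the four-tuple $(\boldsymbol{\pi},\mathbf{T},\boldsymbol{\Psi},\mathbf{S}_m)$: once $\mathrm{GPH}=\mathrm{PH}$ has been secured, the mismatch of parametric form is immaterial and the known $\mathrm{PH}$ closure results apply verbatim.
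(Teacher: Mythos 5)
Your proof is correct, and it diverges from the paper's at exactly the step where the divergence matters. The paper, like you, passes to the $\mathrm{PH}(\widetilde{\boldsymbol{\pi}},\widetilde{\mathbf{T}})$ representation of Lemma \ref{lem:equivalence} and handles convex mixtures by Theorem 8.2.7 of \cite{Rolski}, obtaining the same block-diagonal generator $\widehat{\mathbf{T}}$ you describe. For convolutions, however, the paper tries to remain inside the block-diagonal parametric shape: it takes the convex mixture with $p=\tfrac{1}{2}$, identifies that mixture with the random variable $\tfrac{1}{2}(Z_1+Z_2)$, and rescales by $2$ via (\ref{eq:prodPH}) to conclude $Z_1+Z_2\sim\mathrm{PH}(\widehat{\boldsymbol{\pi}},\widehat{\mathbf{T}}/2)$. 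This conflates a mixture of distributions (what Theorem 8.2.7 yields, namely the law $\tfrac{1}{2}F_1+\tfrac{1}{2}F_2$) with the law of a sum of random variables: at the transform level the mixture is the sum $\tfrac{1}{2}\widehat{F}_1[\theta]+\tfrac{1}{2}\widehat{F}_2[\theta]$, whereas the convolution is the product $\widehat{F}_1[\theta]\,\widehat{F}_2[\theta]$, and no rescaling converts one into the other (for $Z_1,Z_2$ i.i.d.\ exponential the paper's recipe returns an exponential with doubled mean, not the Erlang law of $Z_1+Z_2$). Your argument sidesteps this entirely: you first secure the class identity $\mathrm{GPH}=\mathrm{PH}$ --- the reverse inclusion $\mathrm{PH}\subseteq\mathrm{GPH}$ via $\mathbf{S}_m=\mathbf{0}$ (or $\boldsymbol{\Psi}=\mathbf{I}_m$, as in the paper's own corollary) being the cheap but essential half --- and then import the standard $\mathrm{PH}$ closure results of \cite{Assaf1982}, \cite{Rolski}, \cite{Breuer}, including the concatenation construction for convolutions whose generator is block upper-triangular rather than of the form $\mathrm{diag}(\boldsymbol{\Psi}\mathbf{T},\mathbf{T})$. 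Your explicit remark that the coupling blocks destroy the special block-diagonal shape is precisely the point the paper's argument glosses over. What the paper's strategy was aiming for (and does not deliver) is an explicit GPH four-tuple for the convolution; what your class-level argument buys is a proof of the stated closure that is actually watertight, at the modest cost of not exhibiting parameters $(\boldsymbol{\pi},\mathbf{T},\boldsymbol{\Psi},\mathbf{S}_m)$ for the resulting law.
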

\begin{proof}
To start with, let $Z_k\sim\mathrm{GPH}(\boldsymbol{\pi}^{(k)},\mathbf{T}^{(k)},\boldsymbol{\Psi}^{(k)},\mathbf{S}_{m}^{(k)})$, for $k=1,2$. We assume that the transient state space for $Z_1$ is given by $E_1=\{1,...,m\}$ and by $E_2=\{m+1,...,2m\}$ for $Z_2$. By representation $\mathrm{PH}(\widetilde{\boldsymbol{\pi}}^{(k)},\widetilde{\mathbf{T}}^{(k)})$, we have
\begin{equation*}
\widehat{F}^{(k)}[\theta]:=\mathbb{E}\{e^{-\theta Z_k}\}=(\widetilde{\pi}^{(k)})^{\top}\big(\theta \mathbf{I}_{2m}-\widetilde{T}^{(k)}\big)^{-1}\widetilde{\delta}^{(k)},
\end{equation*}
where the vectors $\widetilde{\pi}^{(k)}$ and $\widetilde{\delta}^{(k)}$ and matrix $\widetilde{T}^{(k)}$ are given by (\ref{eq:equivalence}):
\begin{equation*}
\widetilde{\boldsymbol{\pi}}^{(k)}=\big(\mathbf{S}_{m}^{(k)}\boldsymbol{\pi}^{(k)}, (\mathbf{I}_{m}-\mathbf{S}_{m}^{(k)})\boldsymbol{\pi}^{(k)}\big)^{\top}
  \quad \mathrm{and} \quad \widetilde{\mathbf{T}}^{(k)}
  = \left(\begin{array}{cc}
  \boldsymbol{\Psi}^{(k)}\mathbf{T}^{(k)}  & \mathbf{0}\\
  \mathbf{0} & \mathbf{T}^{(k)} \\
  \end{array}\right).
\end{equation*}
By Theorem 8.2.7 in \cite{Rolski}, it is known that $Z=pZ_1+(1-p)Z_2$ is again $\mathrm{PH}(\widehat{\boldsymbol{\pi}},\widehat{\mathbf{T}})$,
\begin{equation}\label{eq:convex}
\begin{split}
&\widehat{\boldsymbol{\pi}}=
\begin{cases}
p\widetilde{\boldsymbol{\pi}}^{(1)}, & \textrm{on $E_1$}\\
(1-p)\widetilde{\boldsymbol{\pi}}^{(2)}, & \textrm{on $E_2$}
\end{cases}
\quad \mathrm{and} \quad \widehat{\mathbf{T}}
  = \left(\begin{array}{cc}
  \widetilde{T}^{(1)}  & \mathbf{0}\\
  \mathbf{0} & \widetilde{T}^{(2)} \\
  \end{array}\right),
\end{split}
\end{equation}
for $0<p<1.$ We deduce from (\ref{eq:convex}) that new transition matrix $\widehat{\mathbf{T}}$ preserves the block diagonal nature of intensity matrix for the underlying Markov mixture process $X$ (\ref{eq:mixture}) moving according to $\widetilde{\mathbf{T}}^{(1)}$ and $\widetilde{\mathbf{T}}^{(2)}$ on the transient states $E_1$ and $E_2$ with initial distribution $p\widetilde{\boldsymbol{\pi}}^{(1)}$ and $(1-p)\widetilde{\boldsymbol{\pi}}^{(2)}$ on each of which. Taking $p=1/2$, we conclude that $Z=\frac{1}{2}(Z_1+Z_2)$ is again $\mathrm{PH}(\widehat{\boldsymbol{\pi}},\widehat{\mathbf{T}})$. Again, by the use of Laplace transform, it follows from (\ref{eq:prodPH}) that $2Z\sim \mathrm{PH}(\widehat{\boldsymbol{\pi}},\widehat{\mathbf{T}}/2)$ moving at half speed in transient states and to absorbing states. As a result, $Z_1+Z_2\sim \mathrm{PH}(\widehat{\boldsymbol{\pi}},\widehat{\mathbf{T}}/2)$ proving closure property of the distribution under finite convolutions.

Adapting similar arguments of \cite{Rolski}, the above implies that for $n\geq 2$ and any vector probability $\mathbf{p}=(p_1,...,p_n)^{\top}$, with $0<p_k<1$ and $\sum_{k=1}^n p_k=1$, the mixtures $\sum_{k=1}^n p_kZ_k$ belongs to $\mathrm{PH}(\widehat{\boldsymbol{\pi}},\widehat{\mathbf{T}})$ on transient state $E=\bigcup_{k=1}^n E_k$ with
\begin{equation*}
\begin{split}
&\widehat{\boldsymbol{\pi}}_i=p_k\widetilde{\boldsymbol{\pi}}^{(k)}\; \textrm{if $i\in E_k$}
\quad \mathrm{and} \quad \widehat{\mathbf{T}}=
\left(\begin{array}{cccc}
  \widetilde{T}^{(1)} & \mathbf{0} & \dots  & \mathbf{0} \\
    \mathbf{0} & \widetilde{T}^{(2)} &  \dots  & \mathbf{0} \\
    \vdots & \vdots & \ddots & \vdots \\
    \mathbf{0} & \mathbf{0} & \dots  & \widetilde{T}^{(n)}\\
  \end{array}\right),
\end{split}
\end{equation*}
with $\vert E_k \vert=m$. The conclusion over finite convolutions is reached when we set $p_k=1/2$ for all $k=1,...,n$ resulting in $Z=1/2\sum_{k=1}^n Z_k\sim \mathrm{PH}(\widehat{\boldsymbol{\pi}},\widehat{\mathbf{T}})$. Again, by (\ref{eq:prodPH}) $2Z\sim \mathrm{PH}(\widehat{\boldsymbol{\pi}},\widehat{\mathbf{T}}/2)$ from which we deduce $\sum_{k=1}^n Z_k\sim \mathrm{PH}(\widehat{\boldsymbol{\pi}},\widehat{\mathbf{T}}/2).$ \exit
\end{proof}

\begin{theo}[Dense property]
The family of generalized phas-type distribution (\ref{eq:main}) (as well as (\ref{eq:main2})) forms a dense class of distributions on $\mathbb{R}_+$.
\end{theo}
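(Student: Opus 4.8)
The plan is to deduce the density of the generalized family directly from the classical density of ordinary phase-type distributions, using the containments already established in this section. The key observation is that the $\mathrm{GPH}$ family contains every ordinary phase-type distribution: setting $\boldsymbol{\Psi}=\mathbf{I}_m$ in (\ref{eq:main2}) reduces $\mathrm{GPH}(\boldsymbol{\pi},\mathbf{T},\boldsymbol{\Psi},\mathbf{S}_m)$ to $\mathrm{PH}(\boldsymbol{\pi},\mathbf{T})$, as recorded in the corollary immediately following (\ref{eq:main2}). Hence $\mathrm{PH}\subseteq\mathrm{GPH}$. Conversely, Lemma \ref{lem:equivalence} shows that every $\mathrm{GPH}(\boldsymbol{\pi},\mathbf{T},\boldsymbol{\Psi},\mathbf{S}_m)$ admits a representation $\mathrm{PH}(\widetilde{\boldsymbol{\pi}},\widetilde{\mathbf{T}})$, so in fact the two families coincide as sets of distributions on $\mathbb{R}_+$, differing only in parametrization.

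With this in hand, density is immediate from the classical fact that the ordinary phase-type distributions are dense, with respect to weak convergence, in the set of all probability distributions on $[0,\infty)$; see \cite{Rolski} and \cite{Asmussen2003}. Since a family containing a dense subclass is itself dense (its weak closure contains the weak closure of the subclass, which is everything), and $\mathrm{PH}\subseteq\mathrm{GPH}$, the family $\mathrm{GPH}$ is dense.

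If one prefers a self-contained argument mirroring the one in \cite{Rolski}, I would instead proceed constructively and entirely within the $\mathrm{GPH}$ family. First recall that any probability distribution $F$ on $[0,\infty)$ is a weak limit of finitely supported distributions $\sum_{j} p_j \delta_{a_j}$, so it suffices to approximate each point mass $\delta_a$ with $a>0$ by members of $\mathrm{GPH}$ and then to close up under finite mixtures. The point mass $\delta_a$ is the weak limit of the Erlang distributions $\mathrm{Erl}(n,n/a)$ as $n\to\infty$, since these have mean $a$ and variance $a^2/n\to 0$ and hence concentrate at $a$. By Example \ref{Ex:Ex2} each Erlang distribution is a member of the family, and by Theorem \ref{theo:closure} the family is closed under finite convex mixtures; combining these two facts shows that every finite mixture of point masses, and therefore an arbitrary $F$, can be approximated arbitrarily well by $\mathrm{GPH}$ distributions.

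I do not expect a genuine obstacle here, since the heavy lifting is carried either by the classical density theorem together with the containment $\mathrm{PH}\subseteq\mathrm{GPH}$, or by Example \ref{Ex:Ex2} together with the closure result of Theorem \ref{theo:closure}. The only point requiring care is to fix at the outset the mode of convergence under which \emph{dense} is meant, namely weak convergence (convergence in distribution), and to invoke the classical statement for that same topology; once this is pinned down, the conclusion follows without further work.
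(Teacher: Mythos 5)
Your proposal is correct, but it reaches the conclusion by a different route than the paper. The paper's proof is fully constructive and self-contained: it adapts the Tijms/Rolski argument by forming the explicit infinite Erlang mixture $F_n(t)=\sum_{j\geq 1}\bigl(F(j/n)-F((j-1)/n)\bigr)F(t;j,n)$, recognizes $F_n(t)=\mathbb{E}\{F(X_n(t)/n)\}$ for a Poisson variable $X_n(t)$ with mean $nt$, proves pointwise convergence at continuity points via Chebyshev's inequality, truncates to a finite mixture, and only then invokes Theorem \ref{theo:closure} and Example \ref{Ex:Ex1}. Your first argument short-circuits all of this: since setting $\boldsymbol{\Psi}=\mathbf{I}_m$ shows $\mathrm{PH}\subseteq\mathrm{GPH}$ (and Lemma \ref{lem:equivalence} gives the reverse containment, so the two families coincide as sets of distributions), density of $\mathrm{GPH}$ is inherited from the classical density of $\mathrm{PH}$ in \cite{Rolski} and \cite{Asmussen2003}. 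This is logically valid and arguably the cleanest way to see the result, though it buys brevity at the cost of outsourcing the analytic content to the cited theorem, whereas the paper's proof re-derives that content inside the $\mathrm{GPH}$ framework with an explicit approximating sequence and error bound. Your second, constructive argument is closer in spirit to the paper — both ultimately rest on Erlang distributions being $\mathrm{GPH}$ (Examples \ref{Ex:Ex1}, \ref{Ex:Ex2}) plus closure under finite mixtures (Theorem \ref{theo:closure}) — but you replace the Poissonization/Chebyshev step with the elementary two-step scheme (discretize $F$ by point masses, then concentrate $\mathrm{Erl}(n,n/a)$ at each atom, using mean $a$ and variance $a^2/n\to 0$). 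Two minor points of care: the paper explicitly handles a possible atom of $F$ at zero by adjoining $F(0)\delta_0$, while you restrict to point masses $\delta_a$ with $a>0$; this is harmless since finitely supported distributions on $(0,\infty)$ are still weakly dense (approximate $\delta_0$ by $\delta_{1/n}$ or by exponentials with rate tending to infinity), but it deserves a sentence. You also correctly pin down that density is meant with respect to weak convergence, which matches the paper's pointwise-at-continuity-points formulation.
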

\begin{proof}
The proof follows from adapting the approach employed in \cite{Tijms} and \cite{Rolski}. Let $F$ be any right continuous distribution on $\mathbb{R}_+$ with $F(0)=0$. For fixed $t\geq 0$, assume that $F$ is continuous at $t$. Consider the following approximation of $F$:
\begin{equation}\label{eq:conv}
F_n(t)=\sum_{j=1}^{\infty} p_j(n) F(t;j,n),
\end{equation}
where $F(t;j,n)$ is the Erlang distribution (\ref{eq:erlangdist}) with $\beta_1=n$ and $p_j(n)$ is
\begin{equation*}
p_j(n)= F(j/n)- F((j-1)/n).
\end{equation*}
It is clear that $\sum_{j=1}^{\infty} p_j(n)=1.$ Thus, (\ref{eq:conv}) can be seen as infinite convex mixtures of Erlang distribution.
On account of (\ref{eq:erlangdist}), we have following (\ref{eq:conv})
\begin{equation*}
\begin{split}
F_n(t)=&\sum_{j=1}^{\infty}\big(F(j/n)- F((j-1)/n)\big)\sum_{k=j}^{\infty} e^{-nt}\frac{(nt)^k}{k!}\\
=&\sum_{k=0}^{\infty}e^{-nt}\frac{(nt)^k}{k!} \sum_{j=1}^k  \big(F(j/n)- F((j-1)/n)\big) \\
=& \sum_{k=0}^{\infty}e^{-nt}\frac{(nt)^k}{k!} F(k/n) \\
=& \mathbb{E}\big\{F\big(X_n(t)/n\big)\big\},
\end{split}
\end{equation*}
where $X_n(t)$ is a Poisson random variable with intensity $nt$. Since $F$ is bounded function, i.e., $\sup_{t}F(t)\leq 1$, and that $F$ is continuous at $t$, for each $\epsilon>0$ we choose $\delta>0$ such that $\vert F(x)-F(t)\vert \leq \epsilon/2$ whenever $\vert x-t \vert \leq \delta$. By doing so,
\begin{equation*}
\begin{split}
\mathbb{E}\big\{\vert F(X_n(t)/n)- F(t) \vert \big\} = & \mathbb{E}\big\{\vert F(X_n(t)/n)- F(t) \vert; \vert X_n(t)/n - t \vert \leq \delta \big\}\\
& + \mathbb{E}\big\{\vert F(X_n(t)/n)- F(t) \vert; \vert X_n(t)/n - t \vert > \delta \big\}\\
\leq & \frac{\epsilon}{2} + 2 \mathbb{P}\{\vert X_n(t)/n -t\vert >\delta\}\\
\leq & \frac{\epsilon}{2} +  \frac{2t}{\delta^2 n},
\end{split}
\end{equation*}
where we have applied Chebyshev's inequality in the last inequality and the fact that $\mathbb{E}\{X_n(t)/n\}=t$. Thus, by choosing $n$ big enough so that
$ \frac{2t}{\delta^2 n}\leq \epsilon/2$, $$\vert F_n(t)- F(t) \vert \big\} \leq \epsilon,$$ i.e., $\lim_{n\rightarrow \infty} F_n(t)=F(t)$. Next, consider the truncated version of (\ref{eq:conv}): $\widetilde{F}_n(t):=\sum_{j=1}^{n^2} p_j(n) F(t;j,n)$. By the above and monotone convergence theorem, we have $\lim_{n\rightarrow \infty} \vert \widetilde{F}_n(t) - F_n(t) \vert =0$ for each $t\geq 0$. The proof is complete by recalling Theorem \ref{theo:closure} and Example \ref{Ex:Ex1} that a mixture of Erlang distributions belongs to the class of generalized phase-type distribution. The case $F(0)\neq 0$ is handled by adding up $F(0)\delta_0$ in (\ref{eq:conv}), where $\delta_0$ is Dirac measure at $t=0$. \exit\\

\end{proof}

\subsection{Forward intensity}

In survival and event history analysis (see e.g.,\cite{Aalen2008}), it has been common practice to associate an intensity function $\lambda(t)$ to the stopping time $\tau$ (\ref{eq:DefTime}). This intensity determines rate of occurrence of an event within period of time $(t,t+dt]$ given information set $\mathcal{I}_{i,t}$ available at time $t$. Quite recently, Duan et al. \cite{Duan} introduced the notion of forward intensity which to some extent plays similar roles as forward rate of interest in finance. However, as stated in \cite{Duan} the forward intensity can only be calculated with the exact knowledge of the  underlying Markov process.

Motivated by this work, we introduce for any $s\geq t$ the forward intensity:
\begin{equation}\label{eq:intensity}
\lambda_i(t,s)ds=\mathbb{P}\{ \tau \leq s+ds \big \vert \tau > s, \mathcal{I}_{i,t}\},
\end{equation}
which corresponds to the usual intensity $\lambda_i(t)$ (see \cite{Aalen2008}) when we set $s=t$. Following (\ref{eq:intensity}), the forward intensity $\lambda_i(t,s)$ determines the rate of occurrence of future event in $(s,s+ds]$ based on the information $\mathcal{I}_{i,t}$ available at time $t$.

In the theorem below we give an explicit expression for the forward intensity.
\begin{theo}\label{theo:intensity}
For a given $s\geq t\geq 0$ and information $\mathcal{I}_{i,t}$, with any $i\in E$,
\begin{equation}\label{eq:intensity2}
\lambda_i(t,s)=\frac{\mathbf{e}_i^{\top}\big(\mathbf{S}_m(t)e^{\boldsymbol{\Psi}\mathbf{T}(s-t)}\boldsymbol{\Psi}
+ (\mathbf{I}_m-\mathbf{S}_m(t))e^{\mathbf{T}(s-t)}\big)\boldsymbol{\delta}}{\mathbf{e}_i^{\top}\big(\mathbf{S}_m(t) e^{\boldsymbol{\Psi}\mathbf{T}(s-t)}
+ (\mathbf{I}_m-\mathbf{S}_m(t)) e^{\mathbf{T}(s-t)}\big)\mathbf{1}_m},
\end{equation}
based on which the phase-type distribution $\overline{F}_i(t,s)$ and $f_i(t,s)$ (\ref{eq:main}) are given by
\begin{equation}\label{eq:survival}
\begin{split}
\overline{F}_i(t,s) =&\exp\Big(-\int_t^s \lambda_{i}(t,u) du \Big), \\
f_i(t,s) =& \lambda_i(t,s)\exp\Big(-\int_t^s \lambda_{i}(t,u) du \Big).
\end{split}
\end{equation}
\end{theo}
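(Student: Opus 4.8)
The plan is to reduce the statement to the elementary hazard-rate identity of survival analysis, using the conditional distributions already supplied by Theorem \ref{theo:mainPH}; throughout I treat $t$ as fixed and regard $\overline{F}_i(t,\cdot)$ and $f_i(t,\cdot)$ as functions of the running time $s\geq t$. First I would unfold the definition (\ref{eq:intensity}) by writing the conditional probability as a ratio,
\begin{equation*}
\mathbb{P}\{\tau \leq s+ds \mid \tau > s, \mathcal{I}_{i,t}\}
= \frac{\mathbb{P}\{s < \tau \leq s+ds \mid \mathcal{I}_{i,t}\}}{\mathbb{P}\{\tau > s \mid \mathcal{I}_{i,t}\}},
\end{equation*}
and recognizing the numerator as $-d\overline{F}_i(t,s)=f_i(t,s)\,ds$ and the denominator as $\overline{F}_i(t,s)$. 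This yields the fundamental identity $\lambda_i(t,s)=f_i(t,s)/\overline{F}_i(t,s)$, and substituting the two explicit expressions from (\ref{eq:main}) into this ratio gives (\ref{eq:intensity2}) immediately, so the first assertion is essentially definitional.

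For the integral representations (\ref{eq:survival}), the key observation is that the identity just obtained is a logarithmic derivative. Indeed $f_i(t,s)=-\tfrac{\partial}{\partial s}\overline{F}_i(t,s)$, which one verifies by differentiating the first line of (\ref{eq:main}) in $s$, commuting each generator past its exponential, and applying $\mathbf{T}\mathbf{1}_m=-\boldsymbol{\delta}$ from (\ref{eq:Tdone}) together with $\boldsymbol{\Psi}\mathbf{T}\mathbf{1}_m=-\boldsymbol{\Psi}\boldsymbol{\delta}$. Hence $\lambda_i(t,s)=-\tfrac{\partial}{\partial s}\log\overline{F}_i(t,s)$, and integrating over $u\in[t,s]$ and exponentiating gives
\begin{equation*}
\overline{F}_i(t,s)=\overline{F}_i(t,t)\,\exp\Big(-\int_t^s \lambda_i(t,u)\,du\Big).
\end{equation*}
The density formula then follows by multiplying through by $\lambda_i(t,s)=f_i(t,s)/\overline{F}_i(t,s)$.

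The one point that deserves care — and the step I would single out as the crux, modest though it is — is the boundary value $\overline{F}_i(t,t)=1$ needed to discard the prefactor. Setting $s=t$ in the first line of (\ref{eq:main}) collapses both matrix exponentials to $\mathbf{I}_m$, leaving $\overline{F}_i(t,t)=\mathbf{e}_i^{\top}\big(\mathbf{S}_m(t)+(\mathbf{I}_m-\mathbf{S}_m(t))\big)\mathbf{1}_m=\mathbf{e}_i^{\top}\mathbf{1}_m$, which equals $1$ precisely because $i\in E=\{1,\dots,m\}$. I would also record the implicit regularity needed for the logarithm and integral to be well defined, namely that $\overline{F}_i(t,\cdot)$ remains strictly positive on $[t,s]$; this is guaranteed by the nonsingularity (negative definiteness) of $\mathbf{T}$ and $\boldsymbol{\Psi}\mathbf{T}$. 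No genuine obstacle arises here: the content is a faithful transcription of the classical hazard-rate/survival-function correspondence to the present forward, path-dependent setting, with the novelty residing entirely in the explicit matrix form carried over from Theorem \ref{theo:mainPH}.
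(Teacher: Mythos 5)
Your proposal is correct and follows essentially the same route as the paper: unfold the definition (\ref{eq:intensity}) into the ratio $-\frac{\partial_s \overline{F}_i(t,s)}{\overline{F}_i(t,s)}$, read off (\ref{eq:intensity2}) by substituting (\ref{eq:main}), and solve the resulting logarithmic-derivative equation with the boundary condition $\overline{F}_i(t,t)=1$ to obtain (\ref{eq:survival}). Your explicit verification of that boundary condition and of the positivity of $\overline{F}_i(t,\cdot)$ merely makes precise what the paper leaves implicit.
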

\begin{proof}
As $\tau$ is absolutely continuous (by (\ref{eq:main})), we have following (\ref{eq:intensity}) that
\begin{equation}\label{eq:derivation}
\begin{split}
\lambda_i(t,s)ds =&\mathbb{P}\{ \tau \leq s+ds \big \vert \tau > s, \mathcal{I}_{i,t}\} \\
=&\frac{\mathbb{P}\{ s<\tau \leq s+ds \vert \mathcal{I}_{i,t}\}}{\mathbb{P}\{\tau > s \vert \mathcal{I}_{i,t}\}} \\
=&-\frac{\mathbb{P}\{\tau>s+ds \vert \mathcal{I}_{i,t}\} - \mathbb{P}\{\tau>s \vert \mathcal{I}_{i,t}\}}{\mathbb{P}\{\tau>s \vert \mathcal{I}_{i,t}\}} \\
=&-\frac{\frac{\partial}{\partial s} \mathbb{P}\{\tau>s \vert \mathcal{I}_{i,t}\}}{\mathbb{P}\{\tau>s \vert \mathcal{I}_{i,t}\}}ds \\
=& -\frac{\frac{\partial}{\partial s} \overline{F}_i(t,s)}{\overline{F}_i(t,s)}ds.
\end{split}
\end{equation}
Solving the equation on last equality with condition $\overline{F}_{i}(t,t)=1$ gives (\ref{eq:survival}).\exit\\
\end{proof}

However, when we set $s=t$ in (\ref{eq:intensity2}), we obtain intensity rate $\lambda_i(t)$ of occurrence of an event within infinitesimal interval $(t,t+dt]$. This intensity is influenced by the available past information $\mathcal{I}_{i,t}$ of $X$ and is defined by
\begin{equation*}
\lambda_i(t)dt=\mathbb{P}\{\tau \leq t+dt \vert \tau> t, \mathcal{I}_{i,t}\}.
 \end{equation*}
To distinguish it from $\lambda(t,s)$ (\ref{eq:intensity2}), we refer to $\lambda(t)$ as instantaneous intensity.
\begin{cor}
Given $i\in E$ and past information $\mathcal{I}_{i,t}$, we have for any $t\geq 0$,
\begin{equation}\label{eq:intensity3}
\lambda_i(t)=\mathbf{e}_i^{\top}\big( \mathbf{I}_m + \mathbf{S}_m(t) (\boldsymbol{\Psi} -\mathbf{I}_m) \big)  \boldsymbol{\delta}.
\end{equation}
\end{cor}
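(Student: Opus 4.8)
The plan is to obtain the instantaneous intensity $\lambda_i(t)$ as the boundary value $\lambda_i(t,t)$ of the forward intensity already computed in Theorem \ref{theo:intensity}. Indeed, the defining relation $\lambda_i(t)\,dt=\mathbb{P}\{\tau\leq t+dt\mid\tau>t,\mathcal{I}_{i,t}\}$ is precisely (\ref{eq:intensity}) evaluated at $s=t$, so it suffices to substitute $s=t$ into the explicit ratio (\ref{eq:intensity2}) and simplify. No new probabilistic argument is needed; the work is purely algebraic.

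First I would note that $s=t$ forces $s-t=0$, so both matrix exponentials reduce to the identity, $e^{\boldsymbol{\Psi}\mathbf{T}\cdot 0}=e^{\mathbf{T}\cdot 0}=\mathbf{I}_m$, directly from the series definition (\ref{eq:expm}). The denominator of (\ref{eq:intensity2}) then collapses to
\begin{equation*}
\mathbf{e}_i^{\top}\big(\mathbf{S}_m(t)+(\mathbf{I}_m-\mathbf{S}_m(t))\big)\mathbf{1}_m=\mathbf{e}_i^{\top}\mathbf{I}_m\mathbf{1}_m=\mathbf{e}_i^{\top}\mathbf{1}_m=1,
\end{equation*}
using that the two diagonal matrices sum to the identity and that $\mathbf{e}_i^{\top}\mathbf{1}_m=1$ for $i\in E$. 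Hence the whole expression reduces to its numerator, which after the same substitution is $\mathbf{e}_i^{\top}\big(\mathbf{S}_m(t)\boldsymbol{\Psi}+(\mathbf{I}_m-\mathbf{S}_m(t))\big)\boldsymbol{\delta}$. A one-line rearrangement, $\mathbf{S}_m(t)\boldsymbol{\Psi}+\mathbf{I}_m-\mathbf{S}_m(t)=\mathbf{I}_m+\mathbf{S}_m(t)(\boldsymbol{\Psi}-\mathbf{I}_m)$, then yields exactly (\ref{eq:intensity3}).

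There is essentially no obstacle in this computation; the only point worth flagging is the evaluation of the denominator to exactly $1$, which is what makes the instantaneous intensity equal to the numerator alone rather than a genuine quotient. As a cross-check, I would note the same answer follows from the survival/density pair (\ref{eq:main}): since $\overline{F}_i(t,t)=\mathbf{e}_i^{\top}\mathbf{1}_m=1$, the limit $\lambda_i(t)=\lim_{s\downarrow t}f_i(t,s)/\overline{F}_i(t,s)$ equals $f_i(t,t)=\mathbf{e}_i^{\top}\big(\mathbf{I}_m+\mathbf{S}_m(t)(\boldsymbol{\Psi}-\mathbf{I}_m)\big)\boldsymbol{\delta}$, confirming (\ref{eq:intensity3}) by a second route.
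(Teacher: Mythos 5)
Your proposal is correct and follows exactly the route the paper intends: the text preceding the corollary states that the instantaneous intensity is obtained by setting $s=t$ in (\ref{eq:intensity2}), which is precisely your substitution, with the denominator collapsing to $\mathbf{e}_i^{\top}\mathbf{1}_m=1$ and the numerator rearranging to $\mathbf{e}_i^{\top}\big(\mathbf{I}_m+\mathbf{S}_m(t)(\boldsymbol{\Psi}-\mathbf{I}_m)\big)\boldsymbol{\delta}$. Your cross-check via $f_i(t,t)/\overline{F}_i(t,t)$ from (\ref{eq:main}) is a harmless bonus but not a different method.
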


As it plays a central role in survival analysis, it is interesting to see that the inclusion of past information of $X$, particularly when the information $\mathcal{I}_{i,t}$ is fully available, on the intensity $\lambda$ is now made possible under the mixture model.

Observe that unlike (\ref{eq:alpha}) the two intensity functions (\ref{eq:intensity2}) and (\ref{eq:intensity3}) are both dependence on past realization $\mathcal{I}_{i,t}$ of $X$. Below we give the baseline intensity function (\ref{eq:alpha}) for our generalized phase-type distributions (\ref{eq:main2}).
\begin{cor}
For given any $t>0$, the baseline function of $\tau$ is given by
\begin{equation}\label{eq:baseline}
\alpha(t)=\frac{\boldsymbol{\pi}^{\top}\big(\mathbf{S}_m e^{\boldsymbol{\Psi}\mathbf{T}t}\boldsymbol{\Psi}
+ \big(\mathbf{I}_m-\mathbf{S}_m\big) e^{\mathbf{T}t}\big)\boldsymbol{\delta}}{\boldsymbol{\pi}^{\top}\big(\mathbf{S}_m e^{\boldsymbol{\Psi}\mathbf{T}t}
+ \big(\mathbf{I}_m-\mathbf{S}_m\big) e^{\mathbf{T}t}\big)\mathbf{1}_m},
\end{equation}
based on which the distribution function $\overline{F}(t)$ and $f(t)$ (\ref{eq:main2}) are given by
\begin{equation}\label{eq:survival2}
\begin{split}
\overline{F}(t) =& \exp\Big(-\int_0^t \alpha(u) du \Big) \\
f(t) =& \alpha(t) \exp\Big(-\int_0^t \alpha(u) du \Big).
\end{split}
\end{equation}
\end{cor}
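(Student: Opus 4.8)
The plan is to mirror the derivation of Theorem~\ref{theo:intensity}, specialized to the unconditional setting in which we condition only on the initial law $\boldsymbol{\pi}$ rather than on the full information set $\mathcal{I}_{i,t}$. The baseline function is defined by (\ref{eq:alpha}), and its standard solution (\ref{eq:alphasol}) identifies it with the hazard rate $\alpha(t)=f(t)/\overline{F}(t)$. First I would record that, exactly as in the computation (\ref{eq:derivation}) but with the conditional survival function replaced by its unconditional counterpart, one has
\[
\alpha(t)=-\frac{\overline{F}^{\prime}(t)}{\overline{F}(t)},
\]
using that $\tau$ is absolutely continuous by (\ref{eq:main2}) and that $f(t)=-\overline{F}^{\prime}(t)$. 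Substituting the explicit expressions for $f(t)$ and $\overline{F}(t)$ from (\ref{eq:main2}) into $\alpha(t)=f(t)/\overline{F}(t)$ then yields the displayed formula for $\alpha(t)$ directly; this step amounts to reading off the two lines of (\ref{eq:main2}).

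Second, I would recover the representations (\ref{eq:survival2}). Rewriting the identity above as $\alpha(t)=-\frac{d}{dt}\log\overline{F}(t)$ and integrating over $(0,t)$ gives $-\int_0^t \alpha(u)\,du=\log\overline{F}(t)-\log\overline{F}(0)$. The initial condition $\overline{F}(0)=1$ follows from (\ref{eq:main2}), since
\[
\overline{F}(0)=\boldsymbol{\pi}^{\top}\big(\mathbf{S}_m+(\mathbf{I}_m-\mathbf{S}_m)\big)\mathbf{1}_m=\boldsymbol{\pi}^{\top}\mathbf{1}_m=1.
\]
Exponentiating gives $\overline{F}(t)=\exp\big(-\int_0^t\alpha(u)\,du\big)$, and then $f(t)=\alpha(t)\overline{F}(t)$ produces the second line of (\ref{eq:survival2}).

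I do not expect a genuine obstacle here: the corollary is simply the unconditional counterpart of Theorem~\ref{theo:intensity}, and it follows from the same hazard-rate bookkeeping together with the already-established closed forms (\ref{eq:main2}). The only points needing a routine check are the absolute continuity of $\overline{F}$, so that the fundamental theorem of calculus applies in passing from $\alpha(t)=-\frac{d}{dt}\log\overline{F}(t)$ to the integral representation, and the normalization $\overline{F}(0)=1$; both are immediate from (\ref{eq:main2}) and the standing assumption that $\boldsymbol{\pi}$ places unit mass on the transient states $E$.
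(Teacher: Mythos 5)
Your proposal is correct and follows essentially the same route as the paper: the paper's proof likewise invokes the hazard-rate computation of Theorem \ref{theo:intensity} (i.e., the derivation (\ref{eq:derivation})) in the unconditional setting to obtain $\alpha(t)=f(t)/\overline{F}(t)$ and then reads off the formula from (\ref{eq:main2}). Your additional checks — absolute continuity, the normalization $\overline{F}(0)=\boldsymbol{\pi}^{\top}\mathbf{1}_m=1$, and the integration of $-\frac{d}{dt}\log\overline{F}(t)$ — merely make explicit what the paper leaves implicit.
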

\begin{proof}
Following similar arguments for the proof of Theorem \ref{theo:intensity}, it can be shown that $\alpha(t)=f(t)/\overline{F}(t)$, where $f(t)$ and $\overline{F}(t)$ are given by (\ref{eq:main2}). \exit\\
\end{proof}

Below we give the long-run estimate of intensity (\ref{eq:intensity2}), (\ref{eq:intensity3}) and (\ref{eq:baseline}).

\begin{prop}
Let $\mathbf{T}$ ($\boldsymbol{\Psi}\mathbf{T}$) have distinct eigenvalues $\varphi_j^{(1)}$ ($\varphi_j^{(2)}$), $j=1,...,m$ with $\varphi_{p_k}^{(k)}=\max\{\varphi_1^{(k)},...,\varphi_m^{(k)}\}$ and $p_k=\textrm{argmax}_j\{\varphi_j^{(k)}\}$, $k=1,2$. Then,
\begin{equation}\label{eq:limit2}
\begin{split}
& \lim_{s\rightarrow \infty} \lambda_i(t,s)=
\begin{cases}
\frac{\mathbf{e}_i^{\top}\mathbf{S}_m(t)
L_{p_2}(\boldsymbol{\Psi}\mathbf{T})\boldsymbol{\Psi}\boldsymbol{\delta}}
{ \mathbf{e}_i^{\top}\mathbf{S}_m(t) L_{p_2}(\boldsymbol{\Psi}\mathbf{T})\mathbf{1}_m},  & \varphi_{p_2}^{(2)}>\varphi_{p_1}^{(1)} \\ \\
\frac{\mathbf{e}_i^{\top}(\mathbf{I}_m-\mathbf{S}_m(t) ) L_{p_1}(\mathbf{T})\boldsymbol{\delta}}
{ \mathbf{e}_i^{\top}( \mathbf{I}_m-\mathbf{S}_m(t) )L_{p_1}(\mathbf{T})\mathbf{1}_m}, & \varphi_{p_2}^{(2)}<\varphi_{p_1}^{(1)} \\ \\
\frac{ \mathbf{e}_i^{\top} \big( \mathbf{S}_m(t) L_{p_2}(\boldsymbol{\Psi}\mathbf{T})\boldsymbol{\Psi}
+(\mathbf{I}_m-\mathbf{S}_m(t) ) L_{p_1}(\mathbf{T}) \big) \boldsymbol{\delta} }
{  \mathbf{e}_i^{\top} \big( \mathbf{S}_m(t) L_{p_2}(\boldsymbol{\Psi}\mathbf{T})  + ( \mathbf{I}_m-\mathbf{S}_m(t) )L_{p_1}(\mathbf{T}) \big)\mathbf{1}_m   },
& \varphi_{p_2}^{(2)} = \varphi_{p_1}^{(1)}
\end{cases}
\end{split}
\end{equation}
$\forall t\geq 0$ and $i\in E$, where $L_{p_1}(\mathbf{T})$ is the Lagrange interpolation (\ref{eq:apostol2})-(\ref{eq:apostol}).
\end{prop}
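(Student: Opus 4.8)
The plan is to reduce this to essentially the same spectral-decomposition argument used in the proof of Proposition \ref{prop:limit}, now applied to the ratio defining the forward intensity in (\ref{eq:intensity2}). First I would substitute the Apostol representation (\ref{eq:apostol}) for both matrix exponentials appearing in (\ref{eq:intensity2}):
\begin{equation*}
e^{\mathbf{T}(s-t)}=\sum_{k=1}^m e^{\varphi_k^{(1)}(s-t)}L_k(\mathbf{T}), \qquad e^{\boldsymbol{\Psi}\mathbf{T}(s-t)}=\sum_{k=1}^m e^{\varphi_k^{(2)}(s-t)}L_k(\boldsymbol{\Psi}\mathbf{T}).
\end{equation*}
Inserting these into the numerator and denominator rewrites $\lambda_i(t,s)$ as a ratio of two finite sums of terms of the form $e^{\varphi_k^{(1)}(s-t)}$ and $e^{\varphi_k^{(2)}(s-t)}$, each weighted by fixed vectors and matrices built from $\mathbf{S}_m(t)$, $\boldsymbol{\Psi}$, $\boldsymbol{\delta}$ and $\mathbf{1}_m$.

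Next I would recall, exactly as in the proof of Proposition \ref{prop:limit} via O'cinneide \cite{Ocinneide}, that since $\mathbf{T}$ and $\boldsymbol{\Psi}\mathbf{T}$ are negative definite, their eigenvalues have strictly negative real part, and within each spectrum there is a unique dominant real eigenvalue $\varphi_{p_1}^{(1)}$ (resp.\ $\varphi_{p_2}^{(2)}$) carrying the largest real part. Consequently, for every $j\neq p_k$ one has $\mathrm{Re}(\varphi_j^{(k)}-\varphi_{p_k}^{(k)})<0$, so that $e^{(\varphi_j^{(k)}-\varphi_{p_k}^{(k)})(s-t)}\to 0$ as $s\to\infty$. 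This is the mechanism that kills every non-dominant Lagrange term after an appropriate exponential is factored out.

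The argument then splits into the three cases of the statement according to which of $\varphi_{p_1}^{(1)}$ and $\varphi_{p_2}^{(2)}$ is larger. In the case $\varphi_{p_2}^{(2)}>\varphi_{p_1}^{(1)}$ I would factor $e^{\varphi_{p_2}^{(2)}(s-t)}$ out of both numerator and denominator; every surviving term then carries a factor $e^{(\varphi_k^{(\cdot)}-\varphi_{p_2}^{(2)})(s-t)}$ tending to $0$, except the single term $k=p_2$ coming from the $\boldsymbol{\Psi}\mathbf{T}$ exponential, whence the limit is the first ratio in (\ref{eq:limit2}). The case $\varphi_{p_2}^{(2)}<\varphi_{p_1}^{(1)}$ is symmetric, factoring $e^{\varphi_{p_1}^{(1)}(s-t)}$ instead and retaining only the $\mathbf{T}$-term $k=p_1$, giving the second ratio. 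In the borderline case $\varphi_{p_2}^{(2)}=\varphi_{p_1}^{(1)}$ both dominant terms survive after factoring the common exponential, and since that exponential cancels between numerator and denominator one obtains precisely the combined third ratio of (\ref{eq:limit2}).

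The main obstacle is not the asymptotics itself but verifying that the limiting denominator is nonzero, so that each ratio is genuinely well defined. This reduces to checking that the dominant Lagrange coefficient $L_{p_k}(\cdot)$, which by the simplicity of the dominant eigenvalue is the rank-one spectral projection onto its eigenspace, does not annihilate the relevant vector after weighting by $\mathbf{S}_m(t)$ or $\mathbf{I}_m-\mathbf{S}_m(t)$ and pairing with $\mathbf{e}_i^{\top}$ and $\mathbf{1}_m$. The Perron-type positivity of the dominant eigenprojection of a negative-definite intensity matrix is what guarantees this, and it is the one point deserving explicit care rather than being absorbed into the routine limit computation.
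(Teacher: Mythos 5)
Your proposal follows essentially the same route as the paper: the paper's proof simply invokes the argument of Proposition \ref{prop:limit}, i.e., substitute Apostol's Lagrange-interpolation formula (\ref{eq:apostol}) for both matrix exponentials in (\ref{eq:intensity2}), factor out the dominant exponential, and let the non-dominant terms vanish case by case. Your additional concern about the limiting denominator being nonzero (via positivity of the dominant rank-one eigenprojection) is a point of rigor the paper silently skips, so it strengthens rather than deviates from the published argument.
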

\begin{proof}
Since generator matrices $\mathbf{T}$ and $\boldsymbol{\Psi}\mathbf{T}$ are both negative definite, the proof follows from applying similar arguments to the proof of Proposition \ref{prop:limit}. \exit\\
\end{proof}

\begin{cor}
Using (\ref{eq:limit}), the long-run rate of $\lambda_i(t)$ (\ref{eq:intensity3}) is given by
\begin{equation*}
\lim_{t \rightarrow \infty} \lambda_i(t)= \mathbf{e}_i^{\top}\big( \mathbf{I}_m + \mathbf{S}_m(\infty) (\boldsymbol{\Psi} -\mathbf{I}_m) \big)  \boldsymbol{\delta},
\end{equation*}
where $\mathbf{S}_m(\infty)$ is diagonal matrix whose $i$th entry is given by (\ref{eq:limit}). The long-run rate of $\alpha(t)$ (\ref{eq:baseline}) is given by inserting $t=0$ in (\ref{eq:limit2}) and replacing $\mathbf{e}_i$ by $\boldsymbol{\pi}$:
\begin{equation}\label{eq:limit3}
\begin{split}
& \lim_{t\rightarrow \infty} \alpha(t)=
\begin{cases}
\frac{ \boldsymbol{\pi}^{\top}\mathbf{S}_m L_{p_2}(\boldsymbol{\Psi}\mathbf{T})\boldsymbol{\Psi}\boldsymbol{\delta}}
{ \boldsymbol{\pi}^{\top}\mathbf{S}_m L_{p_2}(\boldsymbol{\Psi}\mathbf{T})\mathbf{1}_m}, & \varphi_{p_2}^{(2)}>\varphi_{p_1}^{(1)}\\ \\
\frac{\boldsymbol{\pi}^{\top}(\mathbf{I}_m-\mathbf{S}_m ) L_{p_1}(\mathbf{T})\boldsymbol{\delta}}
{ \boldsymbol{\pi}^{\top}( \mathbf{I}_m-\mathbf{S}_m )L_{p_1}(\mathbf{T})\mathbf{1}_m},  & \varphi_{p_2}^{(2)}<\varphi_{p_1}^{(1)} \\ \\
\frac{ \boldsymbol{\pi}^{\top} \big( \mathbf{S}_m L_{p_2}(\boldsymbol{\Psi}\mathbf{T})\boldsymbol{\Psi}
+(\mathbf{I}_m-\mathbf{S}_m  ) L_{p_1}(\mathbf{T}) \big) \boldsymbol{\delta} }
{  \boldsymbol{\pi}^{\top} \big( \mathbf{S}_m  L_{p_2}(\boldsymbol{\Psi}\mathbf{T})  + ( \mathbf{I}_m-\mathbf{S}_m  )L_{p_1}(\mathbf{T}) \big)\mathbf{1}_m   },
& \varphi_{p_2}^{(2)} = \varphi_{p_1}^{(1)}
\end{cases}
\end{split}
\end{equation}
\end{cor}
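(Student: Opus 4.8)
The plan is to treat the two assertions separately, drawing in each case on the asymptotics already established in Proposition \ref{prop:limit} and in the Proposition immediately preceding this Corollary (equation \eqref{eq:limit2}).

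For the instantaneous intensity $\lambda_i(t)$, I would observe that \eqref{eq:intensity3} expresses $\lambda_i(t)$ as an affine --- hence continuous --- function of the diagonal entries of $\mathbf{S}_m(t)$: namely $\lambda_i(t)=\mathbf{e}_i^{\top}\boldsymbol{\delta} + \mathbf{e}_i^{\top}\mathbf{S}_m(t)(\boldsymbol{\Psi}-\mathbf{I}_m)\boldsymbol{\delta}$. Since each diagonal entry $s_i(t)$ converges as $t\to\infty$ to the limit recorded in \eqref{eq:limit} of Proposition \ref{prop:limit}, the matrix $\mathbf{S}_m(t)$ converges entrywise to the diagonal matrix $\mathbf{S}_m(\infty)$ whose entries are those limits. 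Passing $t\to\infty$ through the continuous affine map then yields $\lim_{t\to\infty}\lambda_i(t)=\mathbf{e}_i^{\top}\big(\mathbf{I}_m+\mathbf{S}_m(\infty)(\boldsymbol{\Psi}-\mathbf{I}_m)\big)\boldsymbol{\delta}$, which is the first claim.

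For the baseline function $\alpha(t)$, I would note that the expression \eqref{eq:baseline} is structurally identical to the forward intensity $\lambda_i(t,s)$ in \eqref{eq:intensity2} evaluated at $t=0$, with the deterministic row vector $\mathbf{e}_i$ replaced by $\boldsymbol{\pi}$, the constant matrix $\mathbf{S}_m=\mathbf{S}_m(0)$ in place of $\mathbf{S}_m(t)$, and the elapsed time $s-t$ replaced by $t$. Consequently the limit $t\to\infty$ of $\alpha(t)$ coincides with the limit $s\to\infty$ of $\lambda_i(t,s)$ at $t=0$ after the substitution $\mathbf{e}_i\mapsto\boldsymbol{\pi}$, which is precisely \eqref{eq:limit2} specialized in that way. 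I would therefore reproduce the argument behind \eqref{eq:limit2}: insert the Apostol spectral representation \eqref{eq:apostol} for $e^{\mathbf{T}t}$ and $e^{\boldsymbol{\Psi}\mathbf{T}t}$, and use that --- $\mathbf{T}$ and $\boldsymbol{\Psi}\mathbf{T}$ being negative definite with a unique dominant (least negative, real) eigenvalue --- the exponentials are asymptotically governed by the single terms $e^{\varphi_{p_1}^{(1)}t}L_{p_1}(\mathbf{T})$ and $e^{\varphi_{p_2}^{(2)}t}L_{p_2}(\boldsymbol{\Psi}\mathbf{T})$. Factoring $e^{\max(\varphi_{p_1}^{(1)},\varphi_{p_2}^{(2)})t}$ out of both numerator and denominator and letting $t\to\infty$ kills the subdominant contribution and produces the three cases of \eqref{eq:limit3} according to the sign of $\varphi_{p_2}^{(2)}-\varphi_{p_1}^{(1)}$.

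The bulk of the genuine analytic content sits in this spectral-domination step, but it has already been carried out for Proposition \ref{prop:limit} and for the Proposition preceding this Corollary, so here it is imported rather than redone. The only point requiring a little care is the boundary case $\varphi_{p_2}^{(2)}=\varphi_{p_1}^{(1)}$, where both exponentials decay at the same rate: neither term can be discarded, and one must retain both dominant Lagrange coefficients, giving the combined numerator and denominator in the third line of \eqref{eq:limit3}. I expect this coincident-eigenvalue case --- together with verifying that the common scalar factor $e^{\varphi t}$ cancels cleanly between numerator and denominator so that the limit is finite and well defined --- to be the main (and only mild) obstacle; the remaining manipulations are routine.
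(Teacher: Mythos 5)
Your proposal is correct and takes essentially the same route as the paper: the paper's corollary is likewise justified by passing $t\to\infty$ through the affine expression (\ref{eq:intensity3}) using the entrywise limits of $\mathbf{S}_m(t)$ supplied by (\ref{eq:limit}), and by recognizing that $\alpha(t)$ is just $\lambda_i(t,s)$ with $t=0$, elapsed time $t$, and $\mathbf{e}_i$ replaced by $\boldsymbol{\pi}$, so that the spectral-domination argument behind (\ref{eq:limit2}) (Apostol's representation plus the dominant-eigenvalue comparison, including the coincident-eigenvalue case) yields (\ref{eq:limit3}) directly.
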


\subsection{Residual lifetime}

We derive residual lifetime $R_i(t)=\mathbb{E}\{\tau - t\vert \tau>t, \mathcal{I}_{i,t}\}$ of the mixture process (\ref{eq:mixture}) given its past information $\mathcal{I}_{i,t}$, for $i\in \mathbb{S}$. The result is summarized below.
\begin{prop}\label{prop:CTE}
For a given $t\geq0$ and past information $\mathcal{I}_{i,t}$, we have
\begin{equation}\label{eq:residue}
R_i(t)=-\mathbf{e}_i^{\top} \Big(\mathbf{S}_m(t)(\boldsymbol{\Psi}\mathbf{T})^{-1}
+(\mathbf{I}_m-\mathbf{S}_m(t))\mathbf{T}^{-1}\Big)\mathbf{1}_m.
\end{equation}
for any initial state $i\in E$ of $X$ at time $t$ and is zero for $i=m+1$.
\end{prop}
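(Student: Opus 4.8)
The plan is to express the residual lifetime through the conditional survival function $\overline{F}_i(t,s)$ of Theorem \ref{theo:mainPH} and then integrate. Since $\tau-t$ is a nonnegative random variable under the conditioning $\{\tau>t\}\cap\mathcal{I}_{i,t}$, I would invoke the tail formula for the expectation,
\begin{equation*}
R_i(t)=\int_0^{\infty}\mathbb{P}\{\tau-t>\nu\mid \tau>t,\mathcal{I}_{i,t}\}\,d\nu
=\int_0^{\infty}\frac{\mathbb{P}\{\tau>t+\nu\mid\mathcal{I}_{i,t}\}}{\mathbb{P}\{\tau>t\mid\mathcal{I}_{i,t}\}}\,d\nu,
\end{equation*}
where the second equality uses $\{\tau>t+\nu\}\subseteq\{\tau>t\}$ for $\nu\geq 0$. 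For $i\in E$ the denominator equals $\overline{F}_i(t,t)$, which is $1$: at $s=t$ both matrix exponentials in (\ref{eq:main}) reduce to $\mathbf{I}_m$ and $\mathbf{S}_m(t)+(\mathbf{I}_m-\mathbf{S}_m(t))=\mathbf{I}_m$, so $\overline{F}_i(t,t)=\mathbf{e}_i^{\top}\mathbf{I}_m\mathbf{1}_m=1$. Hence $R_i(t)=\int_0^{\infty}\overline{F}_i(t,t+\nu)\,d\nu$.

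Next, substituting the explicit survival function from (\ref{eq:main}) with $s-t=\nu$ and pulling the factors $\mathbf{e}_i^{\top}$, $\mathbf{S}_m(t)$, $\mathbf{I}_m-\mathbf{S}_m(t)$ and $\mathbf{1}_m$ (all constant in $\nu$) outside the integral, the computation reduces to the two matrix integrals $\int_0^{\infty}e^{\boldsymbol{\Psi}\mathbf{T}\nu}\,d\nu$ and $\int_0^{\infty}e^{\mathbf{T}\nu}\,d\nu$. These are evaluated exactly as in the proof of Theorem \ref{theo:Laplace}: setting $\theta=0$ in the identity $\int_0^{\infty}e^{-\theta u}e^{\mathbf{B}u}\,du=(\theta\mathbf{I}_m-\mathbf{B})^{-1}$ yields $\int_0^{\infty}e^{\mathbf{B}\nu}\,d\nu=-\mathbf{B}^{-1}$ for any matrix $\mathbf{B}$ whose eigenvalues have negative real part. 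Applying this with $\mathbf{B}=\boldsymbol{\Psi}\mathbf{T}$ and $\mathbf{B}=\mathbf{T}$ and collecting the two resulting terms gives precisely (\ref{eq:residue}).

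The one point that needs care—and the only genuine obstacle—is the convergence of these improper integrals together with the interchange of integration and the linear (matrix) operations. Both are justified by the negative definiteness of $\mathbf{T}$ and $\boldsymbol{\Psi}\mathbf{T}$ noted after (\ref{eq:matq}) and reused in Proposition \ref{prop:limit}: their eigenvalues have strictly negative real parts, so $e^{\mathbf{T}\nu}$ and $e^{\boldsymbol{\Psi}\mathbf{T}\nu}$ decay exponentially, making each integrand absolutely integrable and permitting the exchange of integral and finite linear combination. Finally, for the absorbing state $i=m+1$ the process has already been absorbed, so $\tau\leq t$ and the residual lifetime is trivially $0$, consistent with $\overline{F}_{m+1}(t,s)\equiv 0$ in Theorem \ref{theo:mainPH}; this completes the claim.
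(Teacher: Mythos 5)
Your proof is correct and follows essentially the same route as the paper's: both reduce $R_i(t)$ to the tail integral of the conditional survival function $\overline{F}_i(t,s)$ from Theorem \ref{theo:mainPH} (using that $\mathbb{P}\{\tau>t\mid\mathcal{I}_{i,t}\}=1$ for $i\in E$) and then evaluate $\int_0^{\infty}e^{\mathbf{B}\nu}\,d\nu=-\mathbf{B}^{-1}$ for $\mathbf{B}=\boldsymbol{\Psi}\mathbf{T}$ and $\mathbf{B}=\mathbf{T}$. Your explicit justification of convergence via the negative real parts of the eigenvalues, and of the $i=m+1$ case, only makes explicit what the paper leaves implicit.
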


\begin{proof}
The proof is based on (\ref{eq:main}) and Bayes rule for conditional expectation,
\begin{equation}\label{eq:residuedef}
\begin{split}
\mathbb{E}\{\tau-t\vert \tau>t, \mathcal{I}_{i,t}\}=& \frac{\mathbb{E}\{(\tau-t)\mathbf{1}_{\{\tau>t\}}\vert \mathcal{I}_{i,t}\}}{\mathbb{P}\{\tau>t\vert \mathcal{I}_{i,t}\}}=\mathbb{E}\{(\tau-t)\mathbf{1}_{\{\tau>t\}}\vert \mathcal{I}_{i,t}\},
\end{split}
\end{equation}
as $\mathbb{P}\{\tau>t\vert \mathcal{I}_{i,t}\}=1$. The latter expectation can be simplified using (\ref{eq:main}) as
\begin{equation*}
\begin{split}
\mathbb{E}\{(\tau-t)\mathbf{1}_{\{\tau>t\}}\vert \mathcal{I}_{i,t}\}=& \int_t^{\infty}\mathbb{P}\{\tau>s \vert \mathcal{I}_{i,t}\} ds\\
=&\int_t^{\infty} \mathbf{e}_i^{\top} \Big(\mathbf{S}_m(t)e^{\boldsymbol{\Psi}\mathbf{T}(s-t)} + (\mathbf{I}_m-\mathbf{S}_m(t))e^{\mathbf{T}(s-t)}\Big)\mathbf{1}_mds\\
&\hspace{-2cm}=\mathbf{e}_i^{\top} \mathbf{S}_m(t) \Big( \int_t^{\infty} e^{\boldsymbol{\Psi}\mathbf{T} (s-t)} ds\Big) \mathbf{1}_m
+  \mathbf{e}_i^{\top} \big(\mathbf{I}_m-\mathbf{S}_m(t) \big) \Big(\int_t^{\infty} e^{\mathbf{T}(s-t)}ds\Big)\mathbf{1}_m.
\end{split}
\end{equation*}
It is clear following (\ref{eq:main}) that $R_i(t)=0$ for $i=m+1$ as $F_i(t,s)=0$ for $s\geq t$.
The result is established as for negative definite $\mathbf{A}$, $\int_t^{\infty} e^{\mathbf{A}(s-t)} ds= -\mathbf{A}^{-1}$. \exit\\
\end{proof}

Observe that, if $\boldsymbol{\Psi}\neq\mathbf{I}$, the expected residual lifetime has dependence on past realization of $X$ through $\mathbf{S}_m(t)$ and the current age $t$ when the information is fully available. However, when the information is limited to only the current (and initial state), the expected residual lifetime forms a non stationary function of the current age $t$. Otherwise, residual lifetime expectancy is just a constant.

\subsection{Occupation time}
The second quantity we are interested in is the generalization of the result we had above for residual lifetime expectancy. In this section, we consider the total time $T_j(t)$ the mixture process $X$ (\ref{eq:mixture}) spent in state $j\in \mathbb{S}$ up to time $t$, i.e., $$T_j(t)=\int_0^t\mathbf{1}_{\{X_s=j\}}ds.$$

This quantity somehow appears in the likelihood function (\ref{eq:likelihood}). As we allow the inclusion of past information $\mathcal{I}_{i,t}$ and the current age $t$, we define
\begin{equation}
\overline{T}_{j}(t,s)=T_j(s) - T_j(t), \quad \textrm{for $s\geq t$}.
\end{equation}
\begin{theo}\label{theo:occupation}
For a given $s\geq t\geq0$ and past information $\mathcal{I}_{i,t}$, we have
\begin{equation}\label{eq:resolvent}
\begin{split}
\mathbb{E}\{\overline{T}_{j}(t,s)\big \vert \mathcal{I}_{i,t}\}=
\begin{cases}
\mathbf{e}_i^{\top} \overline{\mathbf{F}}_{11}(t,s)\mathbf{e}_j, & \forall i,j\in E\\
\mathbf{e}_i^{\top} \overline{\mathbf{F}}_{12}(t,s)\mathbf{e}_{m+1}, & \forall i\in E \\
0, & \forall j\in E, i=m+1\\
(s-t), & \textrm{for $i,j=m+1$},
\end{cases}
\end{split}
\end{equation}
where the matrices $\overline{\mathbf{F}}_{11}(t,s)$ and $\overline{\mathbf{F}}_{12}(t,s)$ are defined by
\begin{eqnarray*}
\overline{\mathbf{F}}_{11}(t,s)&=&\mathbf{S}_m(t)\big[\boldsymbol{\Psi}\mathbf{T}\big]^{-1}
\big(e^{\boldsymbol{\Psi}\mathbf{T}(s-t)}-\mathbf{I}_m\big)\\
&&+\big(\mathbf{I}_m-\mathbf{S}_m(t)\big)\big[\mathbf{T}\big]^{-1}\big(e^{\mathbf{T}(s-t)}-\mathbf{I}_m\big),\\
\overline{\mathbf{F}}_{12}(t,s)&=& \mathbf{S}_m(t)\Big[(s-t)\mathbf{I}_m-[\boldsymbol{\Psi}\mathbf{T}]^{-1}\big(e^{\boldsymbol{\Psi}\mathbf{T}(s-t)}-\mathbf{I}_m\big)\Big]\mathbf{1}_m\\
&&+(\mathbf{I}_m-\mathbf{S}_m(t))\Big[(s-t)\mathbf{I}_m-[\mathbf{T}]^{-1}\big(e^{\mathbf{T}(s-t)}-\mathbf{I}_m\big)\Big]\mathbf{1}_m.
\end{eqnarray*}

\end{theo}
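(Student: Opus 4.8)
The plan is to reduce the expected occupation time to a time integral of the conditional one-dimensional marginals of $X$, which are already available in closed form from Theorem \ref{theo:theo1} and Lemma \ref{lem:composition}. Since $\overline{T}_j(t,s)=\int_t^s \mathbf{1}_{\{X_u=j\}}\,du$ is a nonnegative measurable functional, Tonelli's theorem lets me interchange conditional expectation and integration,
\begin{equation*}
\mathbb{E}\{\overline{T}_j(t,s)\mid \mathcal{I}_{i,t}\}=\int_t^s \mathbb{P}\{X_u=j\mid \mathcal{I}_{i,t}\}\,du=\int_t^s \mathbf{e}_i^{\top}\mathbf{P}(t,u)\mathbf{e}_j\,du,
\end{equation*}
so the whole problem becomes integrating the $(t,u)$-transition matrix $\mathbf{P}(t,u)$ in (\ref{eq:probmat}) entrywise over $u\in[t,s]$ and reading off the block corresponding to $(i,j)$ via (\ref{eq:blockdiag1}).

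First I would treat the generic case $i,j\in E$. Here $\mathbf{e}_i^{\top}\mathbf{P}(t,u)\mathbf{e}_j=\mathbf{e}_i^{\top}\mathbf{F}_{11}(t,u)\mathbf{e}_j$, and the only analytic input needed is the elementary matrix primitive
\begin{equation*}
\int_t^s e^{\mathbf{A}(u-t)}\,du=\mathbf{A}^{-1}\big(e^{\mathbf{A}(s-t)}-\mathbf{I}_m\big),
\end{equation*}
valid for any nonsingular $\mathbf{A}$; I apply it with $\mathbf{A}=\boldsymbol{\Psi}\mathbf{T}$ and $\mathbf{A}=\mathbf{T}$, both nonsingular since $\mathbf{T}$ and $\boldsymbol{\Psi}$ are. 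Integrating $\mathbf{F}_{11}(t,u)$ term by term then reproduces exactly $\overline{\mathbf{F}}_{11}(t,s)$, which gives the first line of (\ref{eq:resolvent}).

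For $i\in E$, $j=m+1$ I would avoid integrating the column $\mathbf{F}_{12}(t,u)$ by brute force and instead exploit that the rows of $\mathbf{P}(t,u)$ sum to one, so that $\mathbb{P}\{X_u=m+1\mid \mathcal{I}_{i,t}\}=1-\overline{F}_i(t,u)$ with $\overline{F}_i(t,u)=\mathbf{e}_i^{\top}\mathbf{F}_{11}(t,u)\mathbf{1}_m$ from Theorem \ref{theo:mainPH}. Integrating, the constant $1$ contributes $(s-t)$ while the survival part contributes $-\mathbf{e}_i^{\top}\overline{\mathbf{F}}_{11}(t,s)\mathbf{1}_m$; distributing the split $\mathbf{I}_m=\mathbf{S}_m(t)+(\mathbf{I}_m-\mathbf{S}_m(t))$ across the $(s-t)$ term and comparing with the previous step identifies this with $\mathbf{e}_i^{\top}\overline{\mathbf{F}}_{12}(t,s)$ in precisely the stated form. (One can alternatively integrate $\mathbf{F}_{12}(t,u)$ directly after substituting $\boldsymbol{\Psi}\boldsymbol{\delta}=-\boldsymbol{\Psi}\mathbf{T}\mathbf{1}_m$ and $\boldsymbol{\delta}=-\mathbf{T}\mathbf{1}_m$ from (\ref{eq:Tdone}) and using that $(\boldsymbol{\Psi}\mathbf{T})^{-1}$ commutes with $e^{\boldsymbol{\Psi}\mathbf{T}(u-t)}$, but the complementarity route is cleaner.) The two remaining cases are immediate: under $\mathcal{I}_{m+1,t}$ the process sits in the absorbing state for all $u\geq t$, so it never occupies any $j\in E$, giving $0$, and occupies $m+1$ for the full length $s-t$.

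The computations are all routine once the interchange is justified; the only point requiring genuine care is the absorbing-state column, where the correct move is the complementarity argument (equivalently, an appeal to (\ref{eq:Tdone})) rather than naive integration of $\mathbf{F}_{12}$, since only this reproduces the compact form of $\overline{\mathbf{F}}_{12}(t,s)$ recorded in the statement.
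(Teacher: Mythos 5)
Your proposal is correct and follows essentially the same route as the paper: both reduce $\mathbb{E}\{\overline{T}_j(t,s)\mid\mathcal{I}_{i,t}\}$ to $\int_t^s \mathbf{e}_i^{\top}\mathbf{P}(t,u)\mathbf{e}_j\,du$ via Theorem \ref{theo:theo1} (with the interchange of expectation and integration) and then integrate the block decomposition of Lemma \ref{lem:composition} term by term. The only difference is in one sub-step: for the absorbing-state column the paper integrates $\mathbf{F}_{12}(t,u)$ directly, leaving the reduction to the stated form of $\overline{\mathbf{F}}_{12}(t,s)$ implicit (it requires (\ref{eq:Tdone}) and the commutation of $(\boldsymbol{\Psi}\mathbf{T})^{-1}$ with the matrix exponential), whereas you reach the same expression by the rows-sum-to-one complementarity argument, which is an equivalent use of (\ref{eq:Tdone}) and, if anything, makes explicit the algebra the paper omits.
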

\begin{proof}
The proof follows from applying Theorem \ref{theo:theo1} to obtain
\begin{equation*}
\begin{split}
\mathbb{E}\{\overline{T}_{j}(t,s)\big \vert \mathcal{I}_{i,t}\}=& \int_t^s \mathbb{P}\{ X_u=j\vert \mathcal{I}_{i,t}\} du\\
=&\int_t^s \mathbf{e}_i^{\top} \Big(\mathbf{S}_{m+1}(t)e^{\mathbf{G}(u-t)}
+ \big[\mathbf{I}_{m+1}-\mathbf{S}_{m+1}(t)\big]e^{\mathbf{Q}(u-t)} \Big)\mathbf{e}_j du\\
=&\mathbf{e}_i^{\top} \Big(\int_t^s \left(\begin{array}{cc}
  \mathbf{F}_{11}(t,u)
  & \mathbf{F}_{12}(t,u) \\
  \mathbf{0} & 1 \\
  \end{array}\right) du \Big)\mathbf{e}_j,
\end{split}
\end{equation*}
where the last equality is due to Lemma \ref{lem:composition}, which proving the claim. \exit
\end{proof}

\section{Multi absorbing states and competing risks}\label{sec:main3}

In this section we extend the lifetime distribution of Markov mixture process to the case where we have $p\geq 1$ absorbing states and $m\geq 1$ transient states. In this situation, $\mathbf{Q}$ (\ref{eq:MatQ}) is a $(m+p)\times(m+p)$ singular matrix whose top left $(m\times m)-$submatrix $\mathbf{T}$ corresponds to $X^Q$ moving to transient states $E=\{1,...,m\}$, whereas $\boldsymbol{\delta}$ now refers to a $m\times p$ exit transition matrix of $X^Q$. To distinguish it from one absorbing state, we reserve matrix notation $\mathbf{D}$ for $\boldsymbol{\delta}$, i.e.,
\begin{equation}\label{eq:MatQ2}
\mathbf{Q} = \left(\begin{array}{cc}
  \mathbf{T} & \mathbf{D} \\
  \mathbf{0} & \mathbf{0} \\
\end{array}\right),
\end{equation}
Following (\ref{eq:matq}), both matrices $\mathbf{T}$ and $\mathbf{D}$ satisfy the following constraint:
\begin{equation}\label{eq:const2}
\mathbf{T}\mathbf{1}_{m} +\mathbf{D}\mathbf{1}_{p}=\mathbf{0}_{m},
\end{equation}
where we denoted by $\mathbf{1}_{n}$, $n=m,p$, a $(n\times 1)$unit vector of size $n$. The intensity matrix for $X^G$ is defined by to (\ref{eq:GH}). Accordingly, the information matrix $\mathbf{S}_n(t)$ (\ref{eq:Snt}) is defined as $n\times n$ diagonal matrix, $n\in\{m,m+p\}$, whose $i$th element is given by $s_i(t)$ defined in Lemma \ref{lem:lem1}, necessarily $s_i(t)=0$ for $i\in\{m+1,...,m+p\}.$

We denote by $\tau_j$ the lifetime of the Markov mixture process $X$ (\ref{eq:mixture}) until it moves to absorbing state $\delta_j:=m+j$, i.e., $\tau_j=\inf\{t>0: X_t=\delta_j\}$. In competing risks (for which we refer among others to Kalbfleisch and Prentice \cite{Kalbfleisch}) only the first event is observed, we therefore define observed lifetime $\tau=\min\{\tau_1,...,\tau_p\}$ and a random variable $\mathbf{J}$ to denote an event-specific type, i.e., $\mathbf{J}=\textrm{argmin}_j(\tau_j)$.

\subsection{Conditional distributions and intensity}\label{sec:condflamb}

We are interested in determining conditional joint distribution $F_i(t,s)$ for a given past information $\mathcal{I}_{i,t}$ of lifetime $\tau$ of the Markov mixture process (\ref{eq:mixture}) and the random variable $\mathbf{J}$ corresponding to absorption of $X$ to absorbing state $\delta_j$,
\begin{equation}\label{eq:distrfuncj}
F_{ij}(t,s)=\mathbb{P}\{\tau\leq s, \mathbf{J}=j \vert \mathcal{I}_{i,t}\} \quad \mathrm{and} \quad
\overline{F}_{ij}(t,s)=\mathbb{P}\{\tau > s, \mathbf{J}=j \vert \mathcal{I}_{i,t}\}
\end{equation}
with density $f_{ij}(t,s)=\frac{\partial}{\partial s} F_{ij}(t,s)$. By (\ref{eq:distrfuncj}) and (\ref{eq:main}), we have the following
\begin{equation}\label{eq:totalFi}
F_i(t,s)=\sum_{j=1}^p F_{ij}(t,s) \quad \mathrm{and} \quad \overline{F}_i(t,s)=1-F_i(t,s),
\end{equation}
with density $f_i(t,s)=\frac{\partial}{\partial s} F_i(t,s)$. Denote by $\lambda_{ij}(t,s)$ forward intensity associated with absorption to absorbing state $\delta_j$ for given past information $\mathcal{I}_{i,t}$ defined by
\begin{equation*}\label{eq:intensitybyj}
\lambda_{ij}(t,s)ds=\mathbb{P}\big\{\tau \leq s+ds, \mathbf{J}=j \vert \tau>s, \mathcal{I}_{i,t} \big\},
\end{equation*}
from which the overall intensity $\lambda_i(t,s)$ (\ref{eq:intensity}) for only one type of absorption is
\begin{equation}\label{eq:totintensity}
\lambda_i(t,s)=\sum_{j=1}^p \lambda_{ij}(t,s).
\end{equation}
\begin{theo}\label{theo:main3}
Given $\mathcal{I}_{i,t}$, we have for any $s \geq t\geq 0$ and $j\in\{1,...,p\}$ that
\begin{equation}\label{eq:competing}
\begin{split}
F_{ij}(t,s) =& \mathbf{e}_i^{\top}\Big(\mathbf{S}_m(t) (\boldsymbol{\Psi}\mathbf{T})^{-1}\big(e^{\boldsymbol{\Psi}\mathbf{T}(s-t)}-\mathbf{I}_m\big)\boldsymbol{\Psi}  \\
& \hspace{0.85cm}+ \big(\mathbf{I}_m-\mathbf{S}_m(t)\big) T^{-1}\big(e^{\mathbf{T}(s-t)}-\mathbf{I}_m\big)\Big)\mathbf{D} \mathbf{e}_j \\
f_{ij}(t,s) =& \mathbf{e}_i^{\top}\Big(\mathbf{S}_m(t) e^{\boldsymbol{\Psi}\mathbf{T}(s-t)}\boldsymbol{\Psi}
+ \big(\mathbf{I}_m-\mathbf{S}_m(t)\big) e^{\mathbf{T}(s-t)} \Big)\mathbf{D} \mathbf{e}_j.
\end{split}
\end{equation}
\end{theo}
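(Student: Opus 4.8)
The plan is to derive the density $f_{ij}(t,s)$ first, since it is the more primitive object, and then obtain the distribution $F_{ij}(t,s)$ by integrating in $s$ from $t$ to $s$. The starting point is Theorem 3.4, which gives the conditional transition probability $\mathbb{P}\{X_u=k \vert \mathcal{I}_{i,t}\}$ in the now-familiar mixture form. Just as in the single-absorbing-state case, I would compute the instantaneous rate of absorption into the specific state $\delta_j = m+j$ by taking the mixture of the two component Markov chains' absorption rates. For the component chain $X^Q$, the density of being in a transient state $k\in E$ at time $s$ and then jumping to $\delta_j$ is governed by the exit matrix $\mathbf{D}$; concretely the type-$j$ absorption density is $\mathbf{e}_i^{\top} e^{\mathbf{T}(s-t)} \mathbf{D}\mathbf{e}_j$, and analogously $\mathbf{e}_i^{\top} e^{\boldsymbol{\Psi}\mathbf{T}(s-t)}\boldsymbol{\Psi}\mathbf{D}\mathbf{e}_j$ for $X^G$ (the extra $\boldsymbol{\Psi}$ appearing because $X^G$ has exit matrix $\boldsymbol{\Psi}\mathbf{D}$ by the definition of $\mathbf{G}$ in \eqref{eq:GH}, generalized to \eqref{eq:MatQ2}). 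Weighting these by $\mathbf{S}_m(t)$ and $\mathbf{I}_m - \mathbf{S}_m(t)$ respectively, exactly as in \eqref{eq:main}, yields the claimed expression for $f_{ij}(t,s)$.

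For the distribution function, I would use $F_{ij}(t,s) = \int_t^s f_{ij}(t,u)\,du$, which holds because the event $\{\mathbf{J}=j\}$ is absorbing (once $X$ reaches $\delta_j$ it stays, so the type-$j$ absorption time has density $f_{ij}$ on $(t,\infty)$). The integration splits across the two regimes, and the key computation is the matrix identity $\int_t^s e^{\mathbf{A}(u-t)}\,du = \mathbf{A}^{-1}\big(e^{\mathbf{A}(s-t)}-\mathbf{I}_m\big)$, valid since $\mathbf{T}$ and $\boldsymbol{\Psi}\mathbf{T}$ are nonsingular (negative definite). Applying this with $\mathbf{A}=\mathbf{T}$ and $\mathbf{A}=\boldsymbol{\Psi}\mathbf{T}$ produces the factors $\mathbf{T}^{-1}(e^{\mathbf{T}(s-t)}-\mathbf{I}_m)$ and $(\boldsymbol{\Psi}\mathbf{T})^{-1}(e^{\boldsymbol{\Psi}\mathbf{T}(s-t)}-\mathbf{I}_m)$ appearing in \eqref{eq:competing}, with the right-hand exit matrix $\mathbf{D}\mathbf{e}_j$ carried through unchanged. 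This matches the structure already seen in Lemma \ref{lem:composition} for $\mathbf{F}_{12}$, where the single vector $\boldsymbol{\delta}$ is here replaced by the column $\mathbf{D}\mathbf{e}_j$ of the exit matrix.

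Alternatively, and perhaps more cleanly, I would derive $F_{ij}$ directly as a sub-block of the conditional transition matrix $\mathbf{P}(t,s)$ from \eqref{eq:probmat}, now partitioned according to \eqref{eq:MatQ2} into an $m\times m$ transient block and an $m\times p$ absorption block. The analogue of Lemma \ref{lem:composition} in the multi-state setting gives the off-diagonal block of $\mathbf{P}(t,s)$ as precisely the matrix multiplying into $F_{ij}$; reading off its $(i,j)$ entry (i.e.\ pre-multiplying by $\mathbf{e}_i^\top$ and post-multiplying by $\mathbf{e}_j$ with the exit matrix $\mathbf{D}$ inserted) gives $F_{ij}(t,s)=\mathbb{P}\{X_s=\delta_j\vert\mathcal{I}_{i,t}\}$, which is exactly the probability that absorption of type $j$ has occurred by time $s$. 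The density then follows by differentiating in $s$ and using $\tfrac{d}{ds}e^{\mathbf{A}(s-t)}=\mathbf{A}e^{\mathbf{A}(s-t)}$, which cancels the leading inverse and leaves $e^{\mathbf{T}(s-t)}\mathbf{D}$ (respectively $e^{\boldsymbol{\Psi}\mathbf{T}(s-t)}\boldsymbol{\Psi}\mathbf{D}$).

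The main obstacle I anticipate is purely bookkeeping rather than conceptual: I must verify that the exponential-block formula of Lemma \ref{lem:composition} carries over verbatim when the scalar $0$ and vector $\boldsymbol{\delta}$ are promoted to the zero matrix $\mathbf{0}$ and the $m\times p$ matrix $\mathbf{D}$, and in particular that the correct placement of $\boldsymbol{\Psi}$ is $\boldsymbol{\Psi}\mathbf{D}$ for the $X^G$ exit (so that the density picks up $\boldsymbol{\Psi}$ on the left of $\mathbf{D}$, consistent with \eqref{eq:GH}). Once the block-matrix exponential $e^{\mathbf{Q}t}$ for the form \eqref{eq:MatQ2} is confirmed — its upper-right block being $\mathbf{T}^{-1}(e^{\mathbf{T}t}-\mathbf{I}_m)\mathbf{D}$ — the mixture weighting and the $s$-differentiation are routine, and both routes converge on \eqref{eq:competing}.
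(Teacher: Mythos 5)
Your proposal is correct, and your ``alternative'' route is precisely the paper's proof: the paper observes that $\{\tau\leq s,\mathbf{J}=j\}=\{X_s=\delta_j\}$, writes $F_{ij}(t,s)=\mathbf{e}_i^{\top}\mathbf{P}(t,s)\mathbf{e}_{m+j}$, reads off the entry from the block decomposition of Lemma \ref{lem:composition} with $\boldsymbol{\delta}$ replaced by $\mathbf{D}$, and then obtains $f_{ij}$ by differentiating in $s$. Your primary route (build $f_{ij}$ as a mixture of the component chains' absorption densities and integrate, using $\int_t^s e^{\mathbf{A}(u-t)}\,du=\mathbf{A}^{-1}\big(e^{\mathbf{A}(s-t)}-\mathbf{I}_m\big)$) is the same computation run in the opposite direction, so there is nothing essentially different here.
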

\begin{proof}
As the two events $\{\tau\leq s, \mathbf{J}=j\}$ and $\{X_s=\delta_j\}$ are equivalent, we have
\begin{equation*}
\begin{split}
\mathbb{P}\big\{\tau\leq s, \mathbf{J}=j \big\vert \mathcal{I}_{i,t}\big\} =& \mathbb{P}\big\{ X_s=j \big \vert \mathcal{I}_{i,t}\big\}
=\mathbf{e}_i^{\top}\mathbf{P}(t,s) \mathbf{e}_{m+j}\\
=& \mathbf{e}_i^{\top}\Big(\mathbf{S}_m(t) (\boldsymbol{\Psi}\mathbf{T})^{-1}\big(e^{\boldsymbol{\Psi}\mathbf{T}(s-t)}-\mathbf{I}_m\big)\boldsymbol{\Psi}\\
& \hspace{0.85cm}+ \big(\mathbf{I}_m-\mathbf{S}_m(t)\big) T^{-1}\big(e^{\mathbf{T}(s-t)}-\mathbf{I}_m\big)\Big)\mathbf{D} \mathbf{e}_j.
\end{split}
\end{equation*}
Note that in the last equality we have used (\ref{eq:blockdiag1}) with $\boldsymbol{\delta}$ replaced by $\mathbf{D}$. Our claim for density $f_{ij}(t,s)$ is established on account $f_{ij}(t,s)=\frac{\partial}{\partial s} F_{ij}(t,s)$. \exit\\
\end{proof}

\begin{cor}\label{cor:cor0}
By (\ref{eq:totalFi}) and (\ref{eq:const2}), we have for any $s \geq t\geq 0$ and $i\in E$,
\begin{equation}\label{eq:rescor1}
\begin{split}
\overline{F}_{i}(t,s) =& \mathbf{e}_i^{\top}\Big(\mathbf{S}_m(t) e^{\boldsymbol{\Psi}\mathbf{T}(s-t)}
+ \big(\mathbf{I}_m-\mathbf{S}_m(t)\big) e^{\mathbf{T}(s-t)} \Big) \mathbf{1}_m\\
f_{i}(t,s) =& \mathbf{e}_i^{\top}\Big(\mathbf{S}_m(t) e^{\boldsymbol{\Psi}\mathbf{T}(s-t)}\boldsymbol{\Psi}
+ \big(\mathbf{I}_m-\mathbf{S}_m(t)\big) e^{\mathbf{T}(s-t)} \Big)\mathbf{D} \mathbf{1}_p.
\end{split}
\end{equation}
\end{cor}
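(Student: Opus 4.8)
The plan is to obtain both identities in (\ref{eq:rescor1}) directly from Theorem~\ref{theo:main3} by summing the type-specific quantities over the competing absorbing states $j=1,\dots,p$ and then invoking the conservation constraint (\ref{eq:const2}); no new matrix analysis is required beyond the block structure already established in Lemma~\ref{lem:composition}.

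For the density I would start from the formula for $f_{ij}(t,s)$ in (\ref{eq:competing}) and sum over $j$, using $f_i(t,s)=\sum_{j=1}^p f_{ij}(t,s)$ from (\ref{eq:totalFi}). Because the bracketed matrix multiplying $\mathbf{D}\mathbf{e}_j$ does not depend on $j$ and $\sum_{j=1}^p \mathbf{e}_j=\mathbf{1}_p$, the summation simply replaces $\mathbf{D}\mathbf{e}_j$ by $\mathbf{D}\mathbf{1}_p$, which yields the stated expression for $f_i(t,s)$ at once.

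For the survival function I would sum $F_{ij}(t,s)$ over $j$ in the same way, producing $F_i(t,s)$ with the factor $\mathbf{D}\mathbf{1}_p$. The decisive step is to rewrite this factor via (\ref{eq:const2}) as $\mathbf{D}\mathbf{1}_p=-\mathbf{T}\mathbf{1}_m$. In the second ($\mathbf{T}$) summand this turns the tail $\mathbf{T}^{-1}(e^{\mathbf{T}(s-t)}-\mathbf{I}_m)\mathbf{D}\mathbf{1}_p$ into $-\mathbf{T}^{-1}(e^{\mathbf{T}(s-t)}-\mathbf{I}_m)\mathbf{T}\mathbf{1}_m=-(e^{\mathbf{T}(s-t)}-\mathbf{I}_m)\mathbf{1}_m$, the cancellation being legitimate because $\mathbf{T}$ commutes with its own exponential. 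In the first summand the $\boldsymbol{\Psi}$ sitting to the right of the resolvent combines with the same constraint: $\boldsymbol{\Psi}\mathbf{D}\mathbf{1}_p=-\boldsymbol{\Psi}\mathbf{T}\mathbf{1}_m=-(\boldsymbol{\Psi}\mathbf{T})\mathbf{1}_m$, so that $(\boldsymbol{\Psi}\mathbf{T})^{-1}(e^{\boldsymbol{\Psi}\mathbf{T}(s-t)}-\mathbf{I}_m)\boldsymbol{\Psi}\mathbf{D}\mathbf{1}_p$ collapses to $-(e^{\boldsymbol{\Psi}\mathbf{T}(s-t)}-\mathbf{I}_m)\mathbf{1}_m$ by the same commutativity argument applied to $\boldsymbol{\Psi}\mathbf{T}$. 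Collecting the two contributions gives $F_i(t,s)=\mathbf{e}_i^{\top}\big(\mathbf{1}_m-\mathbf{S}_m(t)e^{\boldsymbol{\Psi}\mathbf{T}(s-t)}\mathbf{1}_m-(\mathbf{I}_m-\mathbf{S}_m(t))e^{\mathbf{T}(s-t)}\mathbf{1}_m\big)$, where I have used $\mathbf{S}_m(t)\mathbf{1}_m+(\mathbf{I}_m-\mathbf{S}_m(t))\mathbf{1}_m=\mathbf{1}_m$ to assemble the constant term. Finally, since $\mathbf{e}_i^{\top}\mathbf{1}_m=1$ for $i\in E$, the relation $\overline{F}_i(t,s)=1-F_i(t,s)$ from (\ref{eq:totalFi}) removes the constant and leaves exactly the claimed form of $\overline{F}_i(t,s)$.

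The computation is essentially bookkeeping, so I do not expect a genuine obstacle; the one place that requires care is the first summand, where one must keep the $\boldsymbol{\Psi}$ on the correct side and recognise that $\boldsymbol{\Psi}\mathbf{D}\mathbf{1}_p=-(\boldsymbol{\Psi}\mathbf{T})\mathbf{1}_m$, so that it, and not $\mathbf{T}$, is what cancels against $(\boldsymbol{\Psi}\mathbf{T})^{-1}$. A secondary point worth verifying is that the constant vector $\mathbf{1}_m$ produced by the cancellation is exactly annihilated by $\overline{F}_i=1-F_i$ together with $\mathbf{e}_i^{\top}\mathbf{1}_m=1$, which is precisely what makes (\ref{eq:rescor1}) reduce to the single-absorbing-state result (\ref{eq:main}) when $p=1$ and $\mathbf{D}=\boldsymbol{\delta}$.
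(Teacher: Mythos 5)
Your proposal is correct and follows exactly the route the paper intends: the corollary is stated as a consequence of summing the type-specific quantities of Theorem~\ref{theo:main3} over $j$ via (\ref{eq:totalFi}) (so that $\mathbf{D}\mathbf{e}_j$ becomes $\mathbf{D}\mathbf{1}_p$) and then invoking the constraint (\ref{eq:const2}) with the commutativity cancellations $(\boldsymbol{\Psi}\mathbf{T})^{-1}\big(e^{\boldsymbol{\Psi}\mathbf{T}(s-t)}-\mathbf{I}_m\big)\boldsymbol{\Psi}\mathbf{T}=e^{\boldsymbol{\Psi}\mathbf{T}(s-t)}-\mathbf{I}_m$ and its $\mathbf{T}$-analogue, finishing with $\overline{F}_i=1-F_i$ and $\mathbf{e}_i^{\top}\mathbf{1}_m=1$. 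Your bookkeeping, including the placement of $\boldsymbol{\Psi}$ and the identity $\boldsymbol{\Psi}\mathbf{D}\mathbf{1}_p=-(\boldsymbol{\Psi}\mathbf{T})\mathbf{1}_m$, is exactly right, so nothing is missing.
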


It can be shown using the same steps as in (\ref{eq:derivation}) and following (\ref{eq:totalFi})-(\ref{eq:totintensity}),
\begin{equation}\label{eq:intensities}
\lambda_{ij}(t,s)=\frac{f_{ij}(t,s)}{\overline{F}_i(t,s)} \quad \mathrm{and} \quad \lambda_{i}(t,s)=\frac{f_{i}(t,s)}{\overline{F}_i(t,s)}.
\end{equation}

From the result of Theorem \ref{theo:main3}, we see that the lifetime $\tau$ and type-specific absorption are not independent. This implies following Theorem 1.1 in \cite{Crowder} that the intensity is not of proportional type as we can see from Corollary \ref{cor:cor1} below.

\begin{cor}\label{cor:cor1}
For any $i\in E$ and $j\in\{m+1,...,m+p\}$, we have $\forall s\geq t\geq 0$
\begin{equation}\label{eq:intensities2}
\begin{split}
\lambda_{ij}(t,s)=&\frac{ \mathbf{e}_i^{\top}\big(\mathbf{S}_m(t) e^{\boldsymbol{\Psi}\mathbf{T}(s-t)}\boldsymbol{\Psi}
+ \big(\mathbf{I}_m-\mathbf{S}_m(t)\big) e^{\mathbf{T}(s-t)} \big)\mathbf{D} \mathbf{e}_j   }{ \mathbf{e}_i^{\top}\big(\mathbf{S}_m(t) e^{\boldsymbol{\Psi}\mathbf{T}(s-t)}
+ \big(\mathbf{I}_m-\mathbf{S}_m(t)\big) e^{\mathbf{T}(s-t)} \big) \mathbf{1}_m    }. 
\end{split}
\end{equation}
\end{cor}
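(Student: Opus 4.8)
The plan is to read off $\lambda_{ij}(t,s)$ as a quotient of two quantities that have already been computed, so that Corollary \ref{cor:cor1} becomes a one-line substitution rather than a fresh derivation. Concretely, I expect the numerator of (\ref{eq:intensities2}) to be exactly the type-specific density $f_{ij}(t,s)$ supplied by Theorem \ref{theo:main3}, and the denominator to be exactly the survival function $\overline{F}_i(t,s)$ supplied by Corollary \ref{cor:cor0}; the whole task is to verify the ratio relation (\ref{eq:intensities}) linking them and then to insert the two closed forms.

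First I would establish the ratio representation $\lambda_{ij}(t,s)=f_{ij}(t,s)/\overline{F}_i(t,s)$ asserted in (\ref{eq:intensities}), following the same steps as in (\ref{eq:derivation}). Starting from the definition $\lambda_{ij}(t,s)\,ds=\mathbb{P}\{\tau\leq s+ds,\mathbf{J}=j\mid \tau>s,\mathcal{I}_{i,t}\}$, I would apply Bayes' rule to rewrite it as the quotient of $\mathbb{P}\{s<\tau\leq s+ds,\mathbf{J}=j\mid\mathcal{I}_{i,t}\}$ over $\mathbb{P}\{\tau>s\mid\mathcal{I}_{i,t}\}$. Absolute continuity of $\tau$ (guaranteed by the explicit density in (\ref{eq:competing})) turns the numerator into $F_{ij}(t,s+ds)-F_{ij}(t,s)=f_{ij}(t,s)\,ds$, while the denominator is $\overline{F}_i(t,s)$ by (\ref{eq:totalFi}); cancelling $ds$ yields (\ref{eq:intensities}). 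The analogous identity for the aggregate intensity $\lambda_i(t,s)$ then follows by summing over $j$ and using (\ref{eq:totintensity}).

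Then I would substitute the two closed forms: $f_{ij}(t,s)$ from (\ref{eq:competing}) into the numerator and $\overline{F}_i(t,s)$ from (\ref{eq:rescor1}) into the denominator of (\ref{eq:intensities}). This reproduces (\ref{eq:intensities2}) verbatim, with no matrix manipulation required. There is no genuine obstacle here; the only point demanding attention is the bookkeeping of the absorbing-state label, since Theorem \ref{theo:main3} indexes the $p$ absorbing states by $j\in\{1,\dots,p\}$ through the selector $\mathbf{e}_j$ acting on $\mathbf{D}$, whereas Corollary \ref{cor:cor1} labels them $j\in\{m+1,\dots,m+p\}$. I would reconcile the two by invoking the identification $\delta_j=m+j$ from Section \ref{sec:main3}, ensuring that the column of $\mathbf{D}$ selected in each formulation is the same, after which the statement follows at once.
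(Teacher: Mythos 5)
Your proposal is correct and follows essentially the same route as the paper: the paper also obtains Corollary \ref{cor:cor1} by establishing the ratio identity (\ref{eq:intensities}) via the same Bayes'-rule argument as in (\ref{eq:derivation}) and then substituting $f_{ij}(t,s)$ from Theorem \ref{theo:main3} and $\overline{F}_i(t,s)$ from Corollary \ref{cor:cor0}. Your remark about reconciling the labels $j\in\{1,\dots,p\}$ versus $j\in\{m+1,\dots,m+p\}$ via $\delta_j=m+j$ is a sensible bookkeeping point that the paper glosses over.
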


\begin{Rem}
Let $\boldsymbol{\Psi}=\mathbf{I}$, i.e., the case when the two Markov chains $X^Q$ and $X^G$ move at the same speed, the type-specific distribution $F_{ij}(t)$ (\ref{eq:competing}) and intensity $\lambda_{ij}(t)$ (\ref{eq:intensities2}) are reduced to simpler expression with information $\mathcal{I}_{i,t}$ removed:
\begin{equation*}
\begin{split}
F_{ij}(t,s)=\mathbf{e}_i^{\top} \mathbf{T}^{-1}\big( e^{\mathbf{T}(s-t)} - \mathbf{I}_m\big) \mathbf{D} \mathbf{e}_j  \quad \mathrm{and} \quad
\lambda_{ij}(t,s)= \frac{\mathbf{e}_i^{\top} e^{\mathbf{T}(s-t)} \mathbf{D} \mathbf{e}_j}{\mathbf{e}_i^{\top} e^{\mathbf{T}(s-t)} \mathbf{1}_m}.
\end{split}
\end{equation*}
\end{Rem}

In the proposition below, we give the long-term rate of intensity $\lim_{s\rightarrow \infty} \lambda_{ij}(t,s)$. The result is due to applying similar arguments used in (\ref{eq:apostol}).

\begin{prop}
Let $\mathbf{T}$ ($\boldsymbol{\Psi}\mathbf{T}$) have distinct eigenvalues $\varphi_j^{(1)}$ ($\varphi_j^{(2)}$), $j=1,...,m$ with $\varphi_{p_k}^{(k)}=\max\{\varphi_1^{(k)},...,\varphi_m^{(k)}\}$ and $p_k=\textrm{argmax}_j\{\varphi_j^{(k)}\}$, $k=1,2$. Then,
\begin{equation}\label{eq:limit5}
\begin{split}
& \lim_{s\rightarrow \infty} \lambda_{ij}(t,s)=
\begin{cases}
\frac{\mathbf{e}_i^{\top}\mathbf{S}_m(t)
L_{p_2}(\boldsymbol{\Psi}\mathbf{T})\boldsymbol{\Psi}\mathbf{D}\mathbf{e}_j}
{ \mathbf{e}_i^{\top}\mathbf{S}_m(t) L_{p_2}(\boldsymbol{\Psi}\mathbf{T})\mathbf{1}_m}, & \varphi_{p_2}^{(2)} > \varphi_{p_1}^{(1)}\\ \\
\frac{\mathbf{e}_i^{\top}(\mathbf{I}_m-\mathbf{S}_m(t) ) L_{p_1}(\mathbf{T}) \mathbf{D}\mathbf{e}_j}
{ \mathbf{e}_i^{\top}( \mathbf{I}_m-\mathbf{S}_m(t) )L_{p_1}(\mathbf{T})\mathbf{1}_m} & \varphi_{p_2}^{(2)} < \varphi_{p_1}^{(1)}  \\  \\
\frac{\mathbf{e}_i^{\top} \big( \mathbf{S}_m(t)
L_{p_2}(\boldsymbol{\Psi}\mathbf{T})\boldsymbol{\Psi}   +  (\mathbf{I}_m-\mathbf{S}_m(t) ) L_{p_1}(\mathbf{T})    \big) \mathbf{D} \mathbf{e}_j}
{ \mathbf{e}_i^{\top}  \big(  \mathbf{S}_m(t) L_{p_2}(\boldsymbol{\Psi}\mathbf{T}) + ( \mathbf{I}_m-\mathbf{S}_m(t) )L_{p_1}(\mathbf{T})  \big)  \mathbf{1}_m}, & \varphi_{p_2}^{(2)}  = \varphi_{p_1}^{(1)}
\end{cases}
\end{split}
\end{equation}
where in the two results $L_{p_1}(\mathbf{T})$ refers to the Lagrange interpolation (\ref{eq:apostol2}).
\end{prop}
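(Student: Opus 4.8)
The plan is to reduce this limit to the spectral argument already used for Proposition~\ref{prop:limit} and equation~(\ref{eq:limit2}), observing that the only genuine change here is that the exit vector $\boldsymbol{\delta}$ is replaced by the column $\mathbf{D}\mathbf{e}_j$ (and $\boldsymbol{\Psi}\boldsymbol{\delta}$ by $\boldsymbol{\Psi}\mathbf{D}\mathbf{e}_j$) in the numerator of $\lambda_{ij}(t,s)$ (\ref{eq:intensities2}), while the denominator is exactly the one appearing in (\ref{eq:limit2}). First I would substitute the Apostol representation (\ref{eq:apostol}), $e^{\mathbf{T}(s-t)}=\sum_{k=1}^m e^{\varphi_k^{(1)}(s-t)}L_k(\mathbf{T})$, together with the analogous expansion $e^{\boldsymbol{\Psi}\mathbf{T}(s-t)}=\sum_{k=1}^m e^{\varphi_k^{(2)}(s-t)}L_k(\boldsymbol{\Psi}\mathbf{T})$, into both the numerator and the denominator of (\ref{eq:intensities2}). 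This writes each of them as a finite linear combination of the scalar exponentials $e^{\varphi_k^{(1)}(s-t)}$ and $e^{\varphi_k^{(2)}(s-t)}$ with constant, $s$-independent matrix coefficients sandwiched between $\mathbf{e}_i^{\top}$, the diagonal matrices $\mathbf{S}_m(t)$ and $\mathbf{I}_m-\mathbf{S}_m(t)$, and the vectors $\mathbf{D}\mathbf{e}_j$ or $\mathbf{1}_m$.

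Next I would invoke the negative definiteness of $\mathbf{T}$ and $\boldsymbol{\Psi}\mathbf{T}$: as recalled in the proof of Proposition~\ref{prop:limit} (see O'cinneide \cite{Ocinneide}), every eigenvalue has strictly negative real part, and the dominant eigenvalue, the one with largest (i.e.\ least negative) real part, is unique and real; these are $\varphi_{p_1}^{(1)}$ and $\varphi_{p_2}^{(2)}$. As $s\rightarrow\infty$ all terms $e^{\varphi_k^{(1)}(s-t)}$ and $e^{\varphi_k^{(2)}(s-t)}$ decay, the slowest-decaying rate being $\max\{\varphi_{p_1}^{(1)},\varphi_{p_2}^{(2)}\}$. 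I would factor this dominant exponential out of both numerator and denominator; every subdominant term then carries an exponential factor decaying strictly faster than the dominant one and drops out in the limit, while the common dominant exponential cancels in the ratio.

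This produces the three cases of (\ref{eq:limit5}). If $\varphi_{p_2}^{(2)}>\varphi_{p_1}^{(1)}$, only the $\boldsymbol{\Psi}\mathbf{T}$-block survives, leaving the coefficient $\mathbf{S}_m(t)L_{p_2}(\boldsymbol{\Psi}\mathbf{T})$ in both places and yielding the first line; if $\varphi_{p_2}^{(2)}<\varphi_{p_1}^{(1)}$, the $\mathbf{T}$-block survives with coefficient $(\mathbf{I}_m-\mathbf{S}_m(t))L_{p_1}(\mathbf{T})$, giving the second line; and if the two dominant eigenvalues coincide, both exponentials share the common rate, so both coefficients persist and combine additively into the third line. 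In each case $L_{p_1}(\mathbf{T})$ and $L_{p_2}(\boldsymbol{\Psi}\mathbf{T})$ are precisely the Lagrange interpolation matrices (\ref{eq:apostol2}) attached to the dominant eigenvalues.

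The point requiring the most care is the well-posedness of the limiting ratio: one must check that the surviving denominator coefficient, for instance $\mathbf{e}_i^{\top}\mathbf{S}_m(t)L_{p_2}(\boldsymbol{\Psi}\mathbf{T})\mathbf{1}_m$ in the first case, is nonzero, so that cancelling the dominant exponential leaves a finite, nondegenerate value rather than the indeterminate form $0/0$. Since this denominator is identical to the one already handled in (\ref{eq:limit2}), that verification transfers verbatim and no new estimate is needed; the only new ingredient is the replacement of $\boldsymbol{\delta}$ by $\mathbf{D}\mathbf{e}_j$ in the numerators, which the same factoring argument carries through unchanged. Hence I expect the obstacle to be purely bookkeeping rather than conceptual, the substance being identical to Proposition~\ref{prop:limit}.
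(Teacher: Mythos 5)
Your proposal is correct and follows exactly the route the paper takes: the paper proves this proposition by citing the same Apostol/Lagrange spectral expansion (\ref{eq:apostol}) used in Proposition \ref{prop:limit} and in the limit (\ref{eq:limit2}), expanding both matrix exponentials, and letting the unique real dominant eigenvalue ($\varphi_{p_1}^{(1)}$ or $\varphi_{p_2}^{(2)}$) control the limit, with $\boldsymbol{\delta}$ simply replaced by $\mathbf{D}\mathbf{e}_j$. Your additional remark on well-posedness of the limiting ratio is a point the paper leaves implicit, but it does not change the substance of the argument.
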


Below we give the result on type-specific instantaneous intensity $\lambda_{ij}(t)$:
\begin{equation*}
\lambda_{ij}(t)dt=\mathbb{P}\{\tau \leq t+dt, \mathbf{J}=j \vert \tau>t, \mathcal{I}_{i,t}\}.
\end{equation*}
\begin{cor}\label{cor:cor2}
Given $i\in E$, $j=\{m+1,...,m+p\}$ and past information $\mathcal{I}_{i,t}$,
\begin{equation}
\lambda_{ij}(t)=\mathbf{e}_i^{\top}\big(\mathbf{I}_m + \mathbf{S}_m(t)(\boldsymbol{\Psi}-\mathbf{I}_m)\big)\mathbf{D}\mathbf{e}_j  \quad \forall t\geq 0
\end{equation}
Denote $\mathbf{S}_m(\infty)=\textrm{diag}(s_1(\infty),...,s_m(\infty))$ with $s_i(\infty)=\lim_{t\rightarrow \infty} s_i(t)$ (\ref{eq:limit}). Then,
\begin{equation*}
\lim_{t\rightarrow \infty} \lambda_{ij}(t)= \mathbf{e}_i^{\top}\big(\mathbf{I}_m + \mathbf{S}_m(\infty)(\boldsymbol{\Psi}-\mathbf{I}_m)\big)\mathbf{D}\mathbf{e}_j.
\end{equation*}
\end{cor}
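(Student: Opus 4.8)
The plan is to obtain $\lambda_{ij}(t)$ as the diagonal specialization $s=t$ of the forward intensity $\lambda_{ij}(t,s)$ already established in Corollary \ref{cor:cor1}. Comparing the defining relation
\begin{equation*}
\lambda_{ij}(t)dt=\mathbb{P}\{\tau \leq t+dt, \mathbf{J}=j \vert \tau>t, \mathcal{I}_{i,t}\}
\end{equation*}
with the definition of $\lambda_{ij}(t,s)$ in Section \ref{sec:condflamb}, one sees that the instantaneous intensity is exactly $\lambda_{ij}(t,s)$ evaluated at $s=t$. So the first (and essentially only) step is to set $s=t$ in the explicit expression (\ref{eq:intensities2}).

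Setting $s=t$ makes $s-t=0$, so both matrix exponentials collapse to the identity, $e^{\boldsymbol{\Psi}\mathbf{T}\cdot 0}=e^{\mathbf{T}\cdot 0}=\mathbf{I}_m$. The numerator of (\ref{eq:intensities2}) then becomes $\mathbf{e}_i^{\top}\big(\mathbf{S}_m(t)\boldsymbol{\Psi}+(\mathbf{I}_m-\mathbf{S}_m(t))\big)\mathbf{D}\mathbf{e}_j$, and the elementary rearrangement $\mathbf{S}_m(t)\boldsymbol{\Psi}+\mathbf{I}_m-\mathbf{S}_m(t)=\mathbf{I}_m+\mathbf{S}_m(t)(\boldsymbol{\Psi}-\mathbf{I}_m)$ yields the claimed form of the numerator. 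For the denominator, the two exponentials again become $\mathbf{I}_m$, so it reduces to $\mathbf{e}_i^{\top}\big(\mathbf{S}_m(t)+(\mathbf{I}_m-\mathbf{S}_m(t))\big)\mathbf{1}_m=\mathbf{e}_i^{\top}\mathbf{1}_m=1$ for each $i\in E$, since $\mathbf{e}_i$ is a unit coordinate vector. Dividing gives precisely $\lambda_{ij}(t)=\mathbf{e}_i^{\top}\big(\mathbf{I}_m+\mathbf{S}_m(t)(\boldsymbol{\Psi}-\mathbf{I}_m)\big)\mathbf{D}\mathbf{e}_j$.

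For the long-run statement, I would observe that in this closed form the only $t$-dependent object is the information matrix $\mathbf{S}_m(t)$; the matrices $\mathbf{I}_m$, $\boldsymbol{\Psi}$, $\mathbf{D}$ and the vectors $\mathbf{e}_i,\mathbf{e}_j$ are constant. Hence the limit passes straight through to the single occurrence of $\mathbf{S}_m(t)$, and substituting the limiting diagonal matrix $\mathbf{S}_m(\infty)=\mathrm{diag}(s_1(\infty),\ldots,s_m(\infty))$, whose entries are given by Proposition \ref{prop:limit} (equation (\ref{eq:limit})), produces the asserted expression for $\lim_{t\rightarrow\infty}\lambda_{ij}(t)$.

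Because the hard analytic work has already been done in deriving (\ref{eq:intensities2}) and the limit (\ref{eq:limit}), there is no genuine obstacle here; the proof is a short verification. The only points demanding a line of care are the cancellation of the matrix exponentials at $s=t$ and the observation that the denominator collapses exactly to unity, so that the awkward quotient disappears and the intensity becomes an affine function of $\mathbf{S}_m(t)$.
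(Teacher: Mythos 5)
Your proposal is correct and follows exactly the route the paper intends: the corollary is the $s=t$ specialization of the forward intensity $\lambda_{ij}(t,s)$ in Corollary \ref{cor:cor1}, where both matrix exponentials reduce to $\mathbf{I}_m$, the denominator collapses to $\mathbf{e}_i^{\top}\mathbf{1}_m=1$, and the long-run statement follows by passing the limit through the single occurrence of $\mathbf{S}_m(t)$ using (\ref{eq:limit}). This mirrors precisely how the paper obtains the one-absorbing-state analogue (\ref{eq:intensity3}) from (\ref{eq:intensity2}), so there is nothing to add.
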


Solving (\ref{eq:intensities}) for distributions $f_{i}(t,s)$ and $\overline{F}_i(t,s)$ leads to (\ref{eq:survival}) with $\lambda_i(t,s)$ given in (\ref{eq:intensities2}). By doing so, we can express $f_{ij}(t,s)$ and $F_{ij}(t,s)$ as follows.
\begin{prop}\label{eq:seeminglyindp}
Given $i\in E$ and $j=m+1,...,m+p$, we have for all $s\geq t\geq 0$,
\begin{equation}\label{eq:fFlambda}
\begin{split}
f_{ij}(t,s)=&\lambda_{ij}(t,s)\exp\Big( -\int_t^s \sum_{j=1}^p \lambda_{ij}(t,u) du\Big)\\
F_{ij}(t,s)=& \int_t^s \lambda_{ij}(t,u)\exp\Big( -\int_t^u \sum_{j=1}^p \lambda_{ij}(t,v) dv\Big) du.
\end{split}
\end{equation}
\end{prop}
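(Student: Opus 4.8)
The plan is to leverage the relations already assembled in (\ref{eq:intensities}) and (\ref{eq:totintensity}), together with the survival-function identity obtained in Theorem \ref{theo:intensity}, so that the proof reduces to a short first-order argument followed by a single integration. The hint preceding the statement is exactly this: solving (\ref{eq:intensities}) for $\overline{F}_i(t,s)$ reproduces the exponential form of (\ref{eq:survival}).

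First I would record that the overall conditional survival function $\overline{F}_i(t,s)=\mathbb{P}\{\tau>s\mid\mathcal{I}_{i,t}\}$ obeys the same first-order relation as in the derivation (\ref{eq:derivation}). Since $\tau$ is absolutely continuous by Corollary \ref{cor:cor0}, one has $\lambda_i(t,s)=-\partial_s\overline{F}_i(t,s)/\overline{F}_i(t,s)$, and solving this with the boundary value $\overline{F}_i(t,t)=1$ (immediate from (\ref{eq:rescor1}) at $s=t$) yields $\overline{F}_i(t,s)=\exp(-\int_t^s\lambda_i(t,u)\,du)$, exactly as in (\ref{eq:survival}). Invoking the decomposition $\lambda_i(t,s)=\sum_{j=1}^p\lambda_{ij}(t,s)$ from (\ref{eq:totintensity}) then rewrites the exponent as $-\int_t^s\sum_{j=1}^p\lambda_{ij}(t,u)\,du$.

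Next I would obtain the density formula directly from the definition $\lambda_{ij}(t,s)=f_{ij}(t,s)/\overline{F}_i(t,s)$ in (\ref{eq:intensities}), which gives $f_{ij}(t,s)=\lambda_{ij}(t,s)\,\overline{F}_i(t,s)$; substituting the exponential expression for $\overline{F}_i$ from the previous step produces the first line of (\ref{eq:fFlambda}). For the distribution $F_{ij}(t,s)$, I would use $f_{ij}(t,s)=\partial_s F_{ij}(t,s)$ together with the initial condition $F_{ij}(t,t)=0$ (immediate from (\ref{eq:competing}), since $e^{\boldsymbol{\Psi}\mathbf{T}\cdot 0}-\mathbf{I}_m=\mathbf{0}$ and $e^{\mathbf{T}\cdot 0}-\mathbf{I}_m=\mathbf{0}$), and integrate $F_{ij}(t,s)=\int_t^s f_{ij}(t,u)\,du$; inserting the density formula yields the second line of (\ref{eq:fFlambda}).

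This argument is essentially bookkeeping, and I expect no serious analytic obstacle: each ingredient is already in place. The one point demanding care is making explicit the identification that the $\lambda_i$ appearing in the survival identity is precisely the sum $\sum_j\lambda_{ij}$ and that the $\overline{F}_i$ of (\ref{eq:intensities}) is the same survival function governed by Theorem \ref{theo:intensity} in the multi-absorbing-state setting. Once that identification is stated, the two formulas follow by the substitution and single integration described above.
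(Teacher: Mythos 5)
Your proof is correct and takes essentially the same route as the paper, which treats the result as immediate from solving the first-order relation in (\ref{eq:intensities}) for $\overline{F}_i(t,s)$ (yielding the exponential form (\ref{eq:survival})), invoking the decomposition (\ref{eq:totintensity}), and then multiplying by $\lambda_{ij}(t,s)$ and integrating. Your write-up simply makes explicit the boundary conditions $\overline{F}_i(t,t)=1$ and $F_{ij}(t,t)=0$ that the paper leaves implicit.
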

\begin{Rem}
Following (\ref{eq:fFlambda}), we may consider a function $H_{ij}(t,s)=\exp\big(-\int_t^s \lambda_{ij}(t,u) du\big)$, for $s\geq t$, $j=m+1,...,m+p$ and $i=1,...,m$. However, it does not have a clear probability interpretation in view of competing risks.
\end{Rem}

Inspired by similar type of intensity discussed in Crowder \cite{Crowder}, it may seem natural to define an intensity function $\widetilde{\lambda}_{ij}(t,s)= f_{ij}(t,s)/\overline{F}_{ij}(t,s)$ given by
\begin{equation}\label{eq:otherintensity}
\begin{split}
\widetilde{\lambda}_{ij}(t,s)=&\mathbb{P}\big\{\tau \leq s+ds \big \vert \tau >s, \mathbf{J}=j, \mathcal{I}_{i,t}\big\}.
\end{split}
\end{equation}

This intensity can be seen as conditional density of lifetime distribution of Markov mixture process $X$ at absorption caused by type $j$ given its past information $\mathcal{I}_{i,t}$. Following \cite{Crowder} and by (\ref{eq:otherintensity}), it is then sensible to define distributions
\begin{equation}\label{eq:tildeF}
\widetilde{F}_{ij}(t,s)=\mathbb{P}\{\tau \leq s \vert \mathbf{J}=j,\mathcal{I}_{i,t}\}
\quad \mathrm{and} \quad \widetilde{p}_{ij}(t,s)=\mathbb{P}\{\mathbf{J}=j \vert \tau >s, \mathcal{I}_{i,t}\}.
\end{equation}

\begin{prop}
For any $i=1,...,m$, $j=m+1,...,m+p$, we have $\forall s\geq t\geq 0$,
\begin{equation*}
\begin{split}
\widetilde{F}_{ij}(t,s)=& -\Big(\frac{  \mathbf{e}_i^{\top} \mathbf{S}_m(t) (\boldsymbol{\Psi}\mathbf{T})^{-1}\big(e^{\boldsymbol{\Psi}\mathbf{T}(s-t)}-\mathbf{I}_m\big)\boldsymbol{\Psi}\mathbf{D} \mathbf{e}_j  }
{  \mathbf{e}_i^{\top} \mathbf{T}^{-1} \mathbf{D} \mathbf{e}_j  } \\ & \hspace{1.5cm}+ \frac{  \mathbf{e}_i^{\top} \big(\mathbf{I}_m-\mathbf{S}_m(t)\big) T^{-1}\big(e^{\mathbf{T}(s-t)}-\mathbf{I}_m\big)\mathbf{D} \mathbf{e}_j  }
{  \mathbf{e}_i^{\top} \mathbf{T}^{-1} \mathbf{D} \mathbf{e}_j  } \Big) \\
\widetilde{p}_{ij}(t,s)=& -\frac{\mathbf{e}_i^{\top}\big(\mathbf{S}_m(t)e^{\boldsymbol{\Psi}\mathbf{T}(s-t)} + (\mathbf{I}_m-\mathbf{S}_m(t)) e^{\mathbf{T}(s-t)} \Big) \mathbf{T}^{-1}\mathbf{D}\mathbf{e}_j   }{ \mathbf{e}_i^{\top}\Big(\mathbf{S}_m(t) e^{\boldsymbol{\Psi}\mathbf{T}(s-t)}
+ \big(\mathbf{I}_m-\mathbf{S}_m(t)\big) e^{\mathbf{T}(s-t)} \big) \mathbf{1}_m    }.
\end{split}
\end{equation*}
\end{prop}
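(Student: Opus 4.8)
The plan is to obtain both formulas from Bayes' theorem, reducing each conditional quantity to a ratio of distributions already computed in Theorem \ref{theo:main3} and Corollary \ref{cor:cor0}. By definition of conditional probability, $\widetilde{F}_{ij}(t,s)=\mathbb{P}\{\tau\le s,\mathbf{J}=j\vert\mathcal{I}_{i,t}\}/\mathbb{P}\{\mathbf{J}=j\vert\mathcal{I}_{i,t}\}=F_{ij}(t,s)/\mathbb{P}\{\mathbf{J}=j\vert\mathcal{I}_{i,t}\}$, whereas $\widetilde{p}_{ij}(t,s)=\mathbb{P}\{\tau>s,\mathbf{J}=j\vert\mathcal{I}_{i,t}\}/\mathbb{P}\{\tau>s\vert\mathcal{I}_{i,t}\}=\overline{F}_{ij}(t,s)/\overline{F}_i(t,s)$. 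Thus the whole proof reduces to evaluating the total type-$j$ absorption probability $\mathbb{P}\{\mathbf{J}=j\vert\mathcal{I}_{i,t}\}$ and the joint survival $\overline{F}_{ij}(t,s)$, since $F_{ij}$ is given by (\ref{eq:competing}) and $\overline{F}_i$ by (\ref{eq:rescor1}).

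First I would compute the denominator of $\widetilde{F}_{ij}$ as a limit, $\mathbb{P}\{\mathbf{J}=j\vert\mathcal{I}_{i,t}\}=\lim_{s\to\infty}F_{ij}(t,s)$. Because both $\mathbf{T}$ and $\boldsymbol{\Psi}\mathbf{T}$ are negative definite, the exponentials $e^{\boldsymbol{\Psi}\mathbf{T}(s-t)}$ and $e^{\mathbf{T}(s-t)}$ vanish as $s\to\infty$, so only the $-\mathbf{I}_m$ terms in (\ref{eq:competing}) survive. The key simplification is the identity $(\boldsymbol{\Psi}\mathbf{T})^{-1}\boldsymbol{\Psi}=\mathbf{T}^{-1}\boldsymbol{\Psi}^{-1}\boldsymbol{\Psi}=\mathbf{T}^{-1}$, which collapses both surviving terms and cancels the information matrix, giving $\mathbb{P}\{\mathbf{J}=j\vert\mathcal{I}_{i,t}\}=-\mathbf{e}_i^{\top}\mathbf{T}^{-1}\mathbf{D}\mathbf{e}_j$, free of $\mathbf{S}_m(t)$ and $\boldsymbol{\Psi}$. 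Dividing the expression (\ref{eq:competing}) for $F_{ij}$ by this scalar and splitting the numerator into its two additive pieces yields the claimed formula for $\widetilde{F}_{ij}$.

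For $\widetilde{p}_{ij}$ I would write the joint survival as a complement, $\overline{F}_{ij}(t,s)=\mathbb{P}\{\mathbf{J}=j\vert\mathcal{I}_{i,t}\}-F_{ij}(t,s)$, using the partition $\{\mathbf{J}=j\}=\{\tau\le s,\mathbf{J}=j\}\cup\{\tau>s,\mathbf{J}=j\}$. Substituting the limit from the previous step together with (\ref{eq:competing}), the $-\mathbf{I}_m$ terms cancel and one is left with $\overline{F}_{ij}(t,s)=-\mathbf{e}_i^{\top}\big(\mathbf{S}_m(t)(\boldsymbol{\Psi}\mathbf{T})^{-1}e^{\boldsymbol{\Psi}\mathbf{T}(s-t)}\boldsymbol{\Psi}+(\mathbf{I}_m-\mathbf{S}_m(t))\mathbf{T}^{-1}e^{\mathbf{T}(s-t)}\big)\mathbf{D}\mathbf{e}_j$. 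To match the stated numerator I would use that $\mathbf{T}^{-1}$ commutes with $e^{\mathbf{T}(s-t)}$ (both functions of the same matrix), and likewise $(\boldsymbol{\Psi}\mathbf{T})^{-1}e^{\boldsymbol{\Psi}\mathbf{T}(s-t)}\boldsymbol{\Psi}=e^{\boldsymbol{\Psi}\mathbf{T}(s-t)}(\boldsymbol{\Psi}\mathbf{T})^{-1}\boldsymbol{\Psi}=e^{\boldsymbol{\Psi}\mathbf{T}(s-t)}\mathbf{T}^{-1}$, so that $\mathbf{T}^{-1}$ factors uniformly to the right. Dividing by $\overline{F}_i(t,s)$ from (\ref{eq:rescor1}) then gives the asserted ratio for $\widetilde{p}_{ij}$.

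The computations are essentially bookkeeping, so there is no deep obstacle; the one place demanding care is the repeated use of the commutation identities $(\boldsymbol{\Psi}\mathbf{T})^{-1}\boldsymbol{\Psi}=\mathbf{T}^{-1}$ and $(\boldsymbol{\Psi}\mathbf{T})^{-1}e^{\boldsymbol{\Psi}\mathbf{T}(s-t)}\boldsymbol{\Psi}=e^{\boldsymbol{\Psi}\mathbf{T}(s-t)}\mathbf{T}^{-1}$. These hold because $\boldsymbol{\Psi}\mathbf{T}$ commutes with its own resolvent and exponential, but one must resist treating $\mathbf{S}_m(t)$, $\boldsymbol{\Psi}$ and $\mathbf{T}$ as mutually commuting in general; the factorization succeeds only because $\boldsymbol{\Psi}$ conjugates $(\boldsymbol{\Psi}\mathbf{T})^{-1}$ back to $\mathbf{T}^{-1}$. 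Establishing that the total absorption probability is genuinely independent of $\mathbf{S}_m(t)$ and $\boldsymbol{\Psi}$ is the conceptual crux that makes the denominator of $\widetilde{F}_{ij}$ collapse to $\mathbf{e}_i^{\top}\mathbf{T}^{-1}\mathbf{D}\mathbf{e}_j$.
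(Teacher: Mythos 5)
Your proposal is correct and follows essentially the same route as the paper: Bayes' rule to reduce $\widetilde{F}_{ij}$ and $\widetilde{p}_{ij}$ to ratios of $F_{ij}(t,s)$, the ultimate absorption probability $p_{ij}=-\mathbf{e}_i^{\top}\mathbf{T}^{-1}\mathbf{D}\mathbf{e}_j$, $\overline{F}_{ij}(t,s)=p_{ij}-F_{ij}(t,s)$, and $\overline{F}_i(t,s)$. In fact you supply details the paper leaves implicit, namely the limit argument and the identities $(\boldsymbol{\Psi}\mathbf{T})^{-1}\boldsymbol{\Psi}=\mathbf{T}^{-1}$ and $(\boldsymbol{\Psi}\mathbf{T})^{-1}e^{\boldsymbol{\Psi}\mathbf{T}(s-t)}\boldsymbol{\Psi}=e^{\boldsymbol{\Psi}\mathbf{T}(s-t)}\mathbf{T}^{-1}$ that make $p_{ij}$ free of $\mathbf{S}_m(t)$ and $\boldsymbol{\Psi}$.
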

\begin{proof}
By Bayes' rule for conditional probability, it follows from (\ref{eq:tildeF}) that
\begin{equation*}
\widetilde{F}_{ij}(t,s)=\frac{F_{ij}(t,s)}{p_{ij}} \quad  \mathrm{and} \quad \widetilde{p}_{ij}(t,s)=\frac{\overline{F}_{ij}(t,s)}{\overline{F}_i(t,s)},
\end{equation*}
where $p_{ij}$ is the probability of ultimate absorption to absorbing state $\delta_j$ given by
\begin{equation}\label{eq:ultimate}
p_{ij}=\mathbb{P}\{\mathbf{J}=j\vert \mathcal{I}_{i,t}\}=-\mathbf{e}_i^{\top} \mathbf{T}^{-1} \mathbf{D} \mathbf{e}_j
\end{equation}
The results follow from (\ref{eq:ultimate}), (\ref{eq:rescor1}), (\ref{eq:competing}) and by $\overline{F}_{ij}(t,s)=p_{ij} - F_{ij}(t,s)$:
\begin{equation}\label{eq:ultimate2}
\overline{F}_{ij}(t,s)= -\mathbf{e}_i^{\top}\big(\mathbf{S}_m(t)e^{\boldsymbol{\Psi}\mathbf{T}(s-t)} + (\mathbf{I}_m-\mathbf{S}_m(t)) e^{\mathbf{T}(s-t)} \Big) \mathbf{T}^{-1}\mathbf{D}\mathbf{e}_j. \quad \exit
\end{equation}
\end{proof}

\subsection{Unconditional distributions and intensity}
In this section we present the results on unconditional distributions
\begin{equation*}
F_{j}(t)=\mathbb{P}\{\tau \leq t, \mathbf{J}=j\} \quad \mathrm{and} \quad \overline{F}_{j}(t)=\mathbb{P}\{\tau > t, \mathbf{J}=j\},
\end{equation*}
with density $f_j(t)=F_j^{\prime}(t)$. Type-specific instantaneous intensity is defined by
\begin{equation}\label{eq:alpha2}
\alpha_j(t)dt=\mathbb{P}\{\tau\leq t+dt,, \mathbf{J}=j \vert \tau >t\}.
\end{equation}
The distribution $\overline{F}(t)$ (\ref{eq:main2}) and overall intensity $\alpha(t)$ (\ref{eq:baseline}) are given by
\begin{equation}\label{eq:sumF}
\overline{F}(t)=\sum_{j=1}^p \overline{F}_j(t) \quad \mathrm{and} \quad \alpha(t)=\sum_{j=1}^p \alpha_j(t).
\end{equation}

Based on similar arguments given in Sections \ref{sec:uncond} and \ref{sec:condflamb} we have:
\begin{theo}
Given any $j\in\{1,...,p\}$, we have for all $t\geq 0$ that
\begin{equation}\label{eq:competing2}
\begin{split}
F_{j}(t) =& \boldsymbol{\pi}^{\top}\Big(\mathbf{S}_m (\boldsymbol{\Psi}\mathbf{T})^{-1}\big(e^{\boldsymbol{\Psi}\mathbf{T} t}
-\mathbf{I}_m\big)\boldsymbol{\Psi} + \big(\mathbf{I}_m-\mathbf{S}_m\big) T^{-1}\big(e^{\mathbf{T}t}-\mathbf{I}_m\big)\Big)\mathbf{D} \mathbf{e}_j \\
&\hspace{1cm} f_{j}(t) = \boldsymbol{\pi}^{\top}\Big(\mathbf{S}_m e^{\boldsymbol{\Psi}\mathbf{T}t}\boldsymbol{\Psi}
+ \big(\mathbf{I}_m-\mathbf{S}_m \big) e^{\mathbf{T}t} \Big)\mathbf{D} \mathbf{e}_j.
\end{split}
\end{equation}
\end{theo}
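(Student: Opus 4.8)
The plan is to obtain the unconditional law by averaging the conditional result of Theorem \ref{theo:main3} over the initial state, exactly as Section \ref{sec:uncond} passes from (\ref{eq:main}) to (\ref{eq:main2}). First I would specialize the conditional distribution (\ref{eq:competing}) to $t=0$: since then $\mathcal{I}_{i,0}=\{X_0=i\}$, the information matrix collapses to $\mathbf{S}_m(0)=\mathbf{S}_m$ by (\ref{eq:Snt}), and renaming the running time $s$ as $t$ gives $F_{ij}(0,t)=\mathbf{e}_i^{\top}\big(\mathbf{S}_m(\boldsymbol{\Psi}\mathbf{T})^{-1}(e^{\boldsymbol{\Psi}\mathbf{T}t}-\mathbf{I}_m)\boldsymbol{\Psi}+(\mathbf{I}_m-\mathbf{S}_m)\mathbf{T}^{-1}(e^{\mathbf{T}t}-\mathbf{I}_m)\big)\mathbf{D}\mathbf{e}_j$, whose bracketed matrix is already in the target form.

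Next I would condition on the initial position by the law of total probability, $F_j(t)=\mathbb{P}\{\tau\le t,\mathbf{J}=j\}=\sum_{i\in E}\mathbb{P}\{X_0=i\}\,F_{ij}(0,t)$, using the standing assumption of zero initial mass on the absorbing states so that only $i\in E$ contribute. Substituting $\boldsymbol{\pi}^{\top}=\sum_{i\in E}\mathbb{P}\{X_0=i\}\,\mathbf{e}_i^{\top}$ and pushing the sum through the expression (which is linear in $\mathbf{e}_i^{\top}$) replaces $\mathbf{e}_i^{\top}$ by $\boldsymbol{\pi}^{\top}$, yielding the first line of (\ref{eq:competing2}). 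Equivalently, and more compactly, I would note that the events $\{\tau\le t,\mathbf{J}=j\}$ and $\{X_t=\delta_j\}$ coincide, so $F_j(t)=\boldsymbol{\pi}^{\top}\mathbf{P}(0,t)\mathbf{e}_{m+j}$, and then read off the relevant column of the $\mathbf{F}_{12}$ block supplied by Lemma \ref{lem:composition}, with the exit vector $\boldsymbol{\delta}$ replaced throughout by the $m\times p$ exit matrix $\mathbf{D}$; this mirrors the $(\boldsymbol{\pi}\;\;\pi_{m+1})^{\top}\mathbf{P}(0,t)$ device used in the single-absorbing-state corollary.

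Finally, for the density I would differentiate $F_j$ in $t$ termwise, using $\tfrac{d}{dt}\,(\boldsymbol{\Psi}\mathbf{T})^{-1}\big(e^{\boldsymbol{\Psi}\mathbf{T}t}-\mathbf{I}_m\big)\boldsymbol{\Psi}=e^{\boldsymbol{\Psi}\mathbf{T}t}\boldsymbol{\Psi}$ and $\tfrac{d}{dt}\,\mathbf{T}^{-1}\big(e^{\mathbf{T}t}-\mathbf{I}_m\big)=e^{\mathbf{T}t}$, which cancels the inverses and the $-\mathbf{I}_m$ terms and produces the second line of (\ref{eq:competing2}); alternatively $f_j$ follows directly from the conditional density $f_{ij}(0,t)$ in (\ref{eq:competing}) by the same summation over $i\in E$. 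I expect no genuine obstacle, since the entire analytic content is inherited from the conditional case already proved; the only points requiring care are the bookkeeping of the initial distribution (justifying that states $m+1,\dots,m+p$ carry no initial mass, so the sum runs over $E$ only) and keeping the indexing of $\mathbf{D}$ and $\mathbf{e}_j\in\mathbb{R}^p$ consistent with $\mathbf{e}_{m+j}$ in the full $(m+p)$-dimensional state space.
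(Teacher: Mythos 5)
Your proposal is correct and follows essentially the same route as the paper, which proves this theorem by exactly the argument you give: averaging the conditional competing-risks result of Theorem \ref{theo:main3} (equivalently, using $\{\tau\le t,\mathbf{J}=j\}=\{X_t=\delta_j\}$ together with $\boldsymbol{\pi}^{\top}\mathbf{P}(0,t)\mathbf{e}_{m+j}$ and the block form of Lemma \ref{lem:composition} with $\boldsymbol{\delta}$ replaced by $\mathbf{D}$) over the initial distribution, as in the passage from (\ref{eq:main}) to (\ref{eq:main2}). The paper leaves these details implicit ("based on similar arguments given in Sections \ref{sec:uncond} and \ref{sec:condflamb}"), and your write-up, including the termwise differentiation for $f_j$, fills them in correctly.
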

Applying the same arguments of (\ref{eq:ultimate})-(\ref{eq:ultimate2}) to the result above, we obtain
\begin{equation}\label{eq:relation3}
\overline{F}_{j}(t)=-\boldsymbol{\pi}^{\top}\Big(\mathbf{S}_m e^{\boldsymbol{\Psi}\mathbf{T}t} + (\mathbf{I}_m-\mathbf{S}_m ) e^{\mathbf{T}t} \Big) \mathbf{T}^{-1}\mathbf{D}\mathbf{e}_j.
\end{equation}

Following (\ref{eq:alpha2}) and (\ref{eq:sumF}), the instantaneous intensity $\alpha(t)$ is given by
\begin{cor}
For any $j\in\{m+1,...,m+p\}$, we have for all $t\geq 0$
\begin{equation}\label{eq:intensities3}
\begin{split}
\alpha_{j}(t)=&\frac{ \boldsymbol{\pi}^{\top}\big(\mathbf{S}_m e^{\boldsymbol{\Psi}\mathbf{T}t}\boldsymbol{\Psi}
+ \big(\mathbf{I}_m-\mathbf{S}_m\big) e^{\mathbf{T}t} \big)\mathbf{D} \mathbf{e}_j   }{ \boldsymbol{\pi}^{\top}\big(\mathbf{S}_m e^{\boldsymbol{\Psi}\mathbf{T}t}
+ \big(\mathbf{I}_m-\mathbf{S}_m \big) e^{\mathbf{T}t} \big) \mathbf{1}_m    } 
\end{split}
\end{equation}
\end{cor}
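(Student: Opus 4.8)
The plan is to reduce the statement to the identity $\alpha_j(t)=f_j(t)/\overline{F}(t)$ and then substitute the already-derived closed forms. This mirrors the proof of Theorem~\ref{theo:intensity}, specialized to the unconditional setting ($t=0$ in the conditioning) and to a single absorbing direction $\delta_j$.

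First I would start from the definition (\ref{eq:alpha2}) and rewrite the joint infinitesimal probability through the definition of conditional probability, so that $\alpha_j(t)\,dt = \mathbb{P}\{t<\tau\le t+dt,\mathbf{J}=j\}/\mathbb{P}\{\tau>t\}$. The denominator is exactly $\overline{F}(t)$ in (\ref{eq:main2}). For the numerator I would identify $\mathbb{P}\{t<\tau\le t+dt,\mathbf{J}=j\}=f_j(t)\,dt$, where $f_j$ is the type-specific density of (\ref{eq:competing2}); this is the same differencing argument used in (\ref{eq:derivation}), now applied to the sub-distribution $F_j$ rather than to $\overline{F}_i$. Dividing gives $\alpha_j(t)=f_j(t)/\overline{F}(t)$, which is the unconditional, $s=t$ analogue of the relation (\ref{eq:intensities}).

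Second I would simply insert $f_j(t)$ from (\ref{eq:competing2}) into the numerator and $\overline{F}(t)$ from (\ref{eq:main2}) into the denominator; this produces (\ref{eq:intensities3}) verbatim. As an internal consistency check, summing over $j=1,\dots,p$ and using $\sum_{j=1}^{p}\mathbf{D}\mathbf{e}_j=\mathbf{D}\mathbf{1}_p$ together with the constraint (\ref{eq:const2}) recovers the overall intensity $\alpha(t)$ of (\ref{eq:baseline}) through (\ref{eq:sumF}).

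There is no genuinely hard step here, since the claim is the unconditional specialization of Corollary~\ref{cor:cor1}. The only point needing care is confirming that the numerator is the density of the sub-distribution $F_j$, i.e.\ that $\tfrac{d}{dt}F_j(t)$ reproduces $f_j(t)$ in (\ref{eq:competing2}). This is immediate from $\tfrac{d}{dt}e^{\mathbf{A}t}=\mathbf{A}e^{\mathbf{A}t}$: differentiating the first block, the factor $(\boldsymbol{\Psi}\mathbf{T})^{-1}$ cancels the $\boldsymbol{\Psi}\mathbf{T}$ produced by the derivative of $e^{\boldsymbol{\Psi}\mathbf{T}t}$, the constant $-\mathbf{I}_m$ terms vanish, and one is left with $\mathbf{S}_m e^{\boldsymbol{\Psi}\mathbf{T}t}\boldsymbol{\Psi}+(\mathbf{I}_m-\mathbf{S}_m)e^{\mathbf{T}t}$ acting on $\mathbf{D}\mathbf{e}_j$, which is exactly $f_j(t)$.
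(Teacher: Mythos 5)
Your proposal is correct and follows essentially the same route as the paper, which states this corollary as an immediate consequence of the definition (\ref{eq:alpha2}) together with the closed forms (\ref{eq:competing2}) and (\ref{eq:main2}), i.e.\ $\alpha_j(t)=f_j(t)/\overline{F}(t)$ via the same differencing argument as in (\ref{eq:derivation}). Your added consistency check (summing over $j$ and using (\ref{eq:const2}) to recover $\alpha(t)$ of (\ref{eq:baseline})) and the explicit verification that $F_j'(t)=f_j(t)$ are fine details the paper leaves implicit.
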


\begin{Rem}
By setting $\boldsymbol{\Psi}=\mathbf{I}$ the type-specific distribution $F_j(t)$ (\ref{eq:competing2}) and intensity $\alpha_j(t)$ (\ref{eq:intensities3}) are reduced to the same expression given in Lindqvist \cite{Lindqvist}:
\begin{equation*}
\begin{split}
F_j(t)=\boldsymbol{\pi}^{\top} \mathbf{T}^{-1}\big( e^{\mathbf{T}t} - \mathbf{I}_m\big) \mathbf{D} \mathbf{e}_j  \quad \mathrm{and} \quad
\alpha_j(t)= \frac{\boldsymbol{\pi}^{\top} e^{\mathbf{T} t} \mathbf{D} \mathbf{e}_j}{\boldsymbol{\pi}^{\top} e^{\mathbf{T}t} \mathbf{1}_m}.
\end{split}
\end{equation*}
\end{Rem}

Solving differential equation (\ref{eq:alpha2}) for $f_j(t)$ and $F_j(t)$ we can express similar to (\ref{eq:survival2}) and (\ref{eq:seeminglyindp}) the latter two in terms of $\alpha_j(t)$ and overall intensity $\alpha(t)$ as
\begin{equation*}
\begin{split}
f_{j}(t)=&\alpha_{j}(t)\exp\Big( -\int_0^t \sum_{j=1}^p \alpha_{j}(u) du\Big)\\
F_{j}(t)=& \int_0^t \alpha_{j}(u)\exp\Big( -\int_0^u \sum_{j=1}^p \alpha_{j}(v) dv\Big) du.
\end{split}
\end{equation*}

The distribution of age at the time $X$ gets absorbed by type $j$ and the probability of ultimate absorption by type $j$ are defined following \cite{Crowder} by
\begin{equation*}
\widetilde{F}_{j}(t)=\mathbb{P}\{\tau \leq t \vert \mathbf{J}=j\}
\quad \mathrm{and} \quad \widetilde{p}_{j}(t)=\mathbb{P}\{\mathbf{J}=j \vert \tau >t\}.
\end{equation*}
\begin{prop}
For any $j=m+1,...,m+p$, we have for all $t\geq 0$,
\begin{equation*}
\begin{split}
\widetilde{F}_{j}(t)=& -\frac{  \boldsymbol{\pi}^{\top}\big(\mathbf{S}_m (\boldsymbol{\Psi}\mathbf{T})^{-1}\big(e^{\boldsymbol{\Psi}\mathbf{T}t}-\mathbf{I}_m\big)\boldsymbol{\Psi}
+ \big(\mathbf{I}_m-\mathbf{S}_m\big) T^{-1}\big(e^{\mathbf{T}t}-\mathbf{I}_m\big)\big)\mathbf{D} \mathbf{e}_j  }
{  \boldsymbol{\pi}^{\top} \mathbf{T}^{-1} \mathbf{D} \mathbf{e}_j  } \\
&\hspace{0.75cm}\widetilde{p}_{j}(t)= -\frac{ \boldsymbol{\pi}^{\top}\big(\mathbf{S}_m e^{\boldsymbol{\Psi}\mathbf{T}t} + (\mathbf{I}_m-\mathbf{S}_m) e^{\mathbf{T}t} \Big) \mathbf{T}^{-1}\mathbf{D}\mathbf{e}_j   }{ \boldsymbol{\pi}^{\top}\Big(\mathbf{S}_m e^{\boldsymbol{\Psi}\mathbf{T}t}
+ \big(\mathbf{I}_m-\mathbf{S}_m \big) e^{\mathbf{T}t} \big) \mathbf{1}_m    }.
\end{split}
\end{equation*}
\end{prop}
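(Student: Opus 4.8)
The plan is to obtain both quantities from Bayes' rule, exactly as in the proof of the conditional statement at (\ref{eq:ultimate})--(\ref{eq:ultimate2}), and then to substitute the unconditional distributions already derived. Writing the defining conditional probabilities as quotients gives
$$\widetilde{F}_j(t)=\frac{F_j(t)}{p_j} \quad \mathrm{and} \quad \widetilde{p}_j(t)=\frac{\overline{F}_j(t)}{\overline{F}(t)},$$
where $p_j=\mathbb{P}\{\mathbf{J}=j\}$ is the probability of ultimate absorption into state $\delta_j$. All the building blocks are already in hand: $F_j(t)$ is given by (\ref{eq:competing2}), $\overline{F}_j(t)$ by (\ref{eq:relation3}), and $\overline{F}(t)$ by (\ref{eq:main2}). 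Hence everything reduces to identifying $p_j$.

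First I would compute $p_j$ as the $t\rightarrow\infty$ limit of $F_j(t)$ in (\ref{eq:competing2}). Because both $\mathbf{T}$ and $\boldsymbol{\Psi}\mathbf{T}$ are negative definite, the matrix exponentials $e^{\boldsymbol{\Psi}\mathbf{T}t}$ and $e^{\mathbf{T}t}$ vanish as $t\rightarrow\infty$, so only the $-\mathbf{I}_m$ terms survive. The key algebraic simplification is $(\boldsymbol{\Psi}\mathbf{T})^{-1}\boldsymbol{\Psi}=\mathbf{T}^{-1}$, which collapses the $\mathbf{S}_m$-regime contribution onto the same kernel $\mathbf{T}^{-1}$ as the $(\mathbf{I}_m-\mathbf{S}_m)$-regime contribution; since $\mathbf{S}_m+(\mathbf{I}_m-\mathbf{S}_m)=\mathbf{I}_m$, the speed matrix $\boldsymbol{\Psi}$ disappears entirely and one obtains $p_j=-\boldsymbol{\pi}^{\top}\mathbf{T}^{-1}\mathbf{D}\mathbf{e}_j$, the unconditional analogue of (\ref{eq:ultimate}).

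With $p_j$ in hand, the first formula follows by dividing $F_j(t)$ from (\ref{eq:competing2}) by $p_j$. For the second, I would use $\overline{F}_j(t)=p_j-F_j(t)$; expanding with $(\boldsymbol{\Psi}\mathbf{T})^{-1}\boldsymbol{\Psi}=\mathbf{T}^{-1}$ together with the commutation $e^{\boldsymbol{\Psi}\mathbf{T}t}(\boldsymbol{\Psi}\mathbf{T})^{-1}=(\boldsymbol{\Psi}\mathbf{T})^{-1}e^{\boldsymbol{\Psi}\mathbf{T}t}$, the constant $\mathbf{T}^{-1}$ pieces cancel and one recovers precisely (\ref{eq:relation3}); dividing by $\overline{F}(t)$ from (\ref{eq:main2}) then yields the stated expression for $\widetilde{p}_j(t)$. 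The one place demanding care, and the main obstacle, is verifying that $\boldsymbol{\Psi}$ genuinely cancels in $p_j$, i.e.\ that ultimate absorption probabilities are invariant under the change of speed; this rests entirely on the identity $(\boldsymbol{\Psi}\mathbf{T})^{-1}\boldsymbol{\Psi}=\mathbf{T}^{-1}$ applying simultaneously to both mixture regimes, after which the remaining manipulations are routine substitutions.
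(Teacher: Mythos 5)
Your proof is correct and follows essentially the same route as the paper: Bayes' rule reduces both quantities to $F_j(t)/p_j$ and $\overline{F}_j(t)/\overline{F}(t)$, with the ultimate absorption probability $p_j=-\boldsymbol{\pi}^{\top}\mathbf{T}^{-1}\mathbf{D}\mathbf{e}_j$ and the complementary relation $\overline{F}_j(t)=p_j-F_j(t)$, exactly mirroring the paper's proof of the conditional analogue in (\ref{eq:ultimate})--(\ref{eq:ultimate2}), which the paper explicitly invokes for the unconditional case. The only difference is that you derive $p_j$ explicitly as the $t\rightarrow\infty$ limit of $F_j(t)$ using $(\boldsymbol{\Psi}\mathbf{T})^{-1}\boldsymbol{\Psi}=\mathbf{T}^{-1}$, a detail the paper states without proof; this is a valid and welcome filling-in rather than a departure.
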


\subsection{Residual lifetime under competing risks}

Following \cite{Crowder}, the residual lifetime for competing risk is defined similar to (\ref{eq:residuedef}):
\begin{equation*}
R_{ij}(t)=\int_t^{\infty} \overline{F}_{ij}(t,u) du.
\end{equation*}
Applying the sub-distribution (\ref{eq:ultimate2}) to the above, we have the following.
\begin{prop}
For any $i=1,..,m$ and $j=m+1,...,m+p$, we have $\forall t\geq 0$,
\begin{equation}
R_{ij}(t)=\mathbf{e}_i^{\top} \big(\mathbf{S}_m(t) (\boldsymbol{\Psi}\mathbf{T})^{-1} + (\mathbf{I}_m-\mathbf{S}_m(t))\mathbf{T}^{-1}\big) \mathbf{T}^{-1}\mathbf{D}\mathbf{e}_j.
\end{equation}
\end{prop}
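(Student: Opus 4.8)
The plan is to substitute the sub-survival function (\ref{eq:ultimate2}) directly into the defining integral $R_{ij}(t)=\int_t^{\infty}\overline{F}_{ij}(t,u)\,du$ and reduce everything to the two elementary matrix integrals that were already evaluated in the proof of Proposition \ref{prop:CTE}. First I would recall from (\ref{eq:ultimate2}) that
\[
\overline{F}_{ij}(t,u)=-\mathbf{e}_i^{\top}\big(\mathbf{S}_m(t)e^{\boldsymbol{\Psi}\mathbf{T}(u-t)}+(\mathbf{I}_m-\mathbf{S}_m(t))e^{\mathbf{T}(u-t)}\big)\mathbf{T}^{-1}\mathbf{D}\mathbf{e}_j,
\]
and observe that the only $u$-dependence sits in the two matrix exponentials, while $\mathbf{e}_i^{\top}$, $\mathbf{S}_m(t)$, $\mathbf{I}_m-\mathbf{S}_m(t)$ and $\mathbf{T}^{-1}\mathbf{D}\mathbf{e}_j$ are all constant in $u$.

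By linearity I would then pull the integral through the constant matrix factors to obtain
\[
R_{ij}(t)=-\mathbf{e}_i^{\top}\Big(\mathbf{S}_m(t)\Big(\int_t^{\infty} e^{\boldsymbol{\Psi}\mathbf{T}(u-t)}\,du\Big)+(\mathbf{I}_m-\mathbf{S}_m(t))\Big(\int_t^{\infty} e^{\mathbf{T}(u-t)}\,du\Big)\Big)\mathbf{T}^{-1}\mathbf{D}\mathbf{e}_j.
\]
Next I would invoke the fact recorded in the proof of Proposition \ref{prop:CTE} that for a negative definite matrix $\mathbf{A}$ one has $\int_t^{\infty} e^{\mathbf{A}(u-t)}\,du=-\mathbf{A}^{-1}$. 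Since $\boldsymbol{\Psi}\mathbf{T}$ and $\mathbf{T}$ are both negative definite (their eigenvalues having strictly negative real part, which is exactly what guarantees convergence of the improper integrals), this yields $\int_t^{\infty} e^{\boldsymbol{\Psi}\mathbf{T}(u-t)}\,du=-(\boldsymbol{\Psi}\mathbf{T})^{-1}$ and $\int_t^{\infty} e^{\mathbf{T}(u-t)}\,du=-\mathbf{T}^{-1}$. Substituting these and cancelling the double sign gives precisely the claimed formula, with the extra factor $\mathbf{T}^{-1}$ on the left of $\mathbf{T}^{-1}\mathbf{D}\mathbf{e}_j$ arising purely from the integration.

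This computation presents no genuine obstacle; the only point requiring care is the justification of interchanging the integral with the finite-dimensional (hence continuous) linear matrix operations together with the convergence of the improper integral. Both follow from the exponential decay of $e^{\boldsymbol{\Psi}\mathbf{T}(u-t)}$ and $e^{\mathbf{T}(u-t)}$ guaranteed by negative definiteness, so the dominated-convergence-type argument is routine. I would state the negative-definiteness of $\boldsymbol{\Psi}\mathbf{T}$ and $\mathbf{T}$ explicitly as the hypothesis that makes both steps legitimate, since it is the single structural fact on which the whole derivation rests.
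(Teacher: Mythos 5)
Your proposal is correct and takes essentially the same route as the paper: the paper's own (one-line) proof consists precisely of substituting the sub-survival function (\ref{eq:ultimate2}) into $R_{ij}(t)=\int_t^{\infty}\overline{F}_{ij}(t,u)\,du$ and evaluating the two resulting matrix integrals via $\int_t^{\infty}e^{\mathbf{A}(u-t)}\,du=-\mathbf{A}^{-1}$, the identity already used in the proof of Proposition \ref{prop:CTE}. Your added remarks on convergence and negative definiteness are consistent with the paper's standing assumptions and require no changes.
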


In the section below, we discuss some numerical examples of the main results of Sections \ref{section:Main} and \ref{sec:main3}. The examples are given for multi absorbing states model.

\section{Numerical examples}\label{section:examples}

We consider marriage and divorce problem under the mixture model (\ref{eq:mixture}). The marriage breakup is due to divorce in one-absorbing state Markov mixture model and by being widowed in competing risks model. We are interested in the lifetime distribution of marriage until a breakup due to divorce is observed. In one-absorbing state model, the transient states is $E=\{N, M, S, W\}$ whilst the absorbing states $\Delta=\{D\}$. We denote by $N$ for never married, $M$ for married, $W$ for widowed, $S$ for separated, and $D$ for divorced. For simplicity, we label $N=1$, $M=2$, $S=3$, $W=4$ and $D=5$. The transition rates from never married to married is given by $q_{12}$, married to widowed by $q_{24}$ and widowed to married by $q_{42}$. We assume that married couples may get separated after separation which occurs at rate of $q_{23}$, and eventually getting divorced or being widowed at the rate of $q_{35}$ or $q_{34}$, respectively. We allow the possibility that divorced may take place at rate $q_{25}$ straight from married. In the competing risks, we set $q_{42}=0$ and move the widowed state $W$ to the absorbing states, i.e., $\Delta=\{W,D\}$.


%
\begin{figure}[ht]
\begin{center}
\includegraphics[width=1\textwidth]{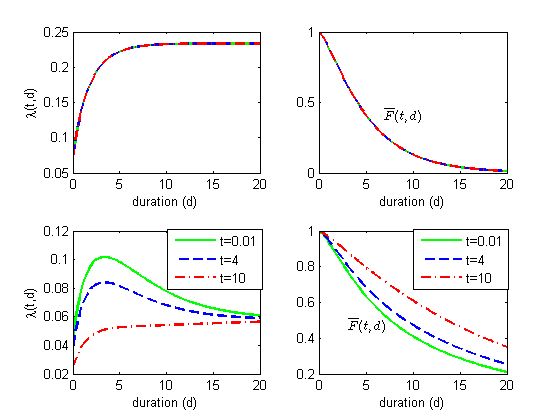}
\caption{Forward divorced intensity $\lambda_i(t,d)$ (\ref{eq:intensity2}) and survival distribution $\overline{F}_i(t,d)$ (\ref{eq:main}) as functions of marriage duration $d=s-t$ for a given age $t>0$. The two graphs on top correspond to the case where the population is homogeneous, i.e., $\boldsymbol{\Psi}=\mathbf{I}$, whereas the other two below for the case where there is heterogeneity in population; two different groups moving at different speed, i.e., $\boldsymbol{\Psi}\neq\mathbf{I}$.}
\label{fig:divorce}
\end{center}
\end{figure}

The composition of individuals in each state is denoted by initial vector $\boldsymbol{\pi}$.
The corresponding intensity matrix $\mathbf{Q}$ is given by (\ref{eq:ExMatT2}). Heterogeneity of individuals are described by the speed at which they move along the states. We assume that there are two groups of individuals: one moving with intensity matrix $\mathbf{T}$, whilst the other with $\boldsymbol{\Psi}\mathbf{T}$, with $\boldsymbol{\Psi}$ given by (\ref{eq:Lambda}). At initial time, the composition of these groups in each state is given by diagonal matrix $\mathbf{S}_m$ (\ref{eq:Snt}).
\begin{equation}\label{eq:ExMatT2}
\mathbf{Q} = \left(\begin{array}{ccccc}
  -q_{12} & q_{12} & 0  & 0 & 0 \\
  0 & -(q_{23}+q_{24}+q_{25}) & q_{23} & q_{24} & q_{25} \\
  0 & 0 & -(q_{34}+q_{35}) & q_{34} & q_{35} \\
  0 &  q_{42} &   0 & -q_{42} & 0 \\
  0 & 0 & 0 & 0 & 0
\end{array}\right).
\end{equation}
\begin{figure}[ht]
\begin{center}
\includegraphics[width=1.0\textwidth]{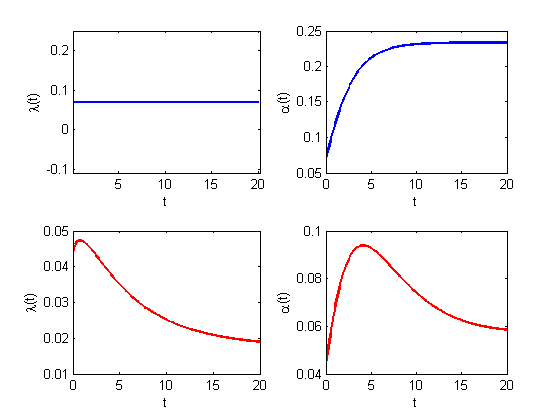}
\caption{Intensity rate $\lambda_i(t)$ (\ref{eq:intensity3}) and baseline function $\alpha(t)$ (\ref{eq:baseline}) as a function of the marriage age $t$. The two graphs on top correspond to the case where the population is homogeneous, i.e., $\boldsymbol{\Psi}=\mathbf{I}$, whereas the other two below for the case where there is heterogeneity; two groups moving at different speed, i.e., $\boldsymbol{\Psi}\neq\mathbf{I}$.}
\label{fig:divorce2}
\end{center}
\end{figure}
\begin{figure}[ht]
\begin{center}
\includegraphics[width=0.85\textwidth]{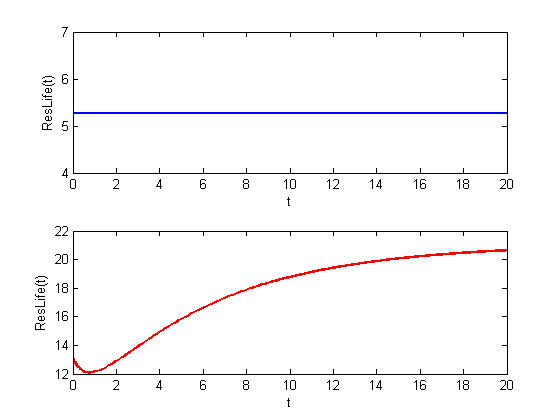}
\caption{Residual lifetime (\ref{eq:residue}) of the marriage against the marriage age $t$. The graph on top correspond to the case where the population is homogeneous, i.e., $\boldsymbol{\Psi}=\mathbf{I}$, whereas the one below is for the case where there is heterogeneity - two groups moving at different speed, i.e., $\boldsymbol{\Psi}\neq\mathbf{I}$ - and has U-bend shape.}
\label{fig:residue}
\end{center}
\end{figure}
\begin{figure}[ht]
\begin{center}
\includegraphics[width=1\textwidth]{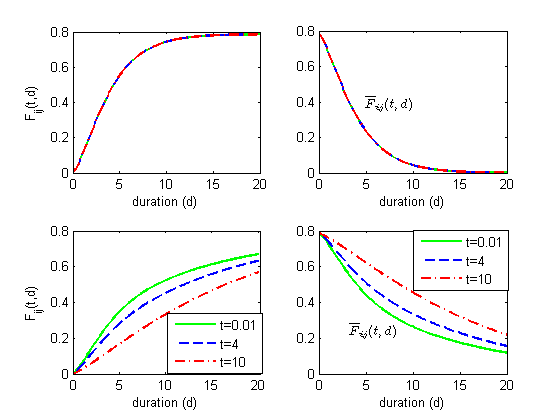}
\caption{Sub-distributions $F_{ij}(t,s)$ and $\overline{F}_{ij}(t,s)$ under competing risks. }
\label{fig:competing}
\end{center}
\end{figure}

For the purpose of simulation, we set the following values for the mixture model. We set the elements of intensity matrix $\mathbf{T}$ by $q_{12}=0.95$, $q_{24}=0.05$, $q_{34}=0.1$, $q_{23}=0.25$, $q_{25}=0.07$, $q_{42}=0.85$, and $q_{35}=0.5.$ Heterogeneity of the population is described by matrix $\mathbf{S}_m=\textrm{diag}(0.5,0.5,0.5,0.5,0.5)$, i.e., there are $50\%$ of the population in each state moving with intensity matrix $\mathbf{T}$, while the other $50\%$ moving with rate $\boldsymbol{\Psi}\mathbf{T}$ with $\boldsymbol{\Psi}=\textrm{diag}(0.25,0.25,0.25)$. The initial distribution of population is given by $\boldsymbol{\pi}=(0.5,0.3,0.1)^{\top}$; $50\%$ never married, $30\%$ married, $10\%$ widowed and the rest is being in separated status. Based on these values, we compute and compare intensity rate $\lambda_i(t,s)$ (\ref{eq:intensity2}), $\lambda_i(t)$ (\ref{eq:intensity3}) and baseline function $\alpha(t)$ (\ref{eq:baseline}) of getting divorced for individuals who have been married ($\mathbf{e}_i=(0,1,0,0)^{\top}$) for $t$ period of time. Heterogeneity is removed by setting $\boldsymbol{\Psi}$ to be identity matrix. We assume that we have only the current status of marriage (limited information on $\mathcal{I}_{i,t}$). The results are presented in Figures \ref{fig:divorce} and \ref{fig:divorce2}. We observe in case there is no heterogeneity in the population that the divorced intensity is always increasing in duration, regardless of the age of marriage $t$. Closure look at Figure \ref{fig:divorce2} suggests that the forward divorced intensity $\lambda(t,d)$ (\ref{eq:intensity2}), with $d=s-t$, is the same as the baseline function $\alpha(t)$ (\ref{eq:baseline}). Moreover, the instantaneous divorced intensity $\lambda(t)$ defined in (\ref{eq:intensity3}) is not affected by the marriage age $t$, i.e., divorced intensity $\lambda(t)$ is just a constant.

In contrast to this observation for heterogeneous population, we notice from Figure \ref{fig:divorce} that the divorced intensity $\lambda(t,d)$ (\ref{eq:intensity2}) changes its shapes w.r.t the marriage age $t$\footnote{It exhibits similar type of behavior to Fig 9 in \cite{Aalen1995} for different initial states.}. As we see from Figure \ref{fig:divorce} that those who have been married for $t=10$ period of time have the lowest divorced intensity and those who just got married for $t=0.01$ period of time have the highest divorced intensity of all. In between is the divorced intensity for those who have been married for $t=4.$

From these curves we notice that the newly married and have been married for few periods face increasing divorced intensity for the next 3.6 time period of their marriage. Especially at the beginning, the increment is rather steep. After approximately 3.6 period of time, the marriage is getting stable for which the divorced rate decreases to the long-term rate. Even though the divorced rate is relative low for old couples, it tends to increase at faster rate at the beginning and then gradually reaching up the long-term rate $0.0644$ (\ref{eq:limit2}). The instantaneous divorced intensity $\lambda(t)$ (\ref{eq:intensity3}) confirms the above phenomenon that the divorced intensity tends to increase for newly married, and then the relationship gets more stable as the age of marriage $t$ increases as evidenced by the decreasing rate of divorce. The baseline curve $\alpha(t)$ in Figure \ref{fig:divorce2} also exhibits similar observations.

In terms of survival probability, we see from Figure \ref{fig:divorce} that in both models (with or without heterogeneity) the marriage survival probability decreases as the marriage duration time $d$ increases. This is due to the fact that both intensity matrices $\mathbf{T}$ and $\boldsymbol{\Psi}\mathbf{T}$ are negative definite. The only main differences we observe is that the survival probability goes to zero faster under homogeneity model than under heterogeneity as seen by the strict increasing intensity for the existing model, see Figure \ref{fig:divorce}. This is due to $\psi_i<1$ for all $i=1,...,3$. As a result, the model implies that every marriage gets divorced in longer run, something which does not necessarily occur in practice. In contrast to this observation, we see under the mixture model (with heterogeneity) that the survival probability does not decay to zero at the end of observation period; marriage would last in longer run with a positive probability. Moreover, we can investigate the effect of the marriage age $t$ to the marriage survival probability. We observe that for the same marriage duration, the survival probability increases by the age, i.e., the older the age of marriage, the higher the probability that the marriage would survive.

Even though the above observations are based on numerical study on married-and-divorced model, but the results from the proposed model are able to provide sound explanations to the data depicted in Figure 5.3 on page 222 of \cite{Aalen2008}.

Figure \ref{fig:residue} depicts expected residual lifetime of marriage. We observe that the residual lifetime under homogeneity (the existing model) is not affected by the age of marriage $t$, i.e., it is just flat. In contrast to this observation, we notice that the residual lifetime exhibits a U-bend shape curve under heterogeneity (the new model), i.e., the residual lifetime is decreasing for newly married, and then increasing as the age of marriage gets older. These numerical findings seem to be consistent with U-bend shape of lifetime discussed in the 2010 \textit{The Economist} article. Thus, once again, we see that the proposed phase-type distribution and its forward intensity offer significant improvements over the existing model.


Figure \ref{fig:competing} displays the sub-distribution functions $F_{ij}(t,s)$ and $\overline{F}_{ij}(t,s)$ (\ref{eq:competing}) under competing risks in which case $q_{42}$ was set to zero. We again observe that under the Markov mixture model we can study the effect of heterogeneity to divorce probability. The figure exhibits higher conditional forward probability of getting divorce for newly marriage and lower for older marriage. This observation is in line with that of for conditional forward divorce intensity given in Figure \ref{fig:divorce}.

\section{Conclusions}

We have presented in this paper a generalization to the phase-type distribution introduced in Neuts \cite{Neuts1975} and \cite{Neuts1981}. The generalization is made to allow the inclusion of available past information of underlying process and for modeling heterogeneity. The new distribution and its dependence on the past information are given in an explicit form. It is constructed by extending the mixture model \cite{Frydman2005} and \cite{Frydman2008} into a mixture of two continuous-time finite-state absorbing Markov chains moving on the same state space at different speeds. The distribution has dense and closure properties over convex mixtures and finite convolutions.

Its availability in closed form would give an advantage of getting some analytically tractable results in applications. We have also proposed forward intensity associated with the new distribution. This intensity is responsible for determining rate of occurrence of future events based on available past information of the mixture process. We extended the phase-type model for multi-absorbing states which can be used to deal with competing risks in survival analysis. Numerical study on married-and-divorced problem was performed to compare the performance of the new distribution and its intensity against their existing counterparts. Following the results, we have seen that the proposed models provide significant improvements over the existing models. In particular, the forward intensity was able to provide sound explanations to divorced rate data found on page 222 of Aalen et al. \cite{Aalen2008}, and the residual lifetime reveals the common U-bend shape of lifetime found, e.g., in the 2010 \textit{The Economist} article.

Given its explicit form and the ability to capture path dependence and heterogeneity, we believe that the proposed phase-type distribution and forward intensity should be able to offer appealing features for wide range of applications which the existing distribution and intensity have been widely applied to.

\section*{Acknowledgement}
B.A. Surya acknowledges the support and hospitality provided by the
Center for Applied Probability of Columbia University during his academic visit in May 2014 where he saw the work of Frydman and Schuermann \cite{Frydman2008} and started to work on the problem. He thanks Professor Karl Sigman for the invitation. This research is financially supported by Victoria University PBRF Research Grants \# 212885 and \# 214168 for which the author would like to acknowledge.

\end{document}